\documentclass[twoside,11pt]{article}

%
\usepackage{jmlr2e}




\ShortHeadings{Passport Option}{J. Teichmann, H. Wutte}
\firstpageno{1}

\begin{document}

\title{Machine Learning-powered Pricing of the Multidimensional Passport Option}

\author{\name Josef Teichmann \email jteichma@math.ethz.ch \\
       \addr Department of Mathematics\\
       ETH Zurich\\
       Zurich, Switzerland
              \AND
       \name Hanna Wutte \email hanna.wutte@math.ethz.ch \\
       \addr Department of Mathematics\\
       ETH Zurich\\
       Zurich, Switzerland
       }

\editor{}

\maketitle

\begin{abstract}
Introduced in the late 90s, the \emph{passport option} gives its holder the right to trade in a market and receive any positive gain in the resulting traded account at maturity. Pricing the option amounts to solving a stochastic control problem that for $d>1$ risky assets remains an open problem. Even in a correlated Black-Scholes (BS) market with $d=2$ risky assets, no optimal trading strategy has been derived in closed form.
In this paper, we derive a discrete-time solution for multi-dimensional BS markets with uncorrelated assets.
Moreover, inspired by the success of deep reinforcement learning in, e.g., board games, we propose two machine learning-powered approaches to pricing general options on a portfolio value in general markets.
These approaches prove to be successful for pricing the passport option in one-dimensional and multi-dimensional uncorrelated BS markets.
\end{abstract}


\section{Introduction}

A \emph{passport option} gives its holder the right to freely trade in a certain market within pre-specified constraints, over a predetermined time interval. The agent may take long and short, but bounded positions and receives at maturity the maximum of the account balance accrued and some floor value. The holder thus gets to keep any profits of her trading exceeding a threshold, without taking the loss.

Passport options are options on a traded account that come with various specifications such as trading constraints or terminal thresholds (see \citep{Shreve00optionson} for a concise overview). Moreover, they can be seen as a generalization to American options, allowing multiple exercise. A classic example of passport options are variable annuities: during the course of paying her premiums, the insured may select a certain portfolio with which to participate in the market, and she is guaranteed to receive at retirement her portfolio value, capped from below by a guaranteed benefit (floor value).

Pricing passport options (or more general options on a traded account) means solving specific stochastic optimal control problems. For many years, these problems have only been solved in markets with a single risky asset. 

\subsection{Prior Work}
Initially introduced by Bankers trust for an FX market \citep{hyer1997passport}, passport options have been studied intensively for the one-asset Black-Scholes (BS) market.

\cite{andersen} treat the 1D case with continuous and discrete switching rights, general dividend rate, and both European and American exercise. In experiments, they consider partial differential equation (PDE) methods (Crank Nicholson finite difference) to solve the Hamilton Jacobi Bellman (HJB)-PDE connected to the pricing problem. 
\cite{Penaud1999} treat exotic passport options on one underlying. Previous numerical approaches all suggest solving HJBs with PDE methods. \cite{penaud_various} derive the HJB for the multi-asset case, and, by analyzing this PDE, give properties of the price and optimal strategy for pricing the passport option. In particular, they show that optimal strategies attain values in the extreme points of the constraint set. Moreover, they analyze discrete constraints on the trading strategy.
\cite{nagayama19986} considers 1D passport options with general constraints (i.e., also the asymmetric case) and analyzes the value functions' properties.
\cite{delbaen2002passport} give a correspondence between pricing the 1D passport option and H1 martingales. They derive a discrete-time optimal solution, that we generalize to multiple assets in this paper. Moreover, they derive the price of the passport option as limit of discrete-time optimization problems. However, they note that there is no optimal strategy for the continuous time problem if the filtration is generated by the original Brownian motion. Moreover, in Chapter 8, they treat the case of a non-zero dividend rate.
\cite{Shreve00optionson} give a nice introduction to passport options and a good overview of results for 1D markets. They specifically deal with vacation calls and puts, options on a traded account with only non-positive/non-negative investment and derive the value for these instruments. For general boundary conditions, \cite{Shreve00optionson} derive optimal strategies, optimal values and put call parities in 1D.
\cite{henderson2000local} use concepts of local times, and stochastic coupling to derive the price and optimal strategy for the 1D, symmetric passport option. Furthermore, the authors extended their results to stochastic volatility models in \cite{henderson2001passport}, showing that under certain conditions (diffusion coefficient non-decreasing in the price) the optimal strategy remains unchanged.
\cite{malloch2011passport} give an alternative proof of the 1D case and also treats a discrete-time case with the Binomial model showing that the optimal strategy coincides in the discrete and continuous time settings and that the value determined in the Binomial model converges in distribution to the one of the continuous BS setting.
\cite{kanaujiya2018numerical} summarizes numerical approaches to price the 1D passport option, and suggests a novel addition to these algorithms. All considered algorithms focus on solving the HJB-PDE corresponding to the pricing problem.

To the best of our knowledge, pricing the multidimensional passport option for $d>1$ potentially dependent risky assets remains a challenging problem. Even in a Black-Scholes market with $d=2$ risky assets, no optimal trading strategy has been derived in closed form.


However, recent advances in deep learning (DL) for dynamic decision making have given rise to several algorithms that can be applied to \emph{approximate} optimal controls via neural networks (NNs) in fairly general control problems. 
One strand of research focuses on algorithms in which these NNs are  optimized purely by backward passes on expected accrued cost. 
These algorithms have been proposed in various contexts such as, e.g., in valuation and hedging problems \cite{buehler2019deep,DeepHedgingStochasticControl,reppen2022deep}, in production planning \cite{Reppen2022} or in gas storage \cite{Bachouch_2021,Curin2021}.
Motivated by applications in the field of economics, \citet{EMControl} introduce \emph{EM-Control}, another ML algorithm to solve finite-horizon stochastic optimal control problems based on the classical expectation-maximization algorithm (\cite{EMAlgo}). 
All of these approaches are simulation-based, suitable for high dimensional problems, and can be applied to general, not necessarily Markovian stochastic state processes. To stress this particular aspect, in this paper, we term algorithms of these sorts \emph{generalized policy approximation}.

Another strand of algorithms employs the \emph{dynamic programming principle (DPP)} that holds in classic optimal control theory for Markov dynamics of the controlled process. These algorithms, also known as \emph{deep reinforcement learning (deep RL)}, can be purely strategy-based (e.g.,  \emph{NNcontPI} in \cite{Hur__2021}) or additionally involve estimates of value functions \citep{ActorCriticAlgorithms,surveyA2C}. 
In settings with continuous Markov states, algorithms incorporating additional estimates of value functions have proven to be most successful \citep{VarianceReductionA2C}.

Despite the success of RL and generalized policy approximation methods in various control tasks, they have not been tested for pricing options on traded accounts. This paper seeks to fill this gap by applying RL as well as generalized policy approximation algorithms to price multivariate passport options.

\subsection{Our Contributions}
In the present paper, we make the following contributions.

\begin{itemize}
    \item  We derive a closed-form solution for the optimal trading strategy to price the multi-dimensional passport option for independent assets in discrete time.
    \item We introduce a generalized policy approximation algorithm for pricing multi-dimensional passport options for general dependence structures in the market. Moreover, we contrast this algorithm to a standard RL approach in simulated settings. Our source code is available on GitHub: \url{https://github.com/HannaSW/ML4PassportOptions}.

    \item We discuss the potential pitfalls of deep hedging in a classification context, and the difficulty of small step sizes in RL approximations to continuous time solutions. 
    \item We show that the ML approaches successfully recover the optimal strategies in both the well-known one asset, and the independent multi-asset cases. 
    We further analyze trained strategies for markets with correlated assets, where no analytical solution is known.  
\end{itemize}

In this paper, we discuss the pricing of passport options in multivariate BS markets. This setting we make precise in \Cref{sec:preliminaries}. In \Cref{sec:DiscreteTimeSolution}, we then proceed to give our main result: a discrete-time solution for multi-dimensional BS markets with uncorrelated assets. We introduce and discuss our ML algorithms in \Cref{sec:MLApproaches}, and test them in simulated experiments in \Cref{sec:experiments}.
\section{Preliminaries}\label{sec:preliminaries}
In this section, we define the market setting and the portfolio process underlying the passport option (\Cref{subsec:setting}) and discuss how to price passport options (\Cref{subsec:PricingThePassportOption}).
\subsection{Setting}\label{subsec:setting}
Given a probability space $(\Omega,\mathcal{F},\Q)$, consider a BS market with \emph{risk-neutral} measure $\Q$, consisting of $d\in\N$ risky assets

\begin{align}\label{eq:Sprocess}
    dS_t^i&=r S_t^i\,dt+\sigma^i S_t^i\,dW_t^{i}, \quad \sigma^i>0,\quad i=1,\ldots,d,\\
    dW^{i}_tdW^{j}_t&=\rho^{ij}\, dt, \quad i\neq j,\notag
\end{align}
with interest rate $r\in\mathbb{R}$, $\Q$-Brownian motions $W^i$ with correlations $-1\le\rho^{ij}\le 1$ and volatilities $\sigma^{i}>0$, {\small $i,j=1,\ldots,d$}.

Otherwise put,
\begin{align*}
   Cor(W_t^i,W_t^j)=\rho^{ij}, W=(W^1,\ldots,W^d)=A\tilde{W}
\end{align*}
with $\tilde{W}$ a $d$-dim standard BM, 
\begin{align*}
    AA^\top=\left(\begin{matrix}1 & \cdots & \rho^{1d}\\
    \vdots & \ddots &\vdots\\
   \rho^{d1}  & \cdots & 1\end{matrix}\right)
\end{align*}and
\begin{align*}
    S_t^i=S_0^i\exp\left(\left(r-\frac{(\sigma^i)^2}{2}\right)t+\sigma^i\langle a^i, \tilde{W}_t\rangle\right),
\end{align*}
where $a^i$ denotes the $i$\textsuperscript{th} row of $A$.
For a predictable process $q = (q_{t})_{0\le t\le T}$, the \emph{trading strategy}, we denote the portfolio value at final time $T$ by
\begin{equation}\label{eq:PVprocess}
    X_T = x_0 + \int_0^Tr(X_t-\sum_{i=1}^dq_t^{i}S_t^{i})\,dt+ \sum_{i=1}^d\int_0^T q_{t}^{i} \, dS_{t}^{i}.
\end{equation}


\subsection{Pricing the Passport Option}\label{subsec:PricingThePassportOption}
For pricing a passport option, the option seller considers the worst-case expected payoff among all strategies an option holder could choose within the pre-specified trading constraints. In this paper, these trading constraints allow the option holder to go short or long at most one unit in any of the underlying risky assets.\footnote{This is a standard formulation of the passport option. W.l.o.g., it can be extended to different bounds on allowed investments \citep{Shreve00optionson}.} The option seller's goal, therefore, is to find a trading strategy $q$ solving
\begin{equation}\label{eq:PPOprice}\tag{P}
   \max_{q=(q_t)_{0\le t\le T}}\mathbb{E}_\Q\left[e^{-rT}X_T^+\right]\quad\text{s.t. } ||q_t||_1\le 1, t\in[0,T].
\end{equation}

We are thus interested in solving a constrained stochastic optimal control problem \eqref{eq:PPOprice} for the (discounted) controlled Markov process\footnote{The dynamics of this Markov process can be obtained via integration by parts and using equations \eqref{eq:Sprocess} and \eqref{eq:PVprocess}.}
\begin{align}\label{eq:discountedMDP}
    de^{-rt}\left(\begin{matrix}X\\
    S
    \end{matrix}\right)_t=\left(\begin{matrix}
   q_t^1\sigma^1e^{-rt}S_t^1 & \cdots & q_t^d\sigma^de^{-rt}S_t^d\\
   \sigma^1e^{-rt}S_t^1 & \cdots & 0\\
    \vdots & \ddots &\vdots\\
   0 & \cdots & \sigma^de^{-rt}S_t^d\end{matrix}\right)A \,d\tilde{W}_t.
\end{align}

Observing the HJB of this control problem \eqref{eq:PPOprice} we note that the optimal strategy takes values in the corner points $\Diamond:=\{\pm e_i,i=1,\ldots,d\}$ where $e_i\in\R^d$ denotes the $i$\textsuperscript{th} unit vector (see also \citep{penaud2000optimal}).
Moreover, following \Cref{le:absEquiv}, throughout this paper we identify the problem of pricing the passport option with the control problem \eqref{eq:ABSobjective}. In \eqref{eq:ABSobjective}, we look for the trading strategy maximizing the expected absolute value of the portfolio value at terminal time, instead of its positive part.

\begin{lemma}\label{le:absEquiv}
The strategy $q^*$ is a solution to \eqref{eq:PPOprice} if and only if it solves  
\begin{equation}\label{eq:ABSobjective}\tag{AP}
   \max_{q=(q_t)_{0\le t\le T}}\mathbb{E}_\Q\left[e^{-rT}|X_T|\right]\quad\text{s.t. } ||q_t||_1\le 1, t\in[0,T].
\end{equation}
\end{lemma}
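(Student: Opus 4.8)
The plan is to reduce both optimization problems to the same objective up to an increasing affine transformation, exploiting that the discounted portfolio value is a $\Q$-martingale. First I would read off from the dynamics \eqref{eq:discountedMDP} that the discounted wealth $e^{-rt}X_t$ has no drift term: it is a pure stochastic integral against the Brownian motion $\tilde{W}$. Hence, under suitable integrability, $e^{-rt}X_t$ is a true $\Q$-martingale, so that $\mathbb{E}_\Q[e^{-rT}X_T] = x_0$ for \emph{every} admissible strategy $q$ with $\|q_t\|_1\le 1$. The crucial observation is that this terminal expectation does not depend on the choice of $q$.

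Next I would use the elementary pointwise identity $|X_T| = 2X_T^+ - X_T$. Taking discounted expectations and inserting the martingale identity yields
\begin{align*}
\mathbb{E}_\Q\left[e^{-rT}|X_T|\right] = 2\,\mathbb{E}_\Q\left[e^{-rT}X_T^+\right] - \mathbb{E}_\Q\left[e^{-rT}X_T\right] = 2\,\mathbb{E}_\Q\left[e^{-rT}X_T^+\right] - x_0.
\end{align*}
Since $x_0$ is a fixed constant independent of $q$, the objective of \eqref{eq:ABSobjective} is an increasing affine function of the objective of \eqref{eq:PPOprice} over the common admissible set $\{\|q_t\|_1\le 1\}$. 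Consequently a strategy $q^*$ maximizes one objective if and only if it maximizes the other, which is precisely the asserted equivalence; in particular the two problems share the same set of optimizers.

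The main obstacle is the first step: justifying that $e^{-rt}X_t$ is a genuine martingale rather than merely a local martingale, so that $\mathbb{E}_\Q[e^{-rT}X_T] = x_0$ truly holds. This requires an integrability estimate on the integrand in \eqref{eq:discountedMDP}. Since the constraint $\|q_t\|_1\le 1$ bounds the strategy, $e^{-rt}$ is bounded on $[0,T]$, and each geometric Brownian motion $S^i$ has finite second moment $\mathbb{E}_\Q[(S_t^i)^2] = (S_0^i)^2 e^{(2r+(\sigma^i)^2)t}$, one obtains $\mathbb{E}_\Q\int_0^T (q_t^i\sigma^i e^{-rt}S_t^i)^2\,dt < \infty$ for each $i$. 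Thus the stochastic integral is square-integrable and defines a true martingale, closing the argument. Notably, no convexity or pathwise symmetry of the control problem is needed: the equivalence follows purely from the martingale property of discounted wealth together with the identity for $|X_T|$.
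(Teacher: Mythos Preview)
Your argument is correct and essentially identical to the paper's: both use the martingale identity $\mathbb{E}_\Q[e^{-rT}X_T]=x_0$ together with the pointwise relation between $|X_T|$ and $X_T^+$ (the paper writes $(\cdot)^+=(\cdot)^-+\mathrm{id}$, you write $|X_T|=2X_T^+-X_T$, which is the same thing) to obtain the affine relation $\mathbb{E}_\Q[e^{-rT}X_T^+]=\tfrac12\bigl(\mathbb{E}_\Q[e^{-rT}|X_T|]+x_0\bigr)$. Your added justification that the discounted wealth is a true (not merely local) martingale via the $L^2$ bound on the integrand is a welcome detail the paper omits.
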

\begin{proof}
Since $(\cdot)^+=(\cdot)^-+id$
\begin{align*}
\E_\Q[e^{-rT}(X_T)^+]&=\underbrace{e^{-rT}\E_\Q[X_T]}_{=x_0}+e^{-rT}\E_\Q[(X_T)^-]\\
\iff \E_\Q[e^{-rT}(X_T)^+]&=\frac{x_0+e^{-rT}\E_\Q[(X_T)^-]+e^{-rT}\E_\Q[(X_T)^+]}{2}\\
&=\frac{e^{-rT}(\E_\Q[(X_T)^+]+\E_\Q[(X_T)^-])+x_0}{2}\\
&=\frac{e^{-rT}\E_\Q[|X_T|]+x_0}{2},
\end{align*}
and thus 
\begin{align*}
    \max_q e^{-rT}\E_\Q[(X_T)^+]=\frac{1}{2}  \max_q e^{-rT}\E_\Q[|X_T|]+x_0/2.
\end{align*}
\end{proof}

\begin{assumption}
In what follows, we consider discounted values $(e^{-rt}X_t, e^{-rt}S_t)$ and omit the discounting factor in the notation.
\end{assumption}

\section{A Discrete-Time Solution for the Multi-Dimensional Case with Independent Assets}\label{sec:DiscreteTimeSolution}
The control problem \eqref{eq:PPOprice} for pricing the passport option has been unsolved analytically for general BS markets with more than one risky asset. In this section, we consider the discrete-time setting $0=t_0<t_1<\ldots<t_N=T$. We derive a discrete-time solution to the pricing problem \eqref{eq:ABSobjective} similarly to how \citet{delbaen2002passport} proceed to solve the pricing problem for a one-dimensional market.

Let $q = (q_{t_n})_{1\leq n \leq N}$ be a predictable process in discrete-time, the \emph{trading strategy}, then the (discounted) portfolio value at final time $T$ is given as

\begin{equation}\label{eq:DiscreteTimeX}
    X^q_T = x_0 + \sum_{n=1}^N\sum_{i=1}^d q_{t_n}^{i} \Delta S_{t_n}^{i},
\end{equation}
where $\Delta S_{t_n} := (S_{t_n} - S_{t_{n-1}})$ are the returns of discrete-time (discounted) asset processes

\begin{align}\label{eq:DiscreteTimeS}
    S_{t_n}^i=S_0^i\exp\left(\left(-\frac{(\sigma^i)^2}{2}\right){t_n}+\sigma^i\langle a^i, \tilde{W}_{t_n}\rangle\right), i=1,\ldots, d.
\end{align}

As in \citep{delbaen2002passport}, we find a discrete-time solution for the (finite-horizon) control problem \eqref{eq:ABSobjective} via the \emph{dynamic programming principle (DPP)} for the $(d+1)$-dimensional Markov decision process (MDP) with

	 \begin{itemize}
	 	\item state space $\mathcal{X}:= \mathbb{R}^d_+\times\mathbb{R}$, where $(s,x)\in\mathcal{X}$ with $s\in\R^d_+$ and $x\in\R$ being the (discounted) states of risky assets and portfolio value respectively,
	 	\item action space ${\Diamond}:=\{\pm e_i,i=1,\ldots,d\}$, that contains the corner points of the $d$-dimensional $\ell_1$-ball, 
   \item transition probabilities given by the discrete-time dynamics of \eqref{eq:DiscreteTimeX}, and \eqref{eq:DiscreteTimeS}, 
   \item terminal reward $R(s,x):=|x|$, and
	 	\item policies ${\bf q}=(q_{t_n})_{1\le n\le N}$, $q_{t_n}:\mathcal{X}\to {\Diamond}$.
	 \end{itemize}
Solving \eqref{eq:ABSobjective} in discrete time then means to find the optimal policy ${\bf q^*}$ with $q^*_{t_n}:\mathcal{X}\to{\Diamond}$ that solves 
	 	\begin{equation}\label{eq:MDPObjective}\tag{MDPO}\max_{\bf q}\mathbb{E}\left[|X^q_T|\right].\end{equation}
To this end, we define the value functions $V_k:\R\times\R^d_+\to\R_+$ as
\begin{align*}
    V_T(x, s)&:=|x|,\\
    V_{k}(x, s)&:= \max_{q\in\Diamond}\Exsk\left[V_{k+1}\left(x+\sum_{j=1}^dq^js^j\left(\frac{S^j_{t_{k+1}}}{s^j}-1\right), S_{t_{k+1}}\right)\right],
\end{align*}
where $\Exsk[\cdot]:=\E[\cdot\mid X_{t_{k}}=x, S_{t_{k}}=s]$.

In \Cref{thm:optimalStrat}, we give a closed-form solution for the optimal policy ${\bf q^*}$ for the case when risky assets are uncorrelated. Even under this simplifying market assumption, ${\bf q^*}$ has been unknown for over decades.

\begin{theorem}[Independent assets]\label{thm:optimalStrat}
Let $\rho^{ij}=0$ for all $i,j=1,\ldots,d$, $i\neq j$. The optimal strategy $q^*=(q^*_{t_n})_{1\le n\le N}$ for problem \eqref{eq:MDPObjective} is given as
\begin{align}\label{eq:optimalStrat}
    q^*_{t_n}(x,s)&=-\sign(x)e_{j^*},\\
j^*& \in \argmax_{j\in\{1,\ldots,d\}} 
(s^i+|x|) \Phi(d_1^i)-s^i\Phi(d_2^i), \notag\\
    d_1^i&=\frac{\log(1+|x|/s^i)+\frac{1}{2}(\sigma^i)^2\Delta t_{n}}{\sigi\sqrt{\Delta t_{n}}},\notag\\
    d_2^i&= d_1^i - \sigi\sqrt{\Delta t_{n}}.\notag
\end{align}
Here, $e_j$ denotes the $j$\textsuperscript{th} unit vector and $\Delta t_{n}=t_{n}-t_{n-1}$.
\end{theorem}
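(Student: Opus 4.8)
The plan is to solve \eqref{eq:MDPObjective} by the dynamic programming principle and to establish, by backward induction on $k$, the two structural facts that for every fixed $s$ the value function $x\mapsto V_k(x,s)$ is \emph{even} and \emph{convex}, together with the claim that the greedy action $-\sign(x)e_{j^*}$ attains the Bellman maximum. Both structural facts propagate almost for free. Evenness: substituting $\epsilon\mapsto-\epsilon$ in the recursion and using $V_{k+1}(-\cdot,s)=V_{k+1}(\cdot,s)$ gives $V_k(-x,s)=V_k(x,s)$. Convexity: for each fixed action $q\in\Diamond$ the map $x\mapsto\mathbb{E}[V_{k+1}(x+\sum_j q^js^j(S^j_{t_{k+1}}/s^j-1),S_{t_{k+1}})]$ is the expectation of the convex $V_{k+1}$ precomposed with an affine map, hence convex, and $V_k$ is the pointwise maximum over the finitely many $q\in\Diamond$ of such functions. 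With these in hand, optimality of the greedy policy reduces to a single \emph{verification inequality} per step, namely that $-\sign(x)e_{j^*}$ dominates every other $\epsilon e_i\in\Diamond$ in the Bellman functional; uniqueness of the DPP solution then finishes the argument.

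The computational heart, which also settles the terminal step completely, is one explicit Gaussian integral. Because $\rho^{ij}=0$, trading asset $i$ with sign $\epsilon$ moves the wealth by $\epsilon(S^i_{t_{k+1}}-s^i)$, a single lognormal factor independent of the remaining coordinates, and $\mathbb{E}[S^i_{t_{k+1}}-s^i]=0$ since the discounted price is a martingale. Splitting $\{S^i_{t_{k+1}}<s^i+|x|\}$ from its complement and using $\mathbb{E}[S^i_{t_{k+1}}\mathbf 1_{\{S^i_{t_{k+1}}<s^i+|x|\}}]=s^i\Phi(d_2^i)$ with the law \eqref{eq:DiscreteTimeS} yields
$$\mathbb{E}\big[|x-\sign(x)(S^i_{t_{k+1}}-s^i)|\big]=2\big[(s^i+|x|)\Phi(d_1^i)-s^i\Phi(d_2^i)\big]-|x|,$$
with $d_1^i,d_2^i$ exactly as in \eqref{eq:optimalStrat}. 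For $k=N-1$ (where $V_N=|\cdot|$) this is the Bellman functional itself, so maximizing the greedy criterion $(s^i+|x|)\Phi(d_1^i)-s^i\Phi(d_2^i)$ is equivalent to maximizing the expected absolute wealth after one step; the optimal asset is then $j^*$ by definition, and the optimal sign is $-\sign(x)$ because, writing $g(K):=\mathbb{E}|K-S^i_{t_{k+1}}|$, one checks $g(s^i+|x|)\ge g(s^i-|x|)$: writing the difference as $\int_0^{|x|}[g'(s^i+u)+g'(s^i-u)]\,du$, the integrand changes sign once and the full integral over $(0,\infty)$ vanishes by put--call parity, so every partial integral is nonnegative.

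The genuine difficulty lies in the inductive step $k<N-1$, where the true continuation value $V_{k+1}$ replaces $|\cdot|$ and evenness plus convexity no longer suffice. It is convenient to recast the objective through the discrete Tanaka decomposition $\mathbb{E}|X_T|=x_0+\mathbb{E}\sum_{k}\Delta L_k$ with nonnegative local-time increments $\Delta L_k$, whose conditional expectation is precisely the one-step gain $\mathbb{E}[|x-\sign(x)(S^i-s^i)|]-|x|$; the issue is that this gain decreases in $|x|$, so a high-dispersion trade now can shrink the attainable future local time. The optimal sign still holds but must be read off the specific form of $V_{k+1}$ rather than from convexity alone, and the asset selection is harder still: the myopic ranking depends on $|x|$ through the effective strike $s^i+|x|$, the post-trade wealths produced by different assets are generally not convex-ordered, and the traded coordinate $S^i$ couples simultaneously to the new wealth and to all future trading opportunities. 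I expect essentially all the work to be here, and would attack it by strengthening the induction hypothesis with a monotonicity/comparison property of $V_k$ in $(|x|,s)$ that, using the independence of the assets (which lets one integrate out the untraded coordinates $S^{-i}$), forces the one-step ranking to persist step by step.
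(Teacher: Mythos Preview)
Your terminal step is correct and essentially matches the paper's: the explicit Black--Scholes computation and the argument that $g(s^i+|x|)\ge g(s^i-|x|)$ both appear there (the latter as Lemma~\ref{le:medianIneq}). The evenness and convexity of $V_k(\cdot,s)$ also propagate as you say.

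The gap is exactly where you flag it: the inductive step is not proved, only sketched. Your proposed line of attack --- strengthen the hypothesis with some unspecified monotonicity/comparison property of $V_k$ in $(|x|,s)$ --- is too vague, and the discrete Tanaka decomposition you mention does not by itself resolve the coupling between the traded coordinate and the continuation value. The paper's proof turns on a structural representation you are missing. From evenness, convexity, $V_k(x,s)\ge|x|$, and $\lim_{x\to\infty}V_k(x,s)-x=0$ (all of which propagate inductively), one gets that for each fixed $s$ there is a probability measure $\mu$ on $\mathbb R_+$ with
\[
V_{k+1}(y,s)=\int_{\mathbb R_+}\max(|y|,z)\,d\mu(z)
\]
(Lemma~\ref{le:Vrepresentation}, following Delbaen--Yor). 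Combined with a positive-homogeneity/change-of-measure step that lets one freeze the traded coordinate $S^i$ inside $V_{k+1}$ (Lemma~\ref{le:VposHom}), this reduces the Bellman comparison at step $k$ to comparing, \emph{pointwise in $z$}, the scalar functions
\[
\varphi^i_\pm(z)=\mathbb E\big[\max\{|\,|x|(S^i/s^i)\pm s^i(1-S^i/s^i)\,|,z\}\big].
\]
Showing $\varphi^i_-(z)\ge\varphi^i_+(z)$ for all $z\ge0$ (hence optimal sign $-\mathrm{sign}(x)$) and then ranking $\varphi^i_-$ across assets is done via a boundary-and-monotonicity criterion: check $f(0)\ge0$, $\lim_{z\to\infty}f(z)=0$, and that $f'$ has at most one zero, which forces $f\ge0$ globally (Lemmas~\ref{le:negSignMax} and~\ref{le:varphi_CP_condition}). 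This is the missing idea; without the $\max(|y|,z)$ representation, the inductive step does not close on convexity and evenness alone, as you yourself observe.
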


\begin{remark}
By \Cref{thm:optimalStrat}, the optimal discrete-time trading strategy ${\bf q^*}$ for pricing the passport option is to invest at a time point $t_n$ the negative sign of the current portfolio value into the asset $S^i$ with highest
$CP^i(S_{t_n}^i/\kappa^i)\kappa^i$, where $CP^i(S_{t_n}^i/\kappa^i)$ is the price for a call with maturity $t_{n+1}-t_n$ and strike $S_{t_n}^i/\kappa^i$, with $\kappa^i=\frac{|X_{t_n}|+S_{t_n}^i}{S_{t_n}^i}$.

\end{remark}

\begin{proof}[\Cref{thm:optimalStrat}]
By DPP, $q^*$ is a solution to \eqref{eq:MDPObjective} if and only if 

\begin{equation}\label{eq:1stepobjective}
     q^*_{t_{k+1}} \in \argmax_{q\in\Diamond} \Exsk\left[V_{k+1}\left(x+\sum_{j=1}^dq^js^j\left(\frac{S^j_{t_{k+1}}}{s^j}-1\right), S_{t_{k+1}}\right)\right]
\end{equation}
for all $k$ in the recursion above.
By independence of $S^i$, $i=1\ldots,d$, we may apply Fubini's theorem to split the expectations, and due to the specific structure of $\Diamond$, the objective of \eqref{eq:1stepobjective} can be split into
\begin{align*}
    \max_{i, q^i\in\{\pm 1\}} \Esimxsk\left[\Esicxsk\left[V_{k+1}\left(x+q^is^i\left(\frac{S^i_{t_{k+1}}}{s^i}-1\right), S_{t_{k+1}}\right)\right]\right],
\end{align*}
where $S^{i-}$ denotes all but the i\textsuperscript{th} asset. We use the notations $\Esimxsk$ and $\Esicxsk$ for the expectation under the distribution of $S^{i-}_{t_{k+1}}$ respectively $S^i_{t_{k+1}}$, conditioned on $\{X_{t_{k}}=x, S_{t_{k}}=s\}$.
With \Cref{le:VposHom} of Appendix \ref{app:Results}, and further with a change of measure $\frac{dQ}{d\mathbb{Q}}=\frac{S^i_{t_{k+1}}}{s^i}$, we get
\begin{align*}    &\max_{i, q^i\in\{\pm 1\}} \Esimxsk\left[\Esicxsk\left[\frac{S^i_{t_{k+1}}}{s^i}V_{k+1}\left(x\left(\frac{S^i_{t_{k+1}}}{s^i}\right)^{-1}+q^is^i\left(1-\left(\frac{S^i_{t_{k+1}}}{s^i}\right)^{-1}\right), \left(s^{i},S_{t_{k+1}}^{i-}\right)\right)\right]\right]\\
    =&\max_{i, q^i\in\{\pm 1\}} \Esimxsk\left[\mathbb{E}_{{Q}}\left[V_{k+1}\left(x\left(\frac{S^i_{t_{k+1}}}{s^i}\right)^{-1}+q^is^i\left(1-\left(\frac{S^i_{t_{k+1}}}{s^i}\right)^{-1}\right), \left(s^{i},S_{t_{k+1}}^{i-}\right)\right)\right]\right]\\
    =&\max_{i, q^i\in\{\pm 1\}} \Esimxsk\left[\Esicxsk\left[V_{k+1}\left(x\left(\frac{S^i_{t_{k+1}}}{s^i}\right)+q^is^i\left(1-\left(\frac{S^i_{t_{k+1}}}{s^i}\right)\right), \left(s^{i},S_{t_{k+1}}^{i-}\right)\right)\right]\right].\\
\end{align*}
Here, the last step follows since the distribution of $\frac{S^i_{t_{k+1}}}{s^i}$ under $\Q$ is equal to the one of $\frac{s^i}{S^i_{t_{k+1}}}$
under $Q$. By \Cref{le:Vrepresentation} of Appendix \ref{app:Results} we can re-write $V_{k+1}$ as an integral w.r.t. a  probability measure $\muim$ on $\R_+$ that depends on the values of $S^{i-}$ and $s^i$ (but not on $S^{i}$). With this, and then using once more Fubini's theorem, we further re-write the one-step objective of Eq. \eqref{eq:1stepobjective} as
{\small\hspace{-5mm}
\begin{align*}
         \max_{i, q^i\in\{\pm 1\}}\,  \Esimxsk\bigg[\int_{\R_+}\Esicxsk\bigg[\max\bigg\{\underbrace{\bigg|x\left(\frac{S^i_{t_{k+1}}}{s^i}\right)+q^is^i\left(1-\left(\frac{S^i_{t_{k+1}}}{s^i}\right)\right)\bigg|}_{=\bigg||x|\left(\frac{S^i_{t_{k+1}}}{s^i}\right)+\sign(x)q^is^i\left(1-\left(\frac{S^i_{t_{k+1}}}{s^i}\right)\right)\bigg|}, z\bigg\}\,\bigg]d\muim(z)\bigg].
         \end{align*}
}

We further distinguish for each fixed asset $i$ the actions of going short or long, i.e., we distinguish the actions $q^i=1$ and $q^i=-1$ and define the respective expected one-step values in \Cref{def:pmInvestment}.
\begin{definition}[One-step investment in $i$\textsuperscript{th} asset at time $t_{k}$]\label{def:pmInvestment}
\begin{align*}
    \varphi^i_+(z)&:=\Esicxsk\left[\max\left\{\bigg||x|\left(\frac{S^i_{t_{k+1}}}{s^i}\right)+s^i\left(1-\frac{S^i_{t_{k+1}}}{s^i}\right)\bigg|,z\right\}\right],\\
        \varphi^i_{-}(z)&:=\Esicxsk\left[\max\left\{\bigg||x|\left(\frac{S^i_{t_{k+1}}}{s^i}\right)-s^i\left(1-\frac{S^i_{t_{k+1}}}{s^i}\right)\bigg|,z\right\}\right].
\end{align*}
Here, $\varphi^i_+(z)$ and $\varphi^i_-(z)$ characterize the one-step objective for investment $q^i=\sign(x)$ and $q^i=-\sign(x)$, respectively, in the $i$\textsuperscript{th} asset at time $t_{k}$.
\end{definition}

With the notation of \Cref{def:pmInvestment}, the one-step objective can alternatively be written as
\begin{align}\label{eq:onestepobjectivepm}
    \max_{i=1\ldots,d}\max\left\{\Esimxsk\left[\int_{\R_+}\varphi^i_+(z)\,d\muim(z)\right],\Esimxsk\left[\int_{\R_+}\varphi^i_-(z)\,d\muim(z)\right]\right\}.
\end{align}
By \Cref{le:negSignMax} of Appendix \ref{app:Results} and \Cref{re:varphiRepresentations} of Appendix \ref{app:Results}, \eqref{eq:onestepobjectivepm} is equivalent to
\begin{align*}
    \max_{i=1\ldots,d}\,\Esimxsk\left[\int_{\R_+}\varphi^i_-(z)\,d\muim(z)\right].
\end{align*}
In other words, for every asset $i$, $q^i=-\sign(x)$ yields a higher objective than $q^i=\sign(x)$.
We further investigate when an investment in one asset is to be preferred over an investment in any other.

We define for each asset $i$ the value of investing in that asset at a step $k$ 
\begin{align*}
    V^{i}_k(x,s):= \Esimxsk\left[\int_{\R_+}\varphi^i_-(z)\,d\muim(z)\right].
\end{align*}

The claim then follows from \Cref{le:whichAssetIsMax} of Appendix \ref{app:Results}.

\end{proof}

A special state in terms of the optimal trading action according to \Cref{thm:optimalStrat} is attained when the portfolio value reaches zero: first, both going long and short in an asset is of equal value. The convention in this paper is to set $\sign(0)=1$. Second, the optimal asset choice simplifies. Details can be seen in \Cref{cor:PV0}.

\begin{corollary}\label{cor:PV0}
For $x=0$,

\begin{align*}
    q^*_{t_n}(0,s)&=-\sign(0)e_{j^*},\\
j^*&=\argmax_{j=1,\ldots,d} s^j\left(2\Phi(\frac{1}{2}\sigma^j\sqrt{\Delta t_{n}})-1\right).
\end{align*}

In addition, for $\max_{1\le n\le N}|\Delta t_n|\approx 0$, the optimal strategy at $x=0$ is
\begin{align*}
    q^*_{t_n}(0,s)&=-\sign(0)e_{j^*},\\
j^*&\approx\argmax_i s^j\sigma^j.
\end{align*}
\end{corollary}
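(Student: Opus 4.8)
The plan is to specialize the closed-form expression from \Cref{thm:optimalStrat} to the state $x=0$ and then pass to the small-mesh limit. First I would substitute $|x|=0$ directly into the defining quantities $d_1^i,d_2^i$. Since $\log(1+0/s^i)=0$, the numerator of $d_1^i$ reduces to $\tfrac12(\sigma^i)^2\Delta t_n$, so that $d_1^i=\tfrac12\sigma^i\sqrt{\Delta t_n}$ and consequently $d_2^i=d_1^i-\sigma^i\sqrt{\Delta t_n}=-\tfrac12\sigma^i\sqrt{\Delta t_n}$. Feeding this into the objective $(s^i+|x|)\Phi(d_1^i)-s^i\Phi(d_2^i)$ of \Cref{thm:optimalStrat} collapses the first factor $s^i+|x|$ to $s^i$, leaving $s^i\bigl(\Phi(\tfrac12\sigma^i\sqrt{\Delta t_n})-\Phi(-\tfrac12\sigma^i\sqrt{\Delta t_n})\bigr)$.

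Next I would use the symmetry $\Phi(-a)=1-\Phi(a)$ of the standard normal distribution function to rewrite the bracket as $2\Phi(\tfrac12\sigma^i\sqrt{\Delta t_n})-1$, which is exactly the expression determining $j^*$ in the first display of \Cref{cor:PV0}. The accompanying remark that $x=0$ is a special state I would justify by inspecting \Cref{def:pmInvestment}: with $|x|=0$ the integrands of $\varphi^i_+$ and $\varphi^i_-$ both reduce to $\max\{s^i|1-S^i_{t_{k+1}}/s^i|,z\}$, so $\varphi^i_+=\varphi^i_-$ and going long or short is of equal value. The sign of the investment is therefore immaterial here, and the stated form $-\sign(0)e_{j^*}$ is obtained simply by fixing the convention $\sign(0)=1$.

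Finally, for the asymptotic statement I would Taylor-expand $\Phi$ about the origin. Using $\Phi(0)=\tfrac12$ and $\Phi'(0)=\tfrac{1}{\sqrt{2\pi}}$ gives $2\Phi(\tfrac12\sigma^j\sqrt{\Delta t_n})-1=\tfrac{1}{\sqrt{2\pi}}\sigma^j\sqrt{\Delta t_n}+o(\sqrt{\Delta t_n})$ as $\max_{1\le n\le N}|\Delta t_n|\to 0$. Multiplying by $s^j$ and discarding the $j$-independent prefactor $\sqrt{\Delta t_n}/\sqrt{2\pi}$ shows that the maximizer is, to leading order, $\argmax_j s^j\sigma^j$, as claimed.

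There is no genuinely hard step here: the result is essentially a substitution followed by a first-order expansion. The only point requiring a little care is the last one, where replacing the exact $\argmax$ by the leading-order $\argmax$ is merely an approximation (indicated by $\approx$); turning it into a rigorous limit statement would additionally require that the leading-order problem $\argmax_j s^j\sigma^j$ has a unique maximizer, so that the $\argmax$ is stable under the vanishing higher-order corrections.
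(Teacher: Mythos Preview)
Your proposal is correct and essentially aligned with the paper's argument, though slightly more direct. Where the paper revisits the proof of \Cref{thm:optimalStrat}, notes $\varphi^i_+=\varphi^i_-$ at $x=0$, and then invokes \Cref{le:whichAssetIsMax} with $x=0$ to obtain the first display, you simply plug $|x|=0$ into the formula already stated in \Cref{thm:optimalStrat} and simplify via $\Phi(-a)=1-\Phi(a)$; both routes land on the same expression. For the small-mesh statement, the paper identifies the criterion as the at-the-money call price and then cites an external result that this price is approximately $s^j\sigma^j$, whereas you obtain the same conclusion by a first-order Taylor expansion of $\Phi$ at the origin. Your treatment is more self-contained; the paper's has the advantage of making the financial interpretation (at-the-money call) explicit. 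Your closing caveat about uniqueness of the leading-order $\argmax$ is apt and is something the paper leaves implicit.
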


\begin{proof}
We revisit the proof of \Cref{thm:optimalStrat}. If the current portfolio value $x=0$, then $\varphi_+^i=\varphi_-^i$ for all $i=1,\ldots, d$ in \Cref{def:pmInvestment} and w.l.o.g., the one-step objective reduces to
\begin{align*}
    \max_{i=1,\ldots,d}\int_{\R_+}\varphi^i_-(z).
\end{align*}
An application of \Cref{le:whichAssetIsMax} in Appendix \ref{app:Results} with $x=0$ finally yields the result.

As can be seen in \Cref{le:whichAssetIsMax} in Appendix \ref{app:Results} in the one-step problem, the decision in which asset $S^i$ to invest in at time $t_{n-1}$ is equal to deciding for the asset with maximal price for an at-the-money call with maturity $\Delta t_n$. For decreasing mesh size in the discretized time grid $0=t_0<\ldots<t_N=T$, these call prices can be approximately modeled as $\sigma^is^i$ for every $i$ (see e.g., \citep[Proposition 5.1.]{relationshipCallVola}), thus the second statement follows.
\end{proof}

\section{Deep Learning Approaches}\label{sec:MLApproaches}
In this section, we discuss how to price passport options using DL. We point to certain pitfalls of common approaches in \Cref{sec:RelaxingTheProblem} and present two DL algorithms in \Cref{sec:PGA} and \Cref{sec:A2C} that we will use in \Cref{sec:experiments} to price passport options in simulated market settings. 

There are many ways of how to frame the pricing of passport options as a DL task. In the following, we give a short overview of popular approaches that are commonly grouped into value-based and action-based approaches.
\paragraph{Value-based Approaches.} First, one could think of learning the value function $V_t$ for each time point $t$. For continuous time, i.e. $t\in[0,T]$, this involves solving a fully non-linear PDE \citep{penaud2000optimal}. Classic numerical approaches for solving such an intricate task are scarce and often do not scale to higher dimensions \citep{FullyNonLinearPDEsNNs}. Recently, \citet{FullyNonLinearPDEsNNs} proposed a DL method to approximate (smooth, unique) solutions of fully non-linear PDEs.
However, in this paper, we turn to learning the optimal pricing strategy instead.
\paragraph{Action-based Approaches.} The literature is rich on DL algorithms that learn a strategy to maximize some expected terminal reward.
First, pricing the passport option can be framed in spirit of deep hedging \citep{buehler2019deep}, where the trading strategy is dynamically parametrized with (a collection of) NNs and is then optimized with some form of stochastic gradient descent on minimizing negative expected payoff. While there are universal approximation theorems \citep{buehler2019deep} that guarantee expressiveness of these models, it is not clear if the training method succeeds at finding optimal trading strategies. Typically, these deep hedging approaches perform well, which might be explained by implicit/explicit regularization in these models and/or the fact that there are no bad local optima. In the case of the passport option however, the latter is not true (cp. \Cref{re:problemwithdeephedging}) and we repeatedly find convergence to (bad) local optima (see \Cref{fig:1DPVdist}, where average terminal payoffs (magenta) are indistinguishable from those obtained by taking random actions in $\Diamond$ (yellow)). 
We thus argue in more detail in \Cref{re:problemwithdeephedging} that for the pricing of passport options, a standard deep hedging approach is ill-posed.

\begin{remark}[Local Optima - The Problem with Standard Deep Hedging]\label{re:problemwithdeephedging}
    The problem of finding a strategy $q: ||q||_1\le 1$ that maximizes $\E_\Q[|X_T^q|]$ with a gradient method is ill-posed: 
    the portfolio value for some time point $t_n$ at time point $t_{n-1}$ is given as $X^q_{t_n}=x_{t_{n-1}}-\langle q_{t_n},s_{t_{n-1}}\rangle+\langle q_{t_n},S_{t_{n}}\rangle$. For $q_{t_n}\in\Diamond$, $X^q_{t_n}$ is a random variable with
    $$\E_{s_{t_{n-1}},x_{t_{n-1}},t_{n-1}}[X^q_{t_n}]= x_{t_{n-1}}$$
    and non-zero variance, while for $q_{t_n}=0$ $$X^q_{t_n}\equiv x_{t_{n-1}}.$$
    Intuitively, since the value function $V_k(x,s)$ is convex in $x$ for any time point $k$, the value $\E_{s_{t_{n-1}},x_{t_{n-1}},t_{n-1}}[V_{N-n}(x_{t_{n-1}},{S_{t_n}})]$ for the deterministic portfolio value $x_{t_{n-1}}$ that one gets by not investing at all in a risky asset (i.e., action $q_{t_n}=0$) is smaller than the value $\E_{s_{t_{n-1}},x_{t_{n-1}},t_{n-1}}[V_{N-n}(X^q_{t_n},{S_{t_n}})]$ for the random variable $X^q_{t_n}$ with non-zero variance around $x_{t_{n-1}}$ that one gets for any $q_{t_n}\in\Diamond$.
    Thus, intuitively, the value for each of the actions in $\Diamond$ is bigger than the one of action $q_{t_n}=0$.
    Therefore, whenever a strategy $q^\theta$ for approximating $q^*$ is initialized s.t. ,e.g., for some time-point $t$ $0<{q^\theta_t}^i<1$, $i=1,\ldots,d$, a gradient step will pull towards the locally better corner point $q_{\text{local}}\in\Diamond$, even if the globally optimal action were attained at another point $q^* \in\Diamond, q^*\neq q_{\text{local}}$.
    In \Cref{fig:DHstrats}, we visualize this problem in a one-dimensional BS market. We see that at initialization, the network actions $0<{q^\theta_t}<1$ at time point $t$ take random values in $[-1,1]$ (for each $t$, the network $q^\theta_t$ is almost constant, as is typical in standard initializations). After training in spirit of deep hedging\footnote{We initialized a separate NN for each of the $T=32$ time steps and trained the entire architecture to minimize a MC estimate of $\E_\Q[|X_T^{q^{\theta}}|]$ over $2^{13}$ paths, for $2^7$ epochs, with batch size $2^8$, $\ell_2$-regularization and entropy regularization 1e-18.}, we find that each network $q^\theta_t$ converges to constant extreme points $\{-1,1\}$ (right sub-figure in \Cref{fig:DHstrats}), depending on which extreme point it had been closer to at initialization.
    \begin{figure}
\makebox[\textwidth][c]{%
		\includegraphics[scale=0.4]{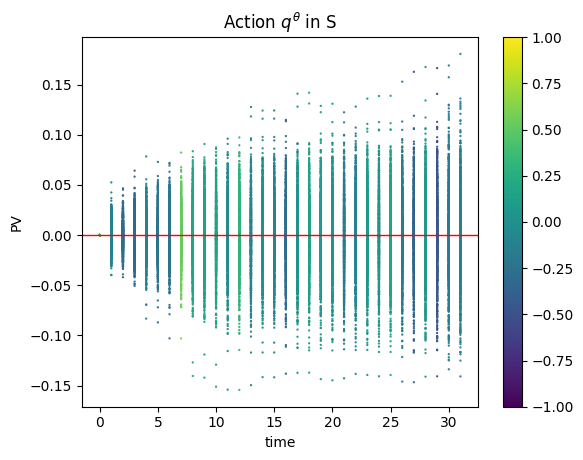}
		\includegraphics[scale=0.4]{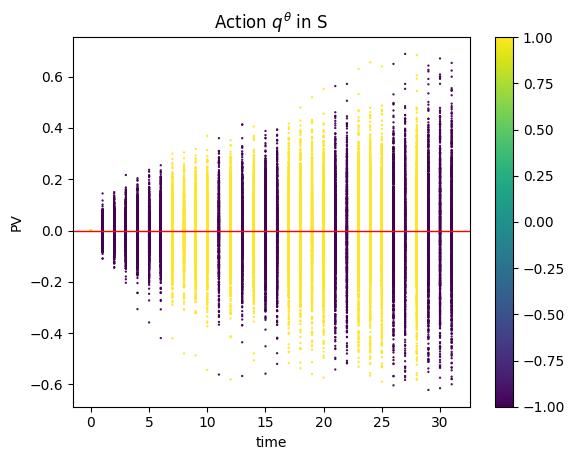}

}
    \caption{Evolution of portfolio values (PV) $X^{q^\theta}$ over time until maturity $T=32$ for actions taken according to a deep hedging strategy $q^\theta$ at initialization (left) and trained with a standard deep hedging algorithm \citep{buehler2019deep} (right) respectively, over 1000 asset paths in a BS market with one risky asset. The color indicates the trading action in $[-1,1]$ that the respective network $q^\theta$ takes in the risky asset.}
    \label{fig:DHstrats}
\end{figure}
\end{remark}

Approximating the optimal trading strategy to price a passport option via a standard deep hedging approach that searches the entire $\ell_1$-ball might be an ill-posed problem, however, this approach also doesn't use all information that the theory suggests.
In particular, recall that in \Cref{subsec:PricingThePassportOption}, we noted that for general asset dimensions $d$ optimal actions take values in the corner points $\Diamond$ of the $d$-dimensional $\ell_1$-ball.
Therefore, we are dealing with a classification problem with $2d$ possible actions.
Learning the optimal action is then commonly framed via learning a probability distribution over these finitely many possible actions.
Thus, pricing the passport option in the discrete-time BS market can be seen as solving a particular Markov decision process with \emph{continuous} state and \emph{finite} action spaces (with $2d$ possible actions).

\subsubsection{Our Approach}
For pricing passport options in this paper, we take the following approach. First (as is common practice for classification problems), we relax the pricing problem of \Cref{eq:MDPObjective} by moving to probabilistic actions in \Cref{sec:RelaxingTheProblem}. Second, we introduce a purely action-based approach to learning the optimal trading strategy for the relaxed problem in \Cref{sec:PGA}. We argue that this approach does not bear the risk of being trapped in local minima and discuss how to deal with noisy estimates involved in the algorithm. Third, in \Cref{sec:A2C}, we discuss how to use a popular action- and value-based approach for pricing passport options.

\subsection{Relaxing the Pricing Problem}
\label{sec:RelaxingTheProblem}
For numerically solving problem \eqref{eq:MDPObjective}, we consider the relaxed MDP with action space $\mathcal{P}({\Diamond})$, the probability measures on $\Diamond$, i.e.,

	 \begin{itemize}
	 	\item state space $\mathcal{X}:= \mathbb{R}^d_+\times\mathbb{R}$
	 	\item action space $\mathcal{P}({\Diamond}):=\{p\in[0,1]^{2d}:\sum_{j=1}p_j=1\}$,
    \item transition probabilities given by the discrete-time dynamics \begin{align*}
        X_0^\pi=x_0, \quad  X_{t_n}^\pi=X_{t_{n-1}}^\pi+\sum_{a\in\Diamond}\left(\pi_{t_n}(X_{t_{n-1}}^\pi, S_{t_{n-1}})(a) \sum_{i=1}^da^i\Delta S_{t_n}^i\right),\quad 1\le n\le N,
    \end{align*} and \eqref{eq:DiscreteTimeS}, 
   \item terminal reward $R(s, x):=|x|$, and
   
	 	\item policies $\pi=(\pi_{t_n})_{1\le n\le N}$, $\pi_{t_n}:\mathcal{X}\to \mathcal{P}({\Diamond})$.
	 \end{itemize}
In the relaxed problem, we search for the optimal policy $\pi^*$ with $\pi^*_{t_n}:\mathcal{X}\to\mathcal{P}({\Diamond})$ that solves 
	 	\begin{equation}\label{eq:RLObjective}\tag{RLO}\min_{\pi}-\mathbb{E}_{\Q}\left[|X^\pi_T|\right].\end{equation}

\begin{remark}
This kind of relaxation is genuine, it convexifies the set of actions, and the infimum over relaxed actions equals the one over strict actions \citep{ControlledDiffusions}. Thus, w.l.o.g. we may turn to numerically solving the relaxed pricing problem \eqref{eq:RLObjective} instead of \eqref{eq:MDPObjective}. Note that the convexification of the action space we introduce by moving to probability distributions differs from the original convexification where the action space was the $d$-dimensional $\ell_1$-ball.
Computationally, relaxing the problem in this way bears the benefit of not being trapped in local minima (cp. \Cref{re:problemwithdeephedging}). This can be seen since in each step, the expectation w.r.t. an action $p\in\mathcal{P}(\Diamond)$ is linear in the measure $p$. Thus, if one measure $p_1\in\mathcal{P}(\Diamond)$ leads to higher value than another measure $p_2\in\mathcal{P}(\Diamond)$, also any convex combination $\alpha*p_1+(1-\alpha)*p_2$ leads to higher value than $p_2$. Thus, a gradient algorithm will not get stuck at local optima.
\end{remark}
Problem \eqref{eq:RLObjective} admits a possibly time-dependent, a.k.a., inhomogeneous Markov solution $\pi^*$. We hence approximate the optimal policy $\pi^*$ by a NN $\pi^\theta:\R_+\times\mathcal{X}\to {\mathcal{P}(\Diamond)}$ mapping time and state space into a probability measure over possible actions, with $\pith_{t_n}(\cdot):=\pith(t_n,\cdot)$ and parameters $\theta$. In the following, we call $\pith$ \emph{strategy network}. The objective 
then is to solve
			
			\begin{equation}\label{eq:NNObjective}
		   \min_{\theta}-\mathbb{E}_{\Q}\left[|X^{\pith}_T|\right].
			\end{equation}
\begin{remark}[Relaxed Deep Hedging a.k.a. REINFORCE]\label{re:REINFORCE}
    Modeling a distribution over a discrete set of optimal actions is as simple as choosing the softmax activation $\phi(x):=\max(x,0), x\in\R$ for strategy networks $\pith$. In spirit of deep hedging, the parameters of strategy networks $\pith$ can then be optimized via a gradient method for \Cref{eq:NNObjective}. Via policy gradient theorems \citep{policy_gradient_learning} the gradient for \Cref{eq:RLObjective} can be reformulated as\footnote{We introduce the notation $\mathbb{E}_{\Q,\pith}\left[|X^{\pith}_T|\right]:=\mathbb{E}_{\Q}\left[|X^{\pith}_T|\right]$ to highlight that in the objectives of \Cref{eq:RLObjective,eq:NNObjective}, we also average w.r.t. the probabilistic actions. When training a NN estimate $\pith$ on the objective of \Cref{eq:NNObjective} with a gradient method, we are thus taking derivatives of the measures appearing in \Cref{eq:NNObjective}.}
    \begin{equation}\label{eq:PGthm}
		   \nabla_\theta\mathbb{E}_{\Q,\pith}\left[|X^{\pith}_T|\right]=\mathbb{E}_{\Q,\pith}\left[|X^{\pith}_T|\sum_{n=0}^{N-1}\nabla_\theta \log\pith_{t_{n}}\right].\
			\end{equation}
   Thanks to this reformulation, we can decouple the NN gradient from the MDP dynamics that appear in the measure on the l.h.s. of \Cref{eq:PGthm}. This reformulation allows for Monte Carlo (MC) estimation of the gradient on samples of assets $S$ and actions $\pith$. 
    When the same NNs $\pith$ are chosen in each time step (like in our setting), this algorithm corresponds to the classic REINFORCE algorithm \citep{policy_gradient_learning}. 
\end{remark}

A general challenge of algorithms that try to learn an optimal strategy for long-term horizons is to
determine the long-term consequences of actions at time points far from maturity. 
Both deep hedging and the relaxed REINFORCE algorithm, weigh the gradient of all actions in an episode with the terminal reward earned from trading according to the actions in that sequence. Instead of valuing every single action with the same (noisy, when MC-estimated) weight $\E_{\pith}[|X^{\pith}_T|]$ (cp. \Cref{eq:PGthm}), we try to get a more accurate per-action value estimate for every single action's state-action value.

An abundance of algorithms has been proposed throughout the RL literature to improve this temporal credit assignment.
In this paper, we employ two such approaches, both of which aim to learn the optimal strategy solving problem \eqref{eq:RLObjective}: we parametrize the trading strategy as a single feed forward NN and train this strategy network a) taking a specific policy gradient (see \Cref{sec:PGA}) and b) following a standard advantage actor critic (A2C) approach (see \Cref{sec:A2C}).
For both algorithms, we consider the relaxed problem \eqref{eq:RLObjective}.

Both algorithms iterate between \emph{evaluating} (E) the current NN policy $\pith$ and \emph{updating} (U) its parameters. As such, they both are examples of \textit{generalized policy iteration} as termed in
\citet{policy_gradient_learning}.

\subsection{A Policy Gradient Algorithm}\label{sec:PGA}
In this section, we consider learning the optimal policy $\pi^*$ of \Cref{eq:RLObjective} via a specific policy gradient (PG) algorithm (see \citep{DegrisPilarskiSutton} for an overview of PG algorithms). In our version of this approach, \Cref{alg:policyGrad}, we proceed as follows. Going backward in time (code line 3), we 
\begin{itemize}
    \item[(E)] collect noisy training data for that time point $t$ (code lines 4-7). We generate a noisy training data point for time $t$ in \Cref{alg:dataGen} as follows. First, we simulate a state $(s_t,x_t)$ with the current NN strategy $\pith$ (code line 4). Then, we determine at this state $(s_t,x_t)$ a state-action value for each of the $2d$ possible actions (code lines 6-7), and determine the action $a_t^*$ that maximizes the current estimate of state-action value (code lines 8-10). The state-action value is determined as a MC estimate of expected (discounted) terminal reward, given the initial state and action, and when trading according to the current NN policy in subsequent steps (code lines 6-7). We repeat this evaluation process a certain number of times and collect these (noisy) training data $((s_t,x_t),a_t^*)$ in a training data set $D_t$ (code lines 5-7 in \Cref{alg:policyGrad}).
    \item[(U)] Then, we greedily update the current NN strategy at this state (code lines 8-14 in \Cref{alg:policyGrad}). With the training data $D_t$ from step (E), we minimize the total variation distance of $\pith_t(s_t,x_t)$ and $\delta_{a_t^*}$, a Dirac delta at the optimal action $a_t^*$.
\end{itemize}

\begin{algorithm}[ht]
    \caption{\texttt{PG\textunderscore 4PPO}}
    \label{alg:policyGrad}
\begin{algorithmic}[1]
\STATE{\textbf{input:} \texttt{dppt} \COMMENT{no.\ training points per time step}, $B$ \COMMENT{no.\ MC paths}, $\gamma$ \COMMENT{learning rate}, epochs \COMMENT{no.\ epochs per time step}, b\textunderscore sizes \COMMENT{batch sizes per time step}, $T$ \COMMENT{terminal time}, \text{market\textunderscore args}}
\STATE $\pith = \texttt{initialize\textunderscore NN}()$
\FOR{$t= T-1$ to $1$}
    \STATE\COMMENT{{\bf(E) }collect dppt[t] noisy training data at time $t$:}
    
    \FOR{$d=1$ to dppt[t]}
    \STATE 
    $D_t = \texttt{data\textunderscore gen}(\pith, B, t, T, \text{market\textunderscore args}) $
    \ENDFOR
    \STATE
    \COMMENT{{\bf(U)} update parameters of $\pith$ by minimizing total variation distance TV between $\pi^\theta$ and training data actions}
    \STATE batches = \texttt{split2batches}(data=$D_t$, batch\textunderscore size=b\textunderscore sizes[t])
    \FOR{$e=1$ to epochs[t]}
       \FOR{$B$ in batches}
        \STATE $\theta = \theta -\gamma\nabla\frac{1}{|B|}\sum_{((x_t,s_t),a_t)\in B}\text{TV}(\delta_{a_t^*},\pith(s_t,x_t))$
        \ENDFOR
        \ENDFOR
\ENDFOR
\end{algorithmic}
\end{algorithm}

\begin{algorithm}[ht]
    \caption{\texttt{data\textunderscore gen}}
    \label{alg:dataGen}
\begin{algorithmic}[1]
\STATE \textbf{input:} $\pith$ \COMMENT{current strategy network}, $B$ \COMMENT{no.\ MC paths}, $t$ \COMMENT{current time}, $T$ \COMMENT{terminal time}, \texttt{market\textunderscore args}
\STATE {terminal payoff $R:=0$}
\WHILE{$R == 0$}
   \STATE {$x_t, s_t$ = \texttt{sample\textunderscore state}$(t, \pith, \text{market\textunderscore args})$}
     \FOR{$a$ in $\Diamond$}
         \STATE{$x_{T}, s_{T} =$ \texttt{sample\textunderscore state}$(T, x_t,s_t,a, \pith, B, \text{market\textunderscore args})$}
        \STATE {$r = \texttt{mean}(|x_{T}|)$}
        \IF{$r>R$}
            \STATE {$R = r$, $a_t^* = a$}
        \ENDIF
     \ENDFOR
\ENDWHILE
\STATE \textbf{return:} $x_t,s_t,a_t^*$
\end{algorithmic}
\end{algorithm}

The algorithm, summarized in \Cref{alg:policyGrad}, shares elements with Monte Carlo Tree Search (expansion and simulation phases coincide, however, the selection of tree nodes is very specific (backward in time) and backpropagation affects all nodes simultaneously), and Least Square Monte Carlo (we use a recursive scheme backward in time, however, we do not estimate value functions). Moreover, it can be seen as a policy gradient, where we greedily update the strategy network in the direction of the optimal action, instead of an average direction weighted by estimated state-action values.
Furthermore, we discuss an alternative, probabilistic view on \Cref{alg:policyGrad} in the following \Cref{re:probabilisticInference}.
\begin{remark}[Probabilistic Inference]\label{re:probabilisticInference}
In spirit of \cite{levine2018reinforcement}, one can also view \Cref{alg:policyGrad} as performing probabilistic inference to approximate the optimal (in terms of \Cref{eq:RLObjective}) distribution $p_\tau$ of a trajectory $\tau=(S_0, X_0,a_0, S_1, X_1, a_1,\ldots,S_T,X_T)$, with
\begin{align*}
    p_\tau(A_{t_0},a_{t_1},\ldots,A_{t_N})=\Q(S_{t_0},X_{t_0} \in A_{t_0})&\prod_{n=1}^N \bigg(\delta_{a_{t_{n}}^*(S_{t_{n-1}},X_{t_{n-1}})}(a_{t_{n}})\\
    &\cdot\Q(S_{t_n},X_{t_n} \in A_{t_n} \mid S_{t_{n-1}},X_{t_{n-1}}a_{t_n})\bigg).
\end{align*}
Here, $a_{t_n}^*$ denotes the optimal action at time point $t_n$, in the sense that for $n\in\{1,\dots, N\},$
\begin{equation}\label{eq:optimalaction}
    a_{t_n}^*(s_{t_{n-1}},x_{t_{n-1}}):=\argmax_a\mathbb{E}_{p_\tau}\left[|X_{T}|\mid s_{t_{n-1}}, x_{t_{n-1}}, a_{t_n}=a\right].
\end{equation}

More specifically, we introduce the parametric distribution
\begin{align*}
    p^\theta_\tau(A_{t_0},a_{t_1},\ldots,A_{t_N})=\Q(S_{t_0},X_{t_0} \in A_{t_0})&\prod_{n=1}^N \bigg(\pith_{t_n}(S_{t_{n-1}},X_{t_{n-1}})(a_{t_{n}})\\
    &\cdot\Q(S_{t_n},X_{t_n} \in A_{t_n} \mid S_{t_{n-1}},X_{t_{n-1}},a_{t_n})\bigg),
\end{align*}
and choose the parameters $\theta$ to minimize the total variation distance
\begin{align*}
    D_{TV}(p_\tau,p^\theta_\tau)&=\mathbb{E}_{p^\theta_\tau}\left[\left|1-\frac{dp_\tau}{dp^\theta_\tau}\right|\right]\\
    &=\mathbb{E}_\Q\left[\left|\prod_{n=1}^N \pith_{t_n}(s_{t_{n-1}},x_{t_{n-1}})(a_{t_{n}})-\prod_{n=1}^N \delta_{a_{t_{n}}^*(s_{t_{n-1}},x_{t_{n-1}})}(a_{t_{n}})\right|\right]
\end{align*}
where with slight notational overload
\begin{equation}\label{eq:dtv}
    \mathbb{E}_\Q\left[\cdot\right]:=
    \int\sum_{a_{t_0}\in\Diamond}\dots\int(\cdot)\,d\Q(s_{t_N},x_{t_N}|x_{t_{n-1}},x_{t_{n-1}},a_{t_N})\dots \,d\Q(s_{t_0},x_{t_0}).
\end{equation}
\Cref{eq:dtv} attains its minimum at \begin{align*}
    \pith_{t_n}(s_{t_{n-1}},x_{t_{n-1}})(a_{t_{n}})=\delta_{a_{t_{n}}^*(s_{t_{n-1}},x_{t_{n-1}})}(a_{t_{n}}),\quad\forall n=1,\ldots N.
\end{align*} We hence
\begin{equation*}
    \min_\theta \mathbb{E}_\Q\left[\left| \pith_{t_n}(s_{t_{n-1}},x_{t_{n-1}})({a}_{t_{n}})- \delta_{\hat{a}_{t_{n}}^*(s_{t_{n-1}},x_{t_{n-1}})}(a_{t_{n}})\right|\right]
\end{equation*}
for approximate/noisy targets
\begin{equation}\label{eq:optimalactionestimate}
    \hat{a}_{t_n}^*(s_{t_{n-1}},x_{t_{n-1}}):=\argmax_a\frac{1}{|B|}\sum_{b\in B}\left|X^{\pith}_{T}(\omega_b\mid s_{t_{n-1}}, x_{t_{n-1}}, a_{t_n}=a)\right| , n=1,\dots, N,
\end{equation}
where for every $b$ in batch $B$, $X^{\pith}_{T}(\omega_b\mid s_{t_{n-1}}, x_{t_{n-1}}, a_{t_n}=a)$ denotes a sample of terminal portfolio value under $p^\theta_\tau$ conditioned on choosing action $a$ in state $s_{t_{n-1}}, x_{t_{n-1}}$ at time $t_{n-1}$.
We proceed backwards in time, i.e., for $n$ from $N$ to $1$, in order to train on targets $\hat{a}_{t_n}^*$
 with as little noise as possible. 
\end{remark}

The targets that \Cref{alg:policyGrad} obtains in the evaluation step (E) (code line 5) and then trains on in the updating step (U) (code line 7) are highly noisy estimates of truly optimal actions. In \Cref{re:noiseintargets}, we discuss the different types of noise that occur within this algorithm and describe how we handle them. 
    
\begin{remark}[Mitigate the Noise in Targets]\label{re:noiseintargets}
    The sources of noise in \Cref{alg:policyGrad} are multiple.
    \begin{enumerate}
        \item     First, the MC simulation that is used to estimate conditional expectations introduces noise. Typically, for a fixed strategy, MC errors can be controlled with a moderate sample size.
        \item A second source of noise is introduced by estimating continuation values, i.e., the conditional expectation in \Cref{eq:optimalactionestimate}, based on approximations $\pith$ of the optimal strategy $\pi^*$. Going backward in time is thus crucial to best as possible mitigate errors caused by sub-optimal continuation trades. 

    \end{enumerate}
        We employ several further regularization techniques to deal with noisy classification data. Especially for increasing dimensions, we introduce \emph{entropy regularization} to prevent producing over-confident predictions.\footnote{The entropy $H$ of a prediction $p\in\mathcal{P}(\Diamond)$ is defined as $H(p):=-\sum_{i=1}^{2d}p^i\log(p^i)$. Entropy is maximized for uniform distributions $p^i=1/2d$ for all $i=1,\ldots, 2d$. Entropy regularization then subtracts entropy as a regularization term to the objective of \Cref{eq:RLObjective} in order to favor action diversity.} Moreover, note that the \emph{total variation (TV) distance} between the NN's predicted optimal actions and the noisy optimal targets acts as regularizing loss function: in \Cref{fig:KL_vs_TV}, we see that compared to the popular \emph{Kullback Leibler (KL) divergence}, the TV distance does not punish as harshly network predictions far from the observed (noisy) probability. In this simple example in \Cref{fig:KL_vs_TV}, we consider classification with two actions $a$ and $b$, and $p$ estimates the probability of taking the first action $a$. We assume further that we have a training data point $x_{\text{tr}}=a$ that we transform to a one-hot encoded training data point $(p_{\text{tr}}, 1-p_{\text{tr}})=(1,0)$. Assume then that our estimate $p$ for the probability of action $a$ is close to zero, meaning it assigns a high probability to action $b$. When we do an update of our estimate $p$ at towards our observed training point $p_{\text{tr}}=1$ we clearly see in \Cref{fig:KL_vs_TV} that a gradient w.r.t.\ the TV loss is not as steep as one w.r.t.\ the KL divergence. Thus, if $b$ were the optimal action a gradient step would push less strongly towards the (wrongly observed) noisy target action $a$ with the TV loss as it would with the KL divergence.

        \begin{figure}[h!]
            \centering
            \includegraphics[scale=.5]{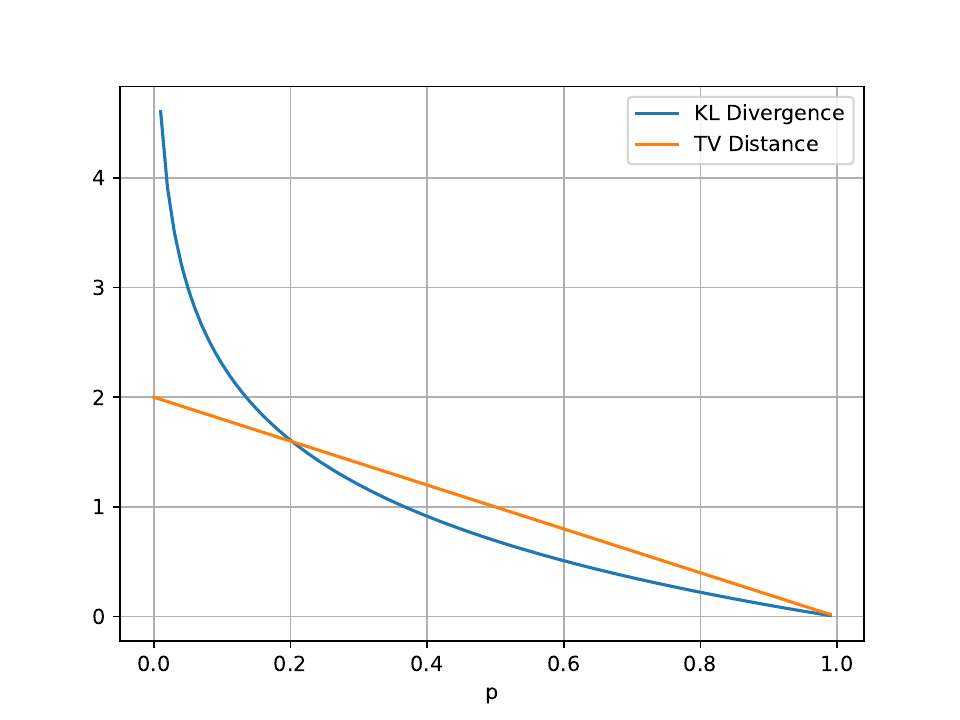}
            \caption{KL-divergence and TV distance for the observed probability $p_{\text{tr}}=1$, i.e., we plot d\textsubscript{KL}$((p,1-p),(1,0))$ and d\textsubscript{TV}$((p,1-p),(1,0))$.}
            \label{fig:KL_vs_TV}
        \end{figure}

\end{remark}

\begin{remark}[Beyond Markovian Dynamics]
Although presented for a Black-Scholes market, the policy gradient algorithm of \Cref{sec:PGA} can be applied to more general, and in particular to non-Markovian settings.
\end{remark}

\subsection{An A2C Approach}\label{sec:A2C}
In this section, we discuss an approach to learning the optimal strategy for our pricing problem \eqref{eq:RLObjective} that combines elements of both action-based and value-based approaches:
the popular \emph{advantage actor critic (A2C)} algorithm. 

In action-based policy gradient algorithms, we train the NN policy's parameters $\theta$ with a gradient method. 
Purely action-based algorithms (such as e.g. the original REINFORCE algorithm of \Cref{re:REINFORCE}
) 
are however known to suffer from high variance \citep{surveyA2C}. To reduce this variance, it is common practice to consider other unbiased estimates of the gradient that frequently depend on the state value function \citep{VarianceReductionA2C}.
This value function can itself be approximated by a NN and is trained to function as critic in the training of strategy networks.  

In A2C, the unbiased estimate of the value function is given by the \emph{advantage function} (see \Cref{eq:advantagefunction}) that tells how much value can be gained or lost in one step by choosing a specific action. In the following
\Cref{le:AdvantagepolicyGrad}, we state the unbiased gradient for updating the policy in the A2C algorithm for our objective \eqref{eq:RLObjective}. 

\begin{lemma}[Advantage Policy Gradient for Continuous State Space]\label{le:AdvantagepolicyGrad} Consider the relaxed pricing problem \eqref{eq:RLObjective}. It holds that
\begin{equation}\label{eq:AdvantagepolicyGrad}
    \nabla_\theta \mathbb{E}_{\pthtau}\left[|X^{\pith}_T|\right]= \mathbb{E}_{\pthtau}\left[\sum_{n=0}^{N-1}{\nabla_\theta \log\pith(t_{n+1}, S_{t_{n}}, X^{\pith}_{t_{n}})(\atn)}\,A_{t_{n}}(S_{t_{n}}, X_{t_{n}}, \atn)\right],
\end{equation}
with advantage function \begin{equation}\label{eq:advantagefunction}A_{t_{n}}(s,x,a):=\mathbb{E}_{\pthtau}\left[|X^{\pith}_T|\mid S_{t_{n}}=s, X_{t_{n}}^{ \pith}=x, a_{t_{n}}=a\right]-V_{t_{n}}(s,x),\end{equation}
and value function
\begin{equation}\label{eq:valuefunction}
    V_{t_{n}}(s,x):=\mathbb{E}_{\pthtau}\left[|X^{\pith}_T|\mid S_{t_{n}}=s, X_{t_{n}}^{ \pith}=x\right].
\end{equation}
Here $\pthtau$ denotes the path measure on $\left\{\R^{d+1}\times\Diamond\right\}^N$ where $S$ and $X$ evolve according to $\Q$ and when trading with strategy $\pith$ (see \Cref{re:probabilisticInference}).
\end{lemma}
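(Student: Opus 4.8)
The plan is to obtain \eqref{eq:AdvantagepolicyGrad} in two steps, starting from the plain score-function gradient \eqref{eq:PGthm} already recorded in \Cref{re:REINFORCE}, and then sharpening the constant weight $|X^{\pith}_T|$ multiplying each score term into the advantage $A_{t_n}$. The structural fact driving everything is the factorization of the path measure $\pthtau$ from \Cref{re:probabilisticInference}: only the policy factors $\pith_{t_{n+1}}(S_{t_n}, X_{t_n})(a_{t_n})$ depend on $\theta$, whereas the market/portfolio transition kernels $\Q(\,\cdot\mid S_{t_n}, X_{t_n}, a_{t_n})$ are $\theta$-free once the action is conditioned upon. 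Hence $\nabla_\theta \log \pthtau = \sum_{n=0}^{N-1}\nabla_\theta \log \pith_{t_{n+1}}(S_{t_n}, X_{t_n})(a_{t_n})$, and the log-derivative trick $\nabla_\theta \mathbb{E}_{\pthtau}[\,|X^{\pith}_T|\,] = \mathbb{E}_{\pthtau}[\,|X^{\pith}_T|\,\nabla_\theta \log \pthtau\,]$ reproduces \eqref{eq:PGthm}.

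Second, I would replace $|X^{\pith}_T|$ by $A_{t_n}$ in each summand using the tower property together with a baseline-subtraction argument, conditioning on $(S_{t_n}, X_{t_n}, a_{t_n})$. Two observations do the work. Since the $n$-th score is $\sigma(S_{t_n}, X_{t_n}, a_{t_n})$-measurable and the only reward is terminal, conditioning on $(S_{t_n}, X_{t_n}, a_{t_n})$ and applying the tower property replaces $|X^{\pith}_T|$ by the state-action value $\mathbb{E}_{\pthtau}[\,|X^{\pith}_T|\mid S_{t_n}=s, X_{t_n}=x, a_{t_n}=a\,]$, i.e. the first term of \eqref{eq:advantagefunction}. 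Moreover, for any function $b=b(S_{t_n}, X_{t_n})$ independent of the action (writing $\pith(a)$ for the policy probability of $a$ at the fixed state),
\begin{equation*}
\mathbb{E}_{\pthtau}\!\left[\nabla_\theta \log \pith_{t_{n+1}}(S_{t_n}, X_{t_n})(a_{t_n})\, b(S_{t_n}, X_{t_n})\,\Big|\,S_{t_n}, X_{t_n}\right]= b\sum_{a\in\Diamond}\pith(a)\,\nabla_\theta\log\pith(a)= b\,\nabla_\theta\!\sum_{a\in\Diamond}\pith(a)=0,
\end{equation*}
because $\Diamond$ is finite and $\pith$ is a probability vector. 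Choosing $b=V_{t_n}$ of \eqref{eq:valuefunction} subtracts the baseline and turns the state-action value into $A_{t_n}=Q-V$, which gives \eqref{eq:AdvantagepolicyGrad}.

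The genuinely delicate point, and the reason the lemma is stated for \emph{continuous} state space, is the measure-theoretic bookkeeping behind these manipulations. First, one must justify interchanging $\nabla_\theta$ with the integral defining $\mathbb{E}_{\pthtau}$; since $\pith$ is smooth in $\theta$ with values in the interior simplex (via a softmax read-out), the actions lie in the bounded set $\Diamond$, and $S$ has finite moments under $\Q$, the integrand and its $\theta$-gradient are dominated locally uniformly in $\theta$, so dominated convergence applies. Second, and more subtly, one must ensure that differentiating $\pthtau$ produces \emph{no} extra terms from the $\theta$-dependence of the portfolio process $X^{\pith}$ itself. This is exactly what the representation of \Cref{re:probabilisticInference} guarantees: in $\pthtau$ the dependence of $X^{\pith}$ on $\theta$ is entirely mediated through the sampled actions, so conditioning on the actions freezes $X$ and the $\theta$-derivative lands only on the finitely many policy factors. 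I expect this decoupling to be the main obstacle to state cleanly, rather than the algebra of the baseline. Finally, the baseline identity above is a pointwise-in-state sum over the finite set $\Diamond$, so it survives integration over the continuous state $(S,X)$ unchanged — this is what lets the familiar finite-action argument carry over verbatim to the continuous-state setting.
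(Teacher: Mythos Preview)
Your argument is correct and follows the standard route for deriving the advantage policy gradient in a finite-horizon, terminal-reward setting: the log-derivative (score-function) identity, then the tower property to replace the terminal reward by the state--action value, then the zero-mean baseline subtraction with $b=V_{t_n}$. The paper itself gives no details at all: its proof consists of a single sentence pointing to the original infinite-horizon policy-gradient reference. Your proposal therefore goes well beyond what the paper provides, and your explicit attention to the interchange of $\nabla_\theta$ with the expectation and to the $\theta$-decoupling afforded by the path-measure factorization in \Cref{re:probabilisticInference} addresses points the paper leaves entirely implicit.
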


\begin{proof}
For the original proof in an infinite horizon setting see \citep{policy_gradient_learning}.
\end{proof}
Algorithms that make use of a gradient as in \Cref{eq:AdvantagepolicyGrad} to update the NN policy's parameters are termed advantage actor critic (A2C).
In A2C with function approximation, the state value function \eqref{eq:valuefunction} appearing in the advantage function of \Cref{le:AdvantagepolicyGrad} is parametrized by a NN $V^\phi$, i.e., we model
\begin{align*}
    A_{t_{n}}^\phi(s,x,a):=\mathbb{E}_{\pthtau}\left[|X^{\pith}_T|\mid S_{t_{n}}=s, X_{t_{n}}^{ \pith}=x, a_{t_{n+1}}=a\right]-V_{t_{n}}^\phi(s,x),
\end{align*}
with $\Vph:\R_+\times\mathcal{X}\to \R$ and $\Vph_{{t_{n}}}(\cdot)=\Vph({t_{n}},\cdot)$ and parameters $\phi$.
To learn both $\Vph$ and $\pith$, we then sample $B$ trajectories $(s_0^\omega,a_0^\omega,_0^\omega,\ldots,s_T^\omega,x_T^\omega)$ according to $\pthtau$ based on the current NN strategy and iteratively
\begin{itemize}
    \item[(E)]\begin{align*}
    \min_\phi \frac{1}{B}\sum_\omega\sum_{n=0}^N ( A_{t_{n}}^\phi(s_{t_{n}}^\omega,x_{t_{n}}^\omega,a_{t_{n+1}}^\omega))^2 
\end{align*}
    \item[(U)] \begin{align*}
    \min_\theta -\frac{1}{B}\sum_\omega\sum_{n=0}^N \left[\sum_{a\in\Diamond}{\log\pith_{t_{n+1}}(s_{t_{n}}^\omega, x_{t_{n}}^\omega)(a_{t_{n+1}}^\omega)}\,A_{t_{n}}^\phi(s_{t_{n}}^\omega,x_{t_{n}}^\omega,a)\right].
\end{align*}
\end{itemize}

As a side benefit, this algorithm yields estimates $\Vph$ of the value process, (i.e., the state value functions $V_{t_n}, n=0,\ldots,N$) for pricing the passport option. Thus, prices for the passport option can be conveniently estimated by evaluating the state value network $\Vph$ at time $t=0$, instead of computing an additional MC estimate based on the estimated optimal strategy $\pi^{\theta^*}$.

A detailed description of the A2C algorithm for pricing the passport option is given in \Cref{alg:A2C,alg:A2C_forward}. It utilizes a financial market environment (\texttt{market\textunderscore env}) in which asset prices and the agent's portfolio value evolve. After initializing strategy and value networks in code lines 2 and 3 of \Cref{alg:A2C}, a virtual agent then trades in the market until terminal time, based on the current strategy network $\pith$ and collects terminal reward, value functions and log-probabilities for each of the $|B|$ paths (\Cref{alg:A2C_forward}, called in code line 6 of \Cref{alg:A2C}). Based on these, MC estimates of the advantage functions, of the policy gradient, and of the gradient for updating the strategy network are computed (code lines 7-11). These gradients are then used to update both the strategy and the value network's parameters $\theta$ and $\phi$ (code lines 12,13). 
Moreover, we use entropy regularization in the training step (U) to keep the NN strategy closer to a uniform encouraging exploration (following, e.g., \citep{mnih2016asynchronous}).

A number of libraries have been created that implement such financial market environments\footnote{See, e.g., \url{http://finrl.org/} or the recent project \url{https://github.com/PawPol/PyPortOpt} from researchers at Stony Brook University.}. Many of them also include pipelines for popular RL algorithms. For the experiments in this paper, i.e., for \Cref{sec:experiments}, we built our own software.\footnote{See \url{https://github.com/HannaSW/ML4PassportOptions} for the corresponding code.}

\begin{algorithm}[ht]
    \caption{\texttt{A2C\textunderscore 4PPO}}
    \label{alg:A2C}
\begin{algorithmic}[1]
\STATE \textbf{input:} {market\textunderscore env} \COMMENT{market environment}, {niter} \COMMENT{number of iterations}, $B$ \COMMENT{number of paths per iteration}, $\tau$ \COMMENT{regularization parameter}, $\gamma$ \COMMENT{discount factor}
\STATE $\pi^{\theta_0} = \texttt{initialize\textunderscore NN\textunderscore actor}()$
\STATE $V^{\phi_0} = \texttt{initialize\textunderscore NN\textunderscore critic}()$
\FOR{$k=0$ to niter-1}
\STATE $s_0,x_0 = $ \texttt{market\textunderscore env.reset}()
\STATE {critics}, {log\textunderscore pis}, $e, x_T$ = \texttt{forward}({market\textunderscore env}, $\pi^{\theta_k}, V^{\phi_k}, s_0, x_0, \gamma$) \COMMENT{(E) step}
\FOR{$t=1\ldots,T$}
    \STATE $A_t = \gamma^{T-t}|x_T| -$ {critics}$[t]$ \COMMENT{compute advantages}
\ENDFOR
    \STATE actor loss $=-\frac{1}{B}\sum\left( \frac{1}{T}\sum_{t=0}^T\texttt{log\textunderscore pis}[t]A_t-\tau e\right)$
    \STATE critic loss $=\frac{1}{B}\sum \frac{1}{T}\sum_{t=0}^T\left(A_t\right)^2 $
    \STATE$\pi^{\theta_{k+1}} = \texttt{train\textunderscore NN}$($\theta_k$, actor loss) \COMMENT{(U) step}
    \STATE$V^{\phi_{k+1}} = \texttt{train\textunderscore NN}$($\phi_k$, critic loss)
\ENDFOR
\end{algorithmic}
\end{algorithm}

\begin{algorithm}[ht]
    \caption{\texttt{forward}}
    \label{alg:A2C_forward}
\begin{algorithmic}[1]
\STATE \textbf{input:} market\textunderscore env, $\pi^{\theta}, V^{\phi}, s_0, x_0, \gamma$ \COMMENT{discount factor}
\STATE $e=0$ \COMMENT{entropy regularization term}
\STATE critics $ = []$
\STATE log\textunderscore pis $= []$
\FOR{$t= 0$ to $T-1$}
    \STATE sample action $a_t \sim \pith(s_t, x_t)$
    \STATE $s_{t+1}, x_{t+1} = $ market\textunderscore env.\texttt{step}$(s_t, x_t, a_t)$
    \STATE critics $=$ critics $\cup\, V^\phi(s_{t},x_t)$ \COMMENT{collect critics}
    \STATE log\textunderscore pis $=$ log\textunderscore pis $\cup\log\pith(s_t,x_t)(a_t))$ \COMMENT{collect log-pis}
    \STATE $e = e+ \gamma\sum_a\pith(s_t)(a)\log\pith(s_t,x_t)(a)$ \COMMENT{update entropy}
\ENDFOR
    \STATE critics $=$ critics $\cup\, V^\phi(s_{T},x_T)$
\STATE \textbf{return:} critics, log\textunderscore pis, $e, x_T$
\end{algorithmic}
\end{algorithm}

\subsubsection{The Challenge of Fine Time Grids}\label{subsubsec:FinerTimegrids}
Ideally, we would like to select a fine discrete-time grid in order to best as possible  approximate a continuous time solution with DL.
However, algorithms involving a policy gradient suffer from variance explosion for vanishing time steps \citep{JMLR:v7:munos06b,NEURIPS2021_024677ef}.

In particular, we show in \Cref{le:VarianceA2C} below that even the variance of the gradient (\Cref{eq:AdvantagepolicyGrad}) in the \emph{variance-reduced} A2C algorithm scales at least linearly with the number of time steps. We show this by giving a formal lower bound on the trace of the variance of the A2C gradient \Cref{eq:AdvantagepolicyGrad} that is linear in the number of time steps $N$ (see the proof of \Cref{le:VarianceA2C}). 

\begin{lemma}\label{le:VarianceA2C}
    If $\pith$ has high entropy, i.e., if there is some $\epsilon>0$ s.t. $|\pith(a)-0.5|<\epsilon)$ for all $a\in\Diamond$, then, the variance of the advantage policy gradient from \Cref{eq:AdvantagepolicyGrad}
    \begin{align*}
        C(N):=\V_{\pthtau}\left[\sum_{n=0}^{N-1}{\nabla_\theta \log\pith(t_{n+1}, S_{t_{n}}, X^{\pith}_{t_{n}})(\atn)}\,A_{t_{n}}(S_{t_{n}}, X_{t_{n}}, \atn)\right]
    \end{align*}
    scales at least linearly in the number of time steps $N$ in the time discretization, i.e., $C(N)\in\Omega(N)$ in big-omega notation.
\end{lemma}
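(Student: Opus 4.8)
The plan is to reduce the variance of the full gradient to the variance of an \emph{accumulated score}, and then to exploit that, under the high-entropy assumption, each per-step score contributes a fixed, $N$-independent amount of variance. Concretely, I would work with the estimator underlying \eqref{eq:AdvantagepolicyGrad} in the single-path Monte-Carlo form actually used in \Cref{alg:A2C}, i.e.\ with the advantage $A_{t_n}(S_{t_n},X_{t_n},\atn)$ realized as $|X^{\pith}_T|-V_{t_n}(S_{t_n},X_{t_n})$. Writing $g_n:=\nabla_\theta\log\pith(t_{n+1},S_{t_n},X^{\pith}_{t_n})(\atn)$ for the score, the gradient splits as
\[
  G \;=\; |X^{\pith}_T|\sum_{n=0}^{N-1} g_n \;-\;\sum_{n=0}^{N-1} g_n\,V_{t_n}(S_{t_n},X_{t_n}),
\]
so that the dominant object is the accumulated score $\mathcal G:=\sum_{n=0}^{N-1}g_n$. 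The whole argument then rests on showing $\mathrm{tr}\,\V_{\pthtau}[\mathcal G]=\Omega(N)$ and transferring this to $C(N)=\mathrm{tr}\,\V_{\pthtau}[G]$.

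First I would establish the martingale-difference structure of the scores. Along the natural filtration in which states and actions are revealed sequentially under $\pthtau$, the baseline identity $\sum_{a\in\Diamond}\nabla_\theta\pith(\cdot)(a)=\nabla_\theta\big(\sum_a\pith(\cdot)(a)\big)=0$ gives $\E_{\pthtau}[g_n\mid\mathcal F_n^-]=0$, where $\mathcal F_n^-$ is the information just before action $\atn$ is sampled. Hence the $g_n$ are vector-valued martingale differences, their cross-covariances vanish, and $\mathrm{tr}\,\V_{\pthtau}[\mathcal G]=\sum_{n=0}^{N-1}\E_{\pthtau}\|g_n\|^2$. Second, I would prove a uniform per-step lower bound $\E_{\pthtau}\|g_n\|^2\ge c_\epsilon>0$ independent of $N$: the high-entropy assumption keeps every action probability within $\epsilon$ of $\tfrac12$, hence bounded away from the boundary of the simplex, so that (for a non-degenerate parametrization) the score is well defined, bounded, and its conditional variance $\E_{\pthtau}[\sum_a\pith(\cdot)(a)\|\nabla_\theta\log\pith(\cdot)(a)\|^2\mid\mathcal F_n^-]$ is bounded below. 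Summing the $N$ terms yields $\mathrm{tr}\,\V_{\pthtau}[\mathcal G]\ge c_\epsilon N$.

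Finally I would transfer the lower bound from $\mathcal G$ to $G$ via a Doob decomposition of $G$ along the same sequential filtration. Since the increments attached to distinct revelation steps are $L^2$-orthogonal, $\mathrm{tr}\,\V_{\pthtau}[G]$ is at least the sum over $n$ of the squared norms of the \emph{action-revelation} increments $\mathbb E_{\pthtau}[G\mid\mathcal F_n^+]-\mathbb E_{\pthtau}[G\mid\mathcal F_n^-]$, and I would lower bound each by isolating the contribution of the freshly revealed score $g_n$ weighted by the terminal reward. The main obstacle is exactly this transfer: all summands of $G$ share the single terminal reward $|X^{\pith}_T|$, and the baseline term $\sum_n g_nV_{t_n}$ is itself correlated with $\mathcal G$, so one must rule out a conspiratorial cancellation that would collapse the $\Omega(N)$ order. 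I expect to control this by conditioning on the terminal reward and on the state path, using that on a fixed-horizon grid $|X^{\pith}_T|$ has non-vanishing probability of being bounded away from $0$, so that a constant fraction of the action-revelation increments retain second moment $\Omega(1)$; the baseline contribution is then a lower-order correction by Cauchy--Schwarz since each $V_{t_n}$ is bounded (being a conditional expectation of the $1$-Lipschitz reward). This gives $C(N)\ge c_\epsilon' N$ for an $N$-independent $c_\epsilon'>0$, i.e.\ $C(N)\in\Omega(N)$.
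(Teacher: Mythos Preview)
Your proposal diverges from the paper in two substantive ways, and the first of them is a genuine gap.

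\textbf{You change the quantity being bounded.} The lemma defines $C(N)$ with the \emph{exact} advantage $A_{t_n}(s,x,a)=\mathbb{E}_{\pthtau}[|X_T|\mid S_{t_n}=s,X_{t_n}=x,a_{t_n}=a]-V_{t_n}(s,x)$, which is a deterministic function of $(S_{t_n},X_{t_n},\atn)$. Your very first step replaces it by the single-path surrogate $|X^{\pith}_T|-V_{t_n}(S_{t_n},X_{t_n})$. Only after this substitution does your decomposition $G=|X^{\pith}_T|\,\mathcal G-\sum_n g_nV_{t_n}$ make sense; for the lemma's $C(N)$ it is simply false, since $A_{t_n}$ is not of the form $|X^{\pith}_T|-V_{t_n}$. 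So the object whose variance you bound is not $C(N)$.

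\textbf{Even for your surrogate, the transfer from $\mathcal G$ to $G$ is not carried out.} You correctly identify the obstacle (a common terminal reward coupling all summands and a correlated baseline term), but the proposed fix is only a plan: the action-revelation increment $\mathbb E[G\mid\mathcal F_n^{+}]-\mathbb E[G\mid\mathcal F_n^{-}]$ is not $g_n$ times a reward weight; revealing $\atn$ also shifts the conditional expectations of $|X^{\pith}_T|$ and of all later scores $g_m$, $m>n$, and nothing in your sketch rules out that these shifts cancel the $g_n$-contribution to order $1$. Saying you ``expect to control this by conditioning'' and invoking Cauchy--Schwarz does not supply the missing mechanism.

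\textbf{How the paper avoids both issues.} The paper never decomposes into $|X^{\pith}_T|\,\mathcal G$ minus a baseline and never bounds the full vector score. Instead it picks a single scalar coordinate of $\theta$ --- the last-layer softmax bias $b$ at the $e_1$-output --- and computes the score explicitly:
\[
\nabla_b\log\pith(\atn)=\bigl(1-\pith(e_1)\bigr)\,\1_{\{\atn=e_1\}}.
\]
This reduces each summand to a Bernoulli-type variable times the exact $A_{t_n}(\cdot,e_1)$. The paper then applies the law of total variance, conditioning on the asset path $S_0,\ldots,S_T$, argues conditional independence of the per-step terms, and bounds each conditional variance below by $(0.5-\epsilon)^4\,\mathbb E_S[A_{t_n}(\cdot,e_1)^2]$ using the high-entropy assumption. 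Summing $N$ such terms gives $\Omega(N)$ directly. The key simplification you miss is that one well-chosen scalar component of $\nabla_\theta\log\pith$ already has enough structure (it vanishes unless $\atn=e_1$) to make the variance lower bound elementary, with no need to separate reward and baseline or to argue away cancellations.
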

\begin{proof}
    As in \citep{NEURIPS2021_024677ef}, we bound by below the trace of the covariance matrix $C(N)$ by considering the first entry of the gradient from \Cref{eq:AdvantagepolicyGrad}, i.e., the derivative $\nabla_b$ w.r.t. the last layer bias $b$ of the strategy network $\pith$ at output $e_1\in\Diamond$.
    Let $\tau=(s_0, x_0,a_0, s_1, x_1, a_1,\ldots,s_T,x_T)$ be a sampled trajectory from distribution $\pthtau$. We denote by $\NNtha$ the pre-activated output (i.e., before applying the softmax activation) of $\pith$ at the component corresponding to $a$. Then we have (writing $\pith$ for $\pith(t_{n+1}, s_{t_{n}}, x^{\pith}_{t_{n}})$ to ease notation)
    \begin{align*}
        \nabla_b \log(\pith)(\atn)&=\frac{\nabla_b\pith(\atn)}{\pith(\atn)}\\
        &=\frac{1}{\pith(\atn)}\frac{\nabla_b\NNthatn e^{\NNthatn}\sum_a e^{\NNtha}-\nabla_b\NNthatn e^{\NNthatn}}{\left(\sum_a e^{\NNtha}\right)^2}\\
        &=\frac{1}{\pith(\atn)}\pith(\atn)\nabla_b\NNthatn\frac{\sum_a e^{\NNtha}-e^{\NNthatn}}{\sum_a e^{\NNtha}}\\
        &=\left\{\begin{matrix}
            1-\pith(\atn), & \atn=e_1\\
            0, & \atn\neq e_1.
        \end{matrix},\right.\\
         &=\left\{\begin{matrix}
            1-\pith(e_1), & \text{ with probability } \pith(e_1)\\
            0, & \text{ with probability } 1-\pith(e_1).
        \end{matrix}\right.
    \end{align*}
    Thus, $\nabla_b\log(\pith)
    (\atn)$ is a discrete random variable that attains with probability $\ptn:=\pith(e_1)$ the positive value $ 1-\ptn$.
    With this, we get that
    \begin{align*}
        \mathrm{trace}(C)(N)&\ge \V_{\pthtau}\left[\sum_{n=0}^{N-1}{\nabla_b \log\pith(t_{n+1}, S_{t_{n}}, X^{\pith}_{t_{n}})(\atn)}\,A_{t_{n}}(S_{t_{n}}, X_{t_{n}}, \atn)\right]\\
        &=\V_{\pthtau}\left[\sum_{n=0}^{N-1}{\1_{\{\atn=e_1\}}} (1-\ptn)\,A_{t_{n}}(S_{t_{n}}, X_{t_{n}}, \atn)\right]\\
        &\ge\E_{S}\left[\V_{\pith}\left[\sum_{n=0}^{N-1}{\1_{\{\atn=e_1\}}} (1-\ptn)\,A_{t_{n}}(S_{t_{n}}, X_{t_{n}}, \atn)\mid S_T,\ldots,S_0\right]\right],
    \end{align*}
    where the last inequality follows by the law of total variance.
    Furthermore, conditioned on $S_T,\ldots,S_0$, the random variables ${\1_{\{\atn=e_1\}}} (1-\ptn)\,A_{t_{n}}(S_{t_{n}}, X_{t_{n}}, \atn)$ are independent and we get
    \begin{align*}
         \mathrm{trace}(C)(N) &\ge\E_{S}\left[\sum_{n=0}^{N-1}\V_{\pith}\left[{\1_{\{\atn=e_1\}}} (1-\ptn)\,A_{t_{n}}(S_{t_{n}}, X_{t_{n}}, \atn)\mid S_T,\ldots,S_0\right]\right]\\
         &=\E_{S}\left[\sum_{n=0}^{N-1}(1-\ptn)^2A_{t_{n}}(S_{t_{n}}, X_{t_{n}}, e_1)^2(1-\ptn)\ptn\right]\\
         &\ge\sum_{n=0}^{N-1} (0.5-\epsilon)^4\underbrace{\E_S\left[A_{t_{n}}(S_{t_{n}}, X_{t_{n}}, e_1)^2\right]}_{=:A>0}\\
         &=N(0.5-\epsilon)^4A\in\mathcal{O}(N).
    \end{align*}
\end{proof}

\begin{figure}[h!]
\makebox[\textwidth][c]{%
 \subfloat[\label{subfig:gradientevol}]{%
		\centering\includegraphics[scale=0.1]{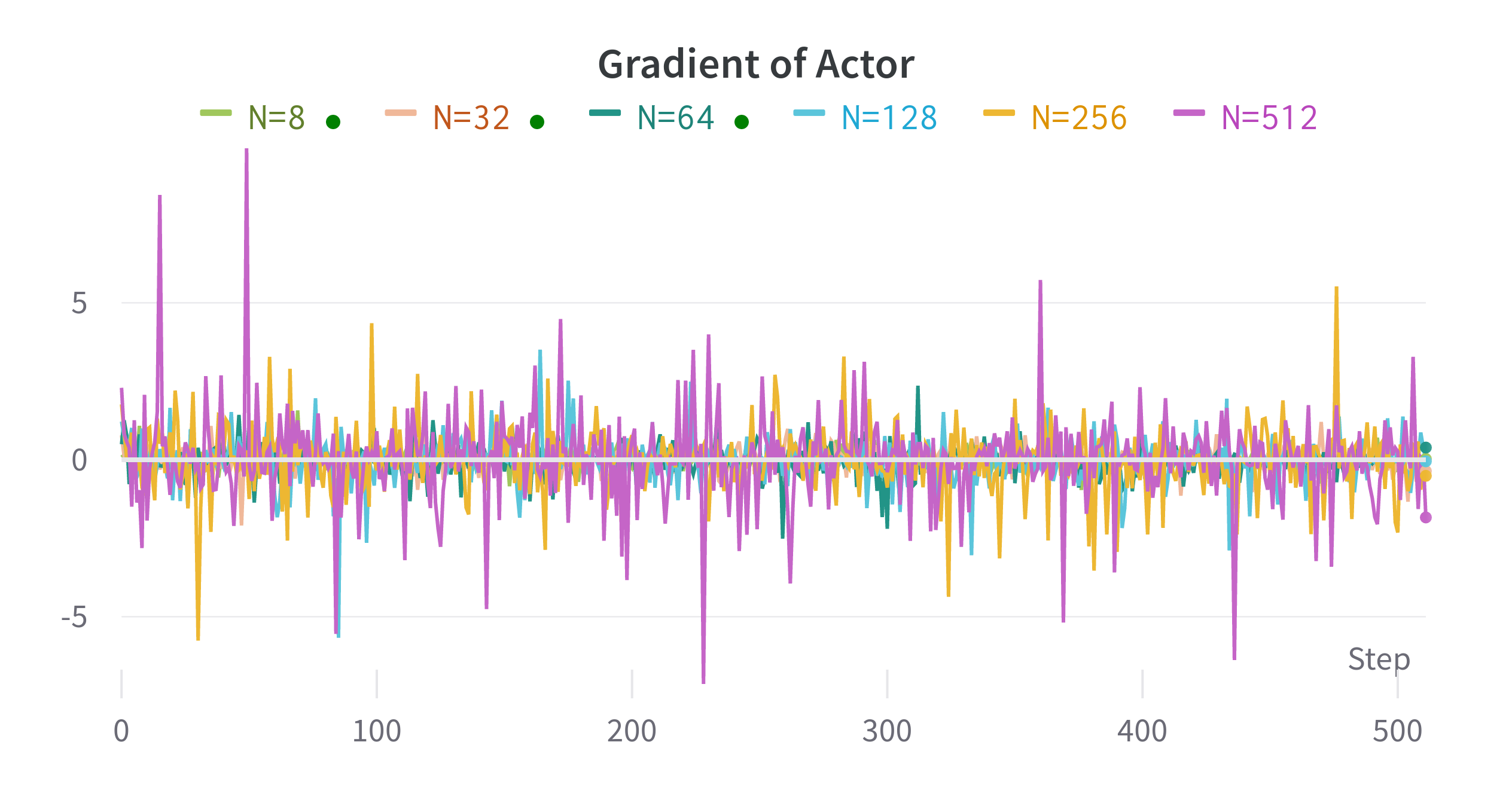}
		}%
  }
  \makebox[\textwidth][c]{%
    \subfloat[\label{subfig:linearregression}]{%
    \includegraphics[scale=0.1]{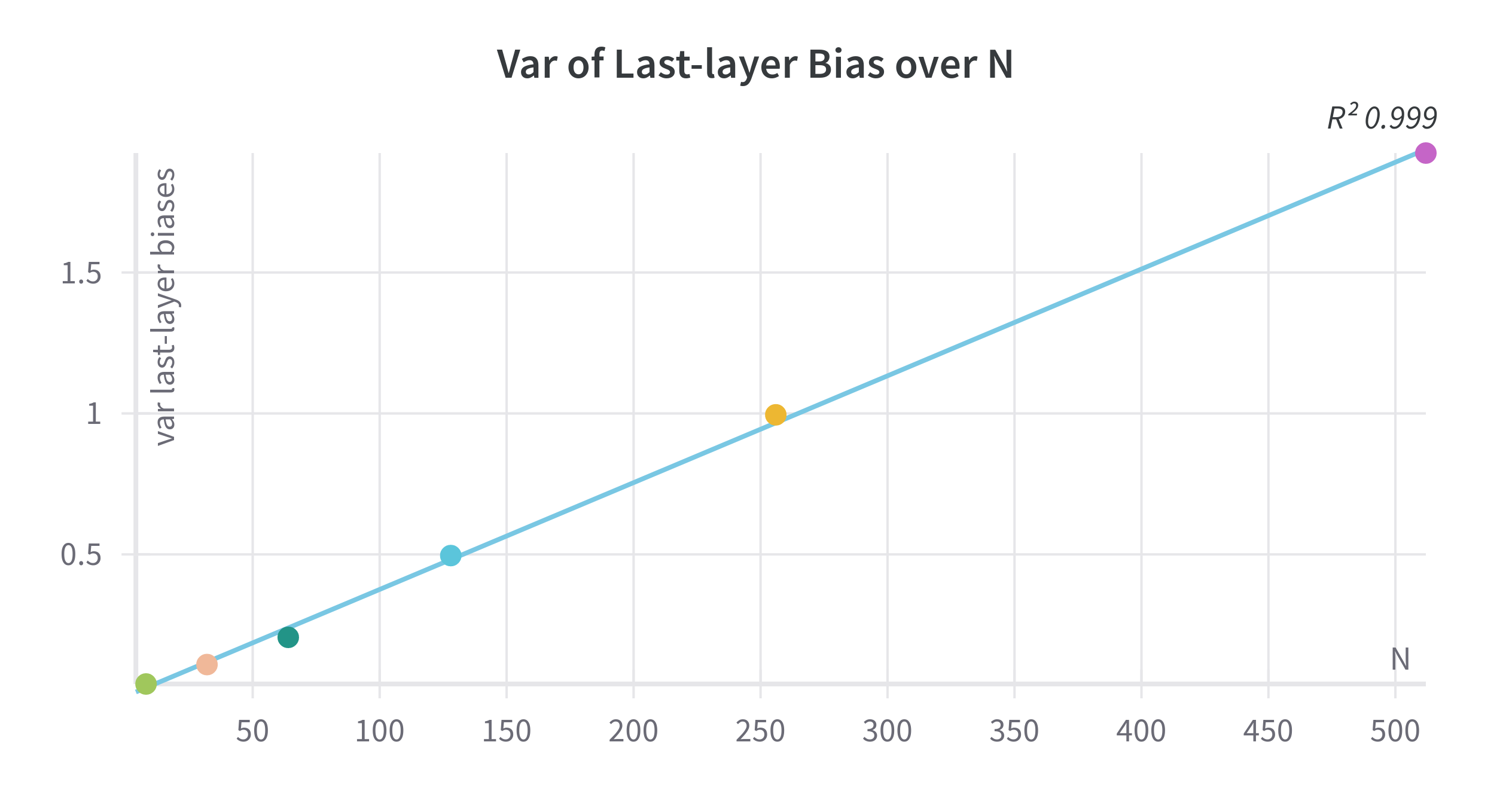}
	
   }
}
    \caption{(a) Gradient of actor network w.r.t. last-layer bias for different time discretizations $N$. (b) Standard deviations for these gradients across time steps over time discretizations $N$.}
    \label{fig:gradientevolution}
\end{figure}
\begin{remark}
     In \Cref{le:VarianceA2C}, we proved a lower bound on the gradient's variance for NNs that output distributions with high entropy across the input space. Such distributions do generally occur during the training process. First, for standard parameter initializations, strategy networks start with high entropy. Moreover, entropy regularization very often is included to keep the network from producing over-confident predictions too fast. Thus also during the training, there is a tendency for strategy networks to stay at high entropy.  
\end{remark}

In the proof of \Cref{le:VarianceA2C}, we show that the A2C gradient with respect to the last-layer bias grows linearly in the number of time steps $N$. We also observed this fact in our experiments as \Cref{fig:gradientevolution} illustrates. In this experiment, we tracked the gradient of the strategy network $\pith$ w.r.t. one terminal-layer bias (i.e., the gradient considered in the proof of \Cref{le:VarianceA2C}) during 512 iterations of the A2C \Cref{alg:A2C} (for $|B|=1$ path), for a varying number of time steps $N$. 
\Cref{subfig:gradientevol} shows the evolution of these gradients over training iterations for $N=2^k, k\in\{3,5-9\}$. We observe that large spikes in gradients become more frequent with decreasing step size $1/N$. \Cref{subfig:linearregression} confirms the linear fit through sample estimates of variances of these gradients over the 512 training steps.

Besides introducing higher variance, a finer time grid worsens exploration and increases the issue of temporal credit assignment \citep{NEURIPS2021_024677ef} (even for the A2C approach). 




\section{Experiments}\label{sec:experiments}
In this section, we experimentally evaluate our proposed algorithms PG 
(\Cref{alg:policyGrad,alg:dataGen}) and A2C (\Cref{alg:A2C}) for pricing passport options in one- and two-dimensional BS markets. \Cref{subsec:1Dexperiments} and treats the one-dimensional case, where we show that both algorithms recover the well-known solution of \Cref{eq:1Dsolution}. In \Cref{subsec:MDexperiments} we test the algorithms on the case of two uncorrelated assets and find that, also in this setting, they recover the solution derived in \Cref{thm:optimalStrat} in \Cref{subsubsec:2Duncorrelated}. We then move to grounds where no solution is known so far and analyze the strategies learned by our proposed PG and A2C algorithms of \Cref{sec:MLApproaches} in BS markets with two correlated assets (\Cref{subsubsec:2Dcorrelated}).
Detailed configurations of the experiments conducted in this section and the code used to run them can be found under \url{https://github.com/HannaSW/ML4PassportOptions}.
\subsection{The 1d Case}\label{subsec:1Dexperiments}
In a BS market consisting of $d=1$ risky assets, pricing (and hedging) the passport option is well understood. In this case, the optimal trading strategy, i.e., the solution to problem \eqref{eq:RLObjective}, is to go short when ahead (when the portfolio value $x>0$), and long when behind (when the portfolio value $x<0$), i.e., formally,
\begin{align}\label{eq:1Dsolution}
    q^*(t, s, x) = -\sign(x),
\end{align}
for all $(t,s,x)\in\R_+\times\R_+\times\R$. In particular, the strategy is independent of time $t$ and the asset's value $s$. 
Next, we investigate how the PG and A2C algorithms proposed in \Cref{sec:PGA,sec:A2C} perform in this well-known single-asset market environment.

\begin{figure}[t]
\makebox[\textwidth][c]{%
		\includegraphics[scale=0.4]{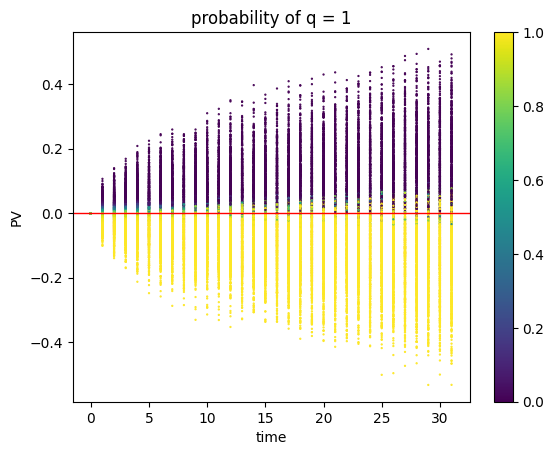}
		\includegraphics[scale=0.4]{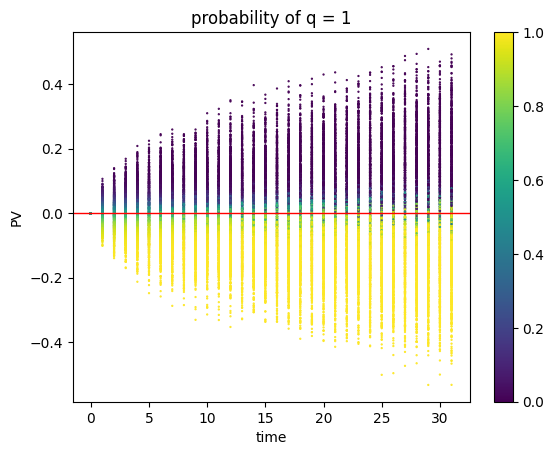}

}
    \caption{Evolution of portfolio values (PV) $X^{\pith}_t$ over time $t$ until maturity $T=32$ for actions taken according to $\pith$ trained with PG (\Cref{alg:policyGrad}) (left) and A2C (\Cref{alg:A2C}) (right) respectively, over a test set of 1000 asset paths. The probability that the respective network $\pith$ assigns to taking action $q=1$ is shown in color.}
    \label{fig:1Dstrats}
\end{figure}

\paragraph{How do ML-powered Pricing Approaches Perform?}
A sensible DL approach to pricing the passport option should be able to replicate the well-known analytical solution from \Cref{eq:1Dsolution}.
We test both algorithms of \Cref{sec:MLApproaches}, i.e., the PG algorithm (\Cref{alg:policyGrad,alg:dataGen}) from \Cref{sec:PGA} and the standard A2C algorithm (\Cref{alg:A2C}) from \Cref{sec:A2C}, on a BS market with risk free rate of return $r=0.2\%$, volatility $\sigma=20\%$ and initial capital $x=0$ over $T=32$ trading days.

\Cref{fig:1Dstrats} shows evolutions of portfolio values $X^{\pith}_t$ over time $t$ until maturity $T=32$, for a test set of 1000 asset paths and when actions are taken based on the trained network strategies $\pith$. In each of the sub-figures, the color coding shows the probability that the respective trained network strategy $\pith$ assigns to taking action $q=1$. We observe that both ML approaches yield strategies that assign a high probability to action $q=1$ when the portfolio value (PV) $X^{\pith}_t$ is negative, and that they output almost zero probability for taking action $q=1$ when PV is positive. Thus, PG and A2C both manage to capture the optimal strategy $ q^*(t, s, x) = -\sign(x)$.

\begin{figure}[t]
\makebox[\textwidth][c]{%
 \subfloat[MC estimate of {$e^{-rT}\mathbb{E}{[(X^{\pith}_T)^{+}]}$} with strategies $\pith$ obtained from \Cref{alg:policyGrad}.\label{subfig:1DvaluesPG}]{%
		\includegraphics[scale=0.4]{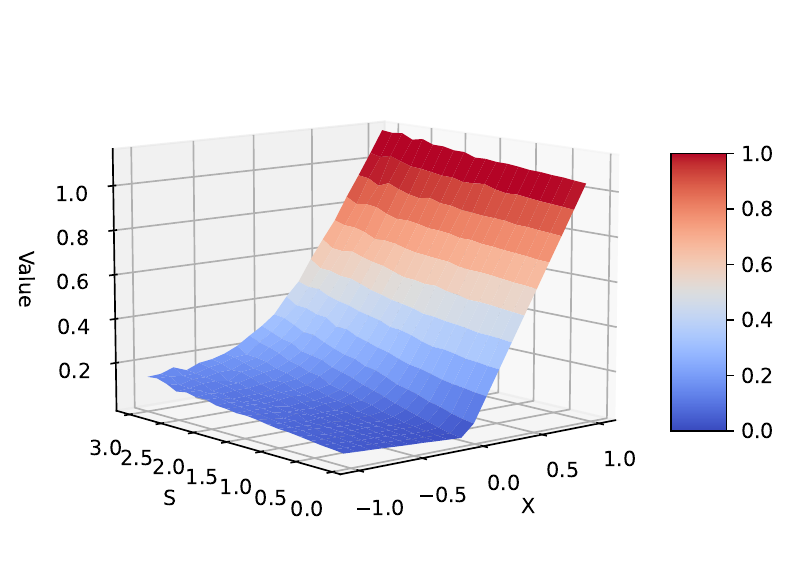}
  }
  }
  \makebox[\textwidth][c]{%
   \subfloat[MC estimate of {$e^{-rT}\mathbb{E}{[(X^{\pith}_T)^{+}]}$} with strategies $\pith$ obtained from \Cref{alg:A2C} (left), and price surface $(\Vph_0(s,x)+x)/2$ corresponding to the trained critic $\Vph_0$ (right).\label{subfig:1DvaluesA2C}]{%
		\includegraphics[scale=0.4]{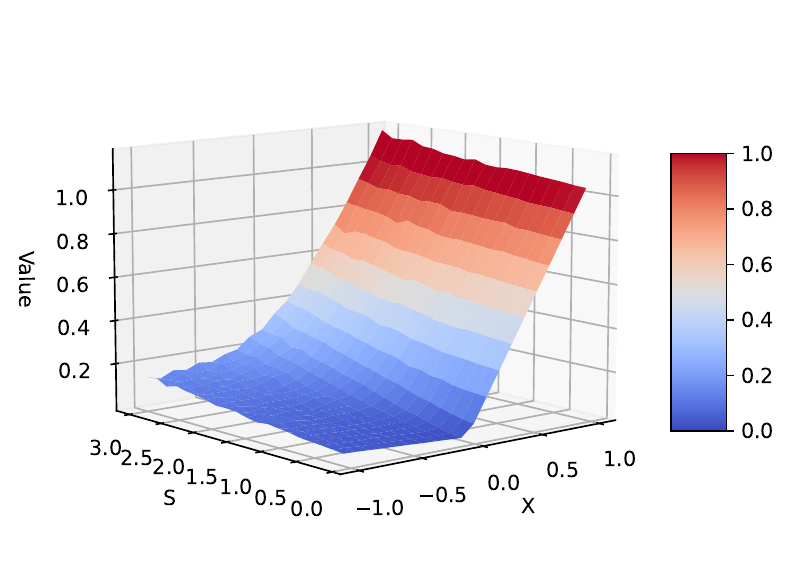}
		\includegraphics[scale=0.4]{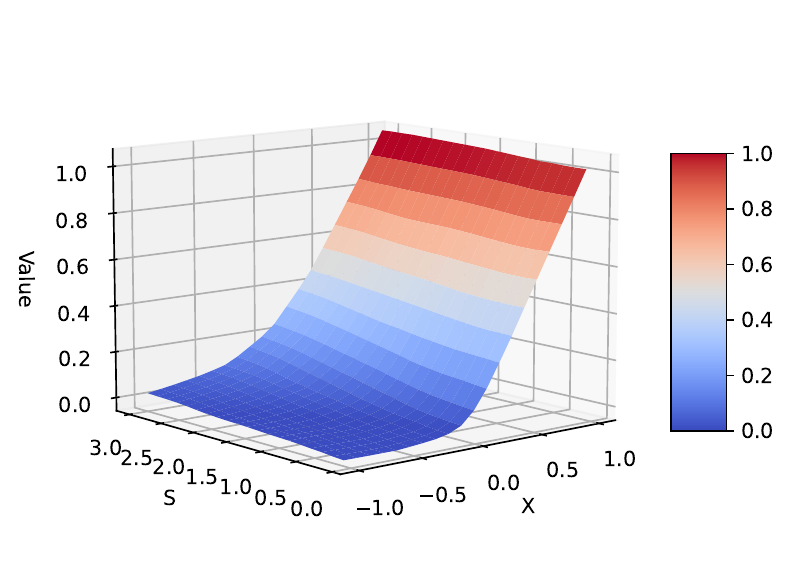}
    }
}
\caption{Estimated price surfaces for the passport option. 
}
    \label{fig:1Dvalues}
\end{figure}
Moreover, \Cref{fig:1Dvalues} shows MC estimates of the price surfaces obtained by the respective NN trading strategies $\pith$ over a grid of asset and portfolio values $(s,x)$ at time point $t=0$, i.e., a MC estimate of $e^{-rT}\E_{\Q}[(X_T^{\pith})^+\mid X_0^{\pith}=x, S_0=s]$. Not surprisingly, since $\pith$ approximate well $\pi^*$ (cp \Cref{fig:1Dstrats}) for both algorithms, also these prices coincide. Moreover, \Cref{fig:1Dvalues} also shows the estimate of the price of the passport option corresponding to the trained value function network $\Vph_0$ of \Cref{sec:A2C}.\footnote{ In order to get the estimate of the price of the passport option corresponding to the critic $\Vph_0$, we scale and shift $\Vph_0$ as in \Cref{le:absEquiv}. In \Cref{fig:1Dvalues}, we hence plot $(\Vph_0(s,x)+x)/2$.} This trained critic $\Vph_0$ can be quickly evaluated to obtain estimated prices (instead of doing a MC estimation). Note however that in regions far off the training data range (i.e., for asset values $s$ with $s>2.5$ and large absolute portfolio values $|x|>0.5$ approximately), the critic smoothly generalizes (as is typical for NN estimates), but the critic's price estimate might deviate more from a true price surface than within the range of training data (i.e., for asset values within (0.5, 2) and portfolio values within (-0.55, 0.55) approximately).

\begin{figure}[t]
\makebox[\textwidth][c]{%
		\includegraphics[scale=0.4]{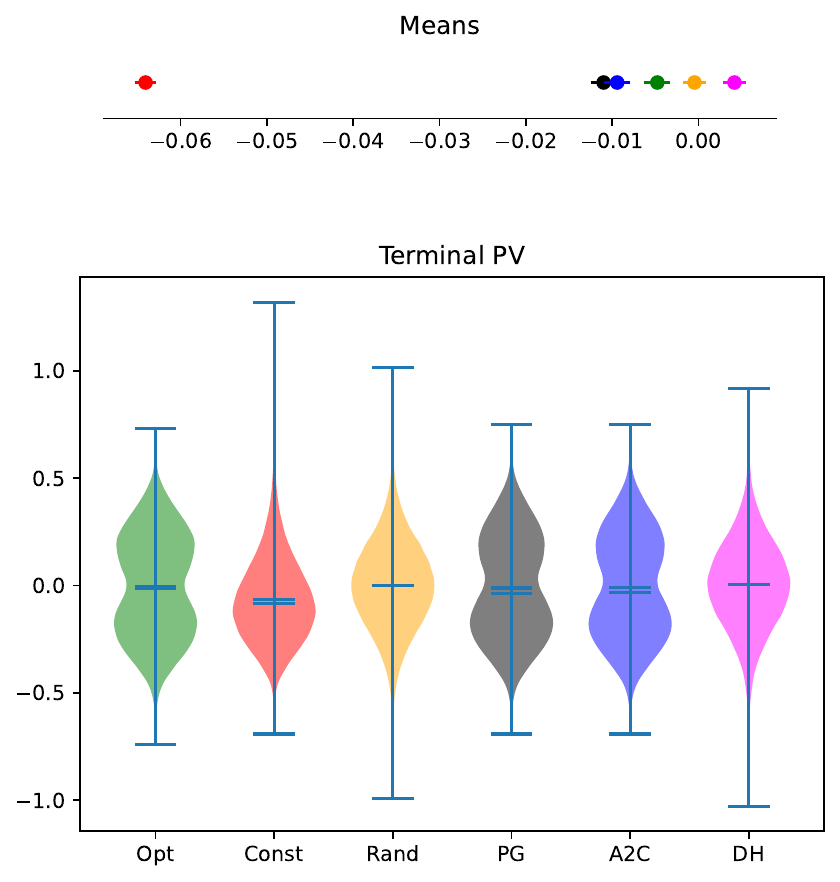}
		\includegraphics[scale=0.4]{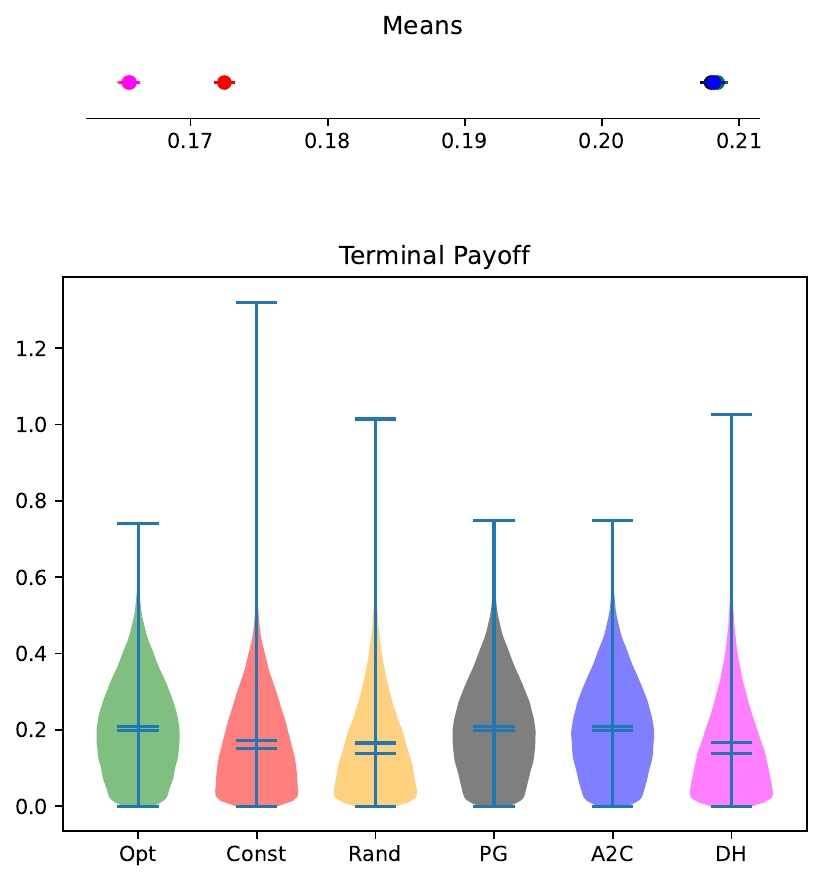}
	
}    \caption{Distributions of terminal portfolio values $X_T^{\pith}$ and terminal payoffs (absolute portfolio values $|X_T^{\pith}|$) over a hundred thousand test paths of the asset with $x_0=0$, for the optimal (Opt), a constant (Const), a random (Rand), PG and A2C strategies and a trained deep hedging (DH) strategy (left to right). Means with a student-t $95\%$-confidence interval are shown on top.}
    \label{fig:1DPVdist}
\end{figure}

Finally, in \Cref{fig:1DPVdist}, we present the empirical distribution of portfolio values $X_T^{\pith}$ and absolute portfolio values $|X_T^{\pith}|$ at terminal time $T=32$, obtained by trading with different strategies on a hundred thousand test paths. We contrast these distributions for the following strategies (from left to right in \Cref{fig:1DPVdist}): the optimal strategy obtained as a solution of \Cref{eq:1Dsolution}, a constant buy-and-hold strategy, i.e., $\pith_t(s,x)(q)=1$ for $q=1$ and all $t\in\R_+$ and $(s,x)\in\mathcal{X}$, a strategy that randomly chooses from actions $\{-1,1\}$ in each step, i.e., $\pith_t(s,x)(q)=0.5$ for $q\in\{-1,+1\}$, the trained PG and A2C strategies from \Cref{alg:policyGrad,alg:A2C}, and a (non-probabilistic) deep hedging strategy as discussed in \Cref{re:problemwithdeephedging}.

We observe that distributions of terminal payoffs, i.e., terminal absolute values are fairly similar for the optimal strategy and the trained NN strategies of \Cref{alg:policyGrad} and \Cref{alg:A2C} (Opt, PG, and A2C in \Cref{fig:1DPVdist}). The top rows in \Cref{fig:1DPVdist}, show means of the corresponding distributions with a student-t $95\%$-confidence interval. We see on the r.h.s. of \Cref{fig:1DPVdist} that these intervals overlap for the distributions corresponding to the optimal, the PG, and the A2C strategies. Therefore, on a $95\%$-confidence level, the means of absolute values achieved by trading with optimal and trained NN, i.e., PG and A2C strategies coincide.
Likewise, the means over terminal absolute values of deep hedging and random strategies are indistinguishable on a $95\%$-confidence level.

\subsection{The Multi-dimensional Case}\label{subsec:MDexperiments}
We have seen in the previous section that both RL algorithms PG and A2C could successfully recover the optimal trading strategy, i.e., the solution of \Cref{eq:1Dsolution} in a BS market with a single risky asset. In this section, we turn to the multi-asset case. First, we are interested to verify that the RL algorithms PG and A2C we introduced in \Cref{sec:MLApproaches} recover the optimal solution of \Cref{thm:optimalStrat} in a market setting with multiple \emph{uncorrelated} risky assets. Second, we analyze the strategies learned by these ML approaches in BS markets with multiple \emph{correlated} assets, where the solution for the problem of \Cref{eq:RLObjective} is still unknown.

Both DL approaches can be readily applied to markets with multiple risky assets. 
However, with increasing dimension, we need to tune hyper-parameters. Key factors to consider with increasing dimension are increasing sample size in \Cref{alg:policyGrad} to reduce variance and increasing entropy regularization in \Cref{alg:A2C} to foster exploration. 
\begin{figure}[t]
\makebox[\textwidth][c]{%
 \includegraphics[scale=.35]{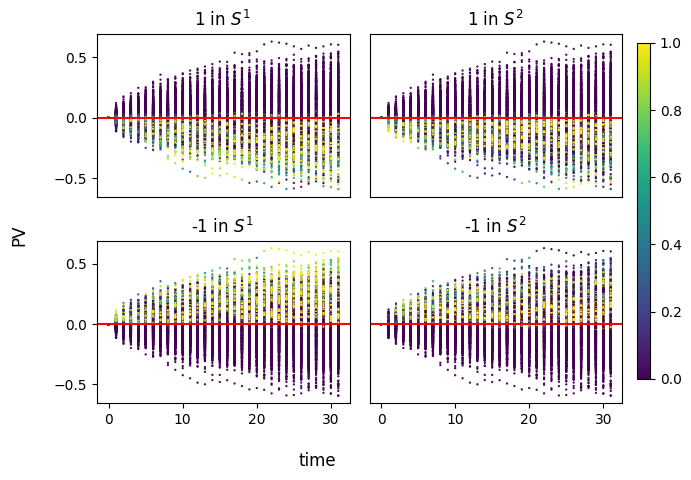}
 		\includegraphics[scale=.35]{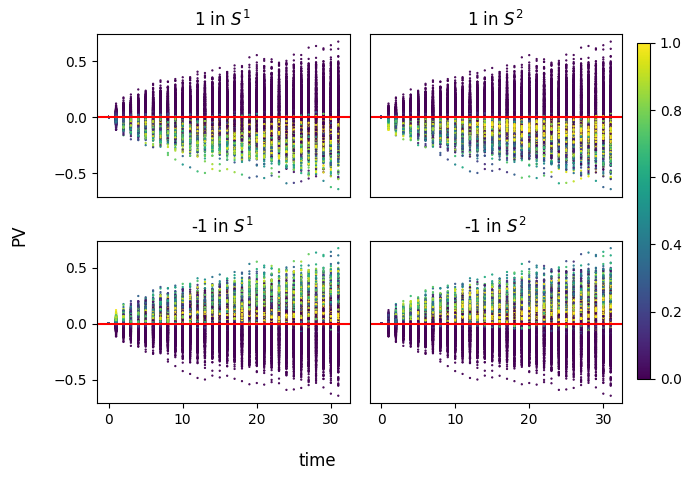}
}
    \caption{
    Evolution of portfolio values (PV) $X^{\pith}_t$ over time until maturity $T=32$ for actions taken according to $\pith$ trained with \Cref{alg:policyGrad} (left) and \Cref{alg:A2C} (right) respectively, over a test set of 1000 asset paths in a BS market with $x_0=0, \sigma^1=0.2=\sigma^2$ and $\rho=0$. In each of the sub-plots, the probability that the respective network $\pith$ assigns to taking the action $q\in\Diamond$ indicated in the sub-plots' titles is shown in color.}
    \label{fig:2DstratsalongPV}
\end{figure}
\subsubsection{2D Market with Uncorrelated Assets}\label{subsubsec:2Duncorrelated}
\paragraph{Symmetric Market.}
Consider first a symmetric BS market as described in \Cref{subsec:setting} with $d=2$ risky assets $S^1$ and $S^2$, with volatilities $\sigma_1=\sigma_2=0.2$, initial values $S_0^1=S_0^2=1$, correlation $\rho=0$ and interest rate $r=0.2\%$. In such an uncorrelated market, the optimal strategy for solving problem \eqref{eq:RLObjective} is derived in \Cref{thm:optimalStrat}.
To approximate this strategy, we let both algorithms (PG from \Cref{sec:PGA} and A2C from \Cref{sec:A2C}) run to learn optimal trades over $T=10$ equidistant time periods.

\begin{figure}[h!]
\makebox[\textwidth][c]{%
		\includegraphics[scale=0.4]{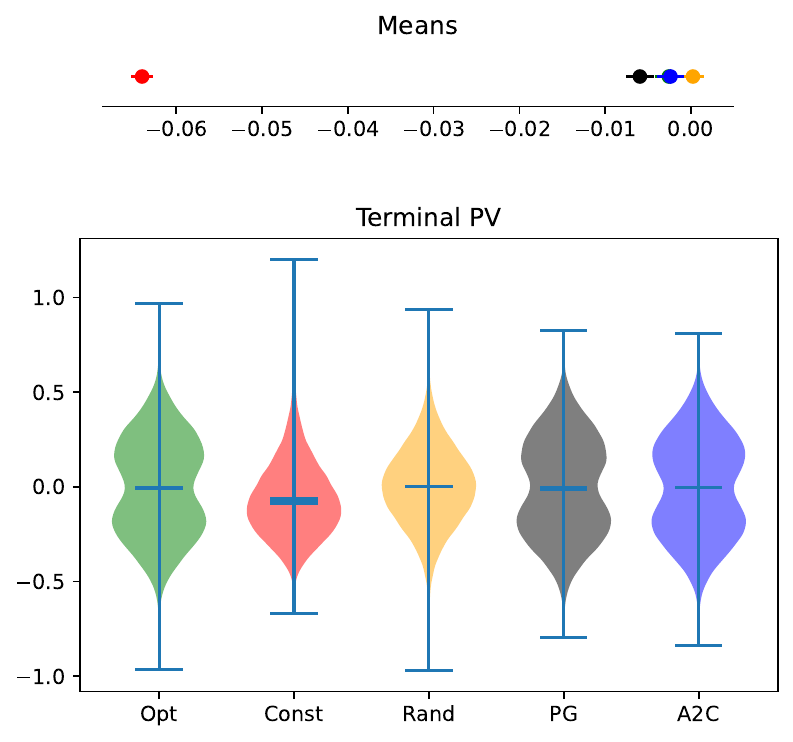}
		\includegraphics[scale=0.4]{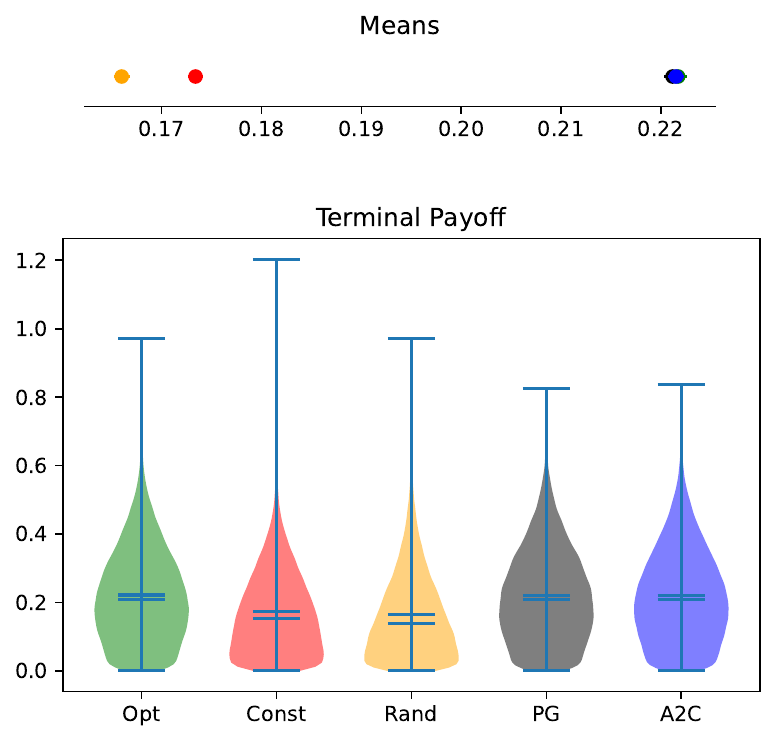}
	
} \caption{
Distributions of terminal PV $X^{\pith}_T$ and terminal payoff $|X^{\pith}_T|$ over a test set of 100 000 asset paths with $x_0=0$, for the optimal (Opt), a constant (Const), a random (Rand) and both trained NN strategies (PG and A2C) (left to right) in a BS market with $x_0=0, \sigma^1=\sigma^2=0.02$ and $\rho=0$. Means with a student-t $95\%$-confidence interval are shown on top.}
    \label{fig:2DPVDist}
\end{figure}

In \Cref{fig:2DstratsalongPV}, we again show evolutions of portfolio values $X^{\pith}_t$ over time $t$ until maturity $T=32$, for a test set of 1000 test asset paths and when actions are taken based on the trained network strategies $\pith$. In each of the sub-figures of \Cref{fig:2DstratsalongPV}, the color shows the probability that the respective trained network strategy $\pith$ assigns to taking each respective action $q\in\Diamond$ listed in the sub-titles. In \Cref{fig:2DstratsalongPV}, we see that all trained strategies $\pith$ choose to go short for negative portfolio values (i.e., $X_t^{\pith}<0$) and long otherwise, as does the optimal strategy derived in \Cref{eq:optimalStrat} in \Cref{thm:optimalStrat}. In both the left and right sub-figure of \Cref{fig:2DstratsalongPVAsym}, whenever portfolio values are negative, the NN strategies only assign a positive probability to only go long in either asset (first row of the sub-figure). Likewise, when portfolio values $X_t^\theta$ lie above zero, the NN strategies  assign positive probability only to go short in either asset (second row of the sub-figures). Moreover, asset preferences appear to be symmetric for both NN strategies, i.e., the probability of investing in asset $S^1$ (first columns of the sub-figures) is as evenly spread as the one of investing in asset $S^2$ (second columns of the sub-figures).

Furthermore, in \Cref{fig:2DPVDist}, we contrast empirical distributions of terminal values $X_T^{\pith}$ and terminal payoffs $|X_T^{\pith}|$ over a test set of 100 000 asset paths achieved by different strategies. As in the experiment in \Cref{subsec:1Dexperiments}, we consider the optimal strategy as in \Cref{eq:optimalStrat}, a constant buy-and-hold strategy, i.e., $\pith_t(s,x)(q)=1$ for strategy $q\in\Diamond$ fixed for all $t\in\R_+$ and $(s,x)\in\mathcal{X}$, a strategy that randomly chooses from actions $q\in\Diamond$ in each time step, i.e., $\pith_t(s,x)(q)=0.25$ for $q\in\Diamond$, and the trained PG and A2C strategies obtained from \Cref{alg:policyGrad,alg:A2C}. We see that both PG and A2C yield strategies that perform statistically on par with the optimal one, as indicated by overlapping confidence intervals of the means of terminal payoff distributions (right sub-plot in \Cref{fig:2DPVDist}).

Moreover, we again show price surfaces estimated by the trained NN strategies in \Cref{fig:2DvaluesCor0}. We show MC estimates of {$e^{-rT}\mathbb{E}{[(X^{\pith}_T)^{+}\mid X_0^{\pith}=0, S_0=s]}$} for both PG and A2C strategies $\pith$ for a grid of initial asset values $s=(s^1,s^2)\in[0,3]^2$. The price corresponding to the trained critic $\Vph_0$ from \Cref{alg:A2C} is shown in \Cref{subfig:2DvaluesCor0A2C}. (For each initial value $s$, we scale $\Vph(s,0)$ as in \Cref{le:absEquiv} to obtain an estimate $\Vph(s,0)/2$ of $V(s,0)$.) As in the one-asset market (cp. \Cref{fig:1Dvalues}), the critic is smoother than the MC approximations and assigns gives lower prices for asset values far from the training range ($s\in(0.5,1.5)^2$ approximately).

\begin{figure}[!t]

\makebox[\textwidth][c]{%
    \subfloat[MC estimate of {$e^{-rT}\mathbb{E}{[(X^{\pith}_T)^{+}\mid X_0^{\pith}=0, S_0=s]}$} with strategies $\pith$ obtained from PG (\Cref{alg:policyGrad}), plotted over a grid of asset values $s=(s^1, s^2)$.\label{subfig:2DvaluesCor0PG}]{%
		\includegraphics[scale=0.4]{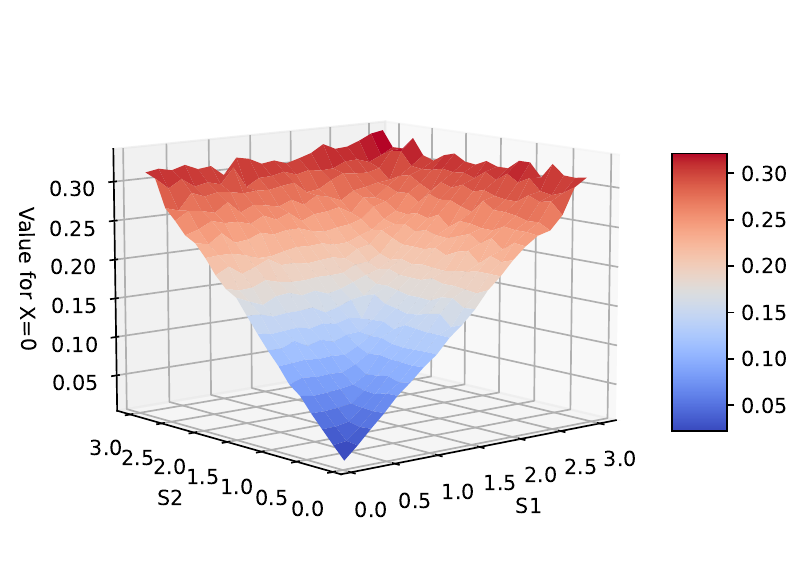}
		}
  }
  \makebox[\textwidth][c]{%
   \subfloat[MC estimate of {$e^{-rT}\mathbb{E}{[(X^{\pith}_T)^{+}\mid X_0^{\pith}=0, S_0=s]}$} with strategies $\pith$ obtained from A2C (\Cref{alg:A2C}) (left), and price surface $\Vph_0(s,0)/2$ corresponding to the trained critic $\Vph_0(s,0)$ (right), plotted over a grid of asset values $s=(s^1, s^2)$.\label{subfig:2DvaluesCor0A2C}]{%
  \includegraphics[scale=0.4]{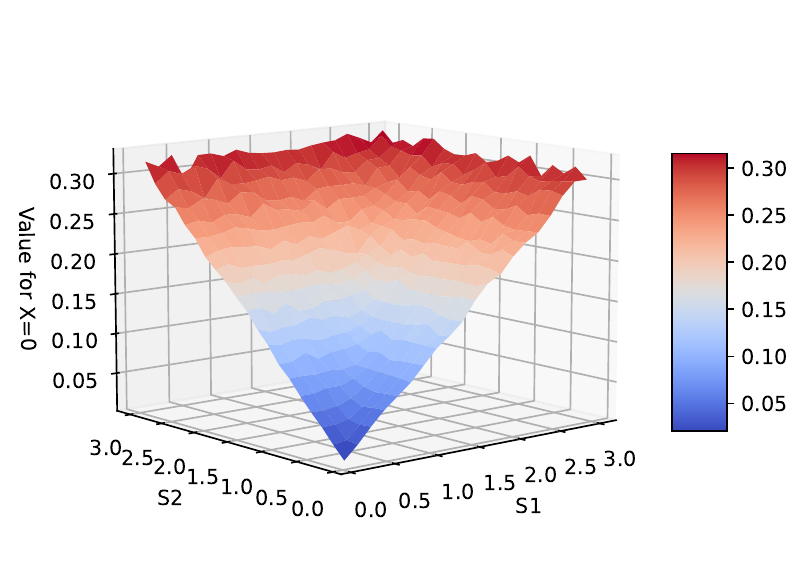}
		\includegraphics[scale=0.4]{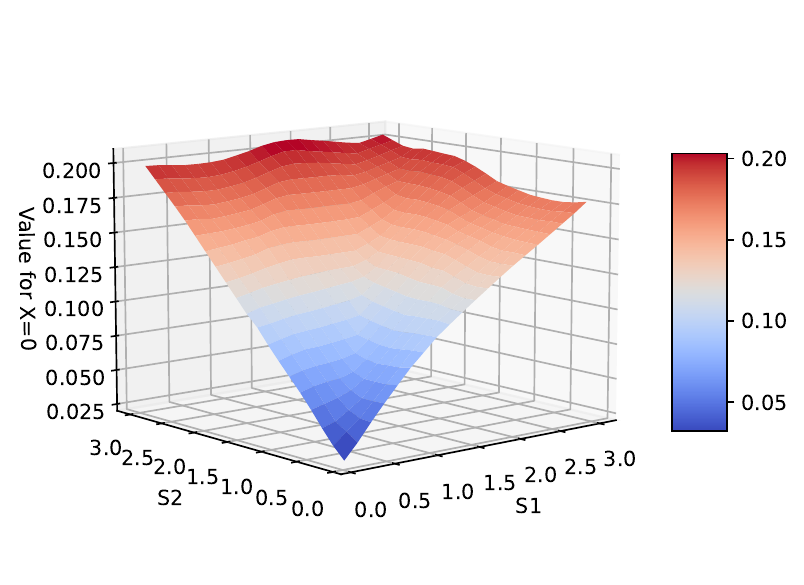}
	
    }
}   
\caption{Estimated price surfaces for a 2D passport option in a BS market as in \Cref{sec:preliminaries} with $\sigma^1=\sigma^2=0.2$ and $\rho=0$.}
    \label{fig:2DvaluesCor0}
\end{figure}

\paragraph{Asymmetric Market.} For an asymmetric setting with $d=2$ uncorrelated, risky assets, we consider variances $(\sigma^1)^2=0.04, (\sigma^2)^2=0.03$. Thus in this setting, asset $S^1$ is more volatile than asset $S^2$.

\begin{figure}[!t]
\makebox[\textwidth][c]{%
 \includegraphics[scale=.35]{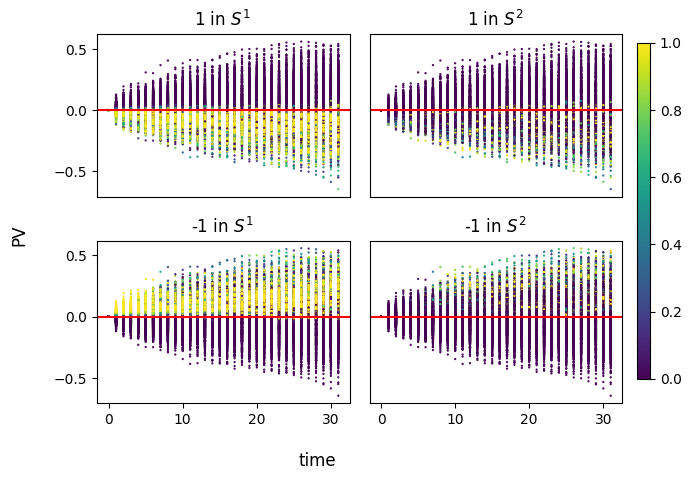}
		\includegraphics[scale=.35]{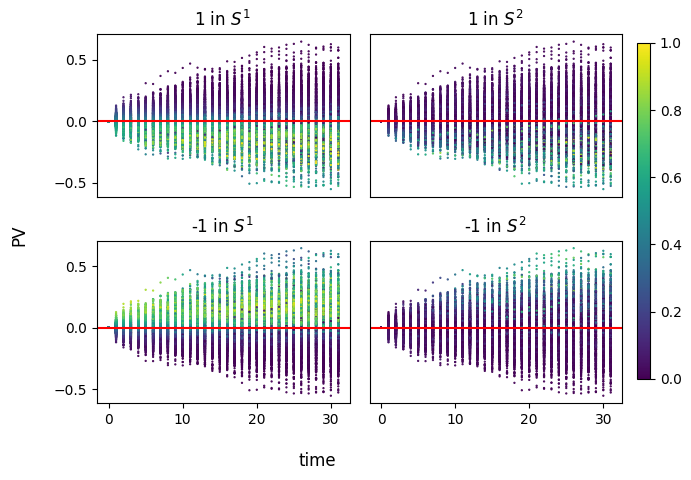}
	

}
    \caption{Evolution of portfolio values (PV) $X^{\pith}_t$ over time $t$ until maturity $T=32$ for actions taken according to $\pith$ trained with \Cref{alg:policyGrad} (left) and \Cref{alg:A2C} (right) respectively, over a test set of 1000 asset paths in an asymmetric BS market with $x_0=0, \sigma^1=0.2, \sigma^2=\sqrt{0.03}$ and $\rho=0$. In each of the sub-plots, the probability that the respective network $\pith$ assigns to taking the action $q\in\Diamond$ indicated in the sub-plots' titles is shown in color.}
    \label{fig:2DstratsalongPVAsym}
\end{figure}
\begin{figure}[t]
\makebox[\textwidth][c]{%
		\includegraphics[scale=0.4]{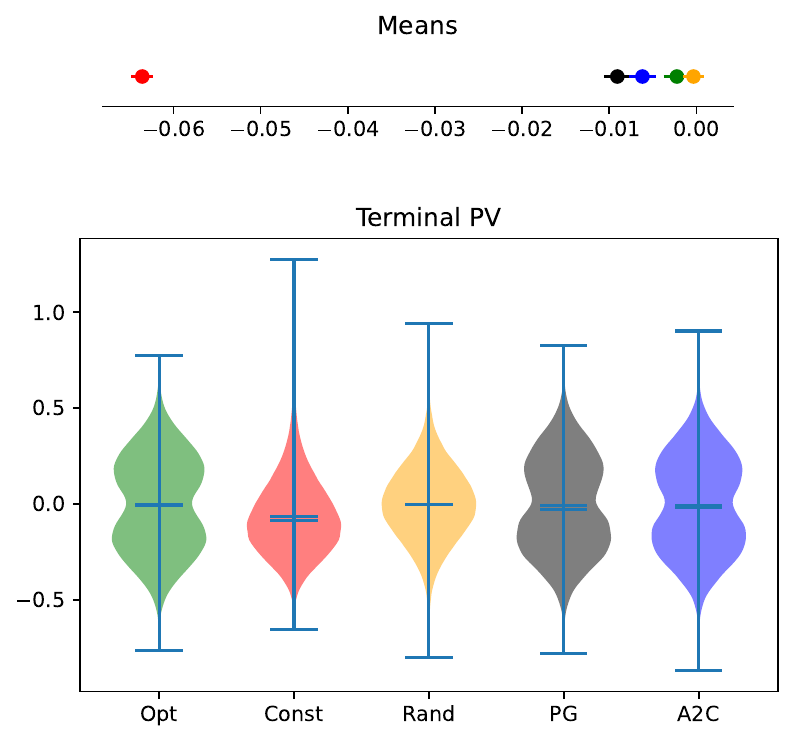}
		\includegraphics[scale=0.4]{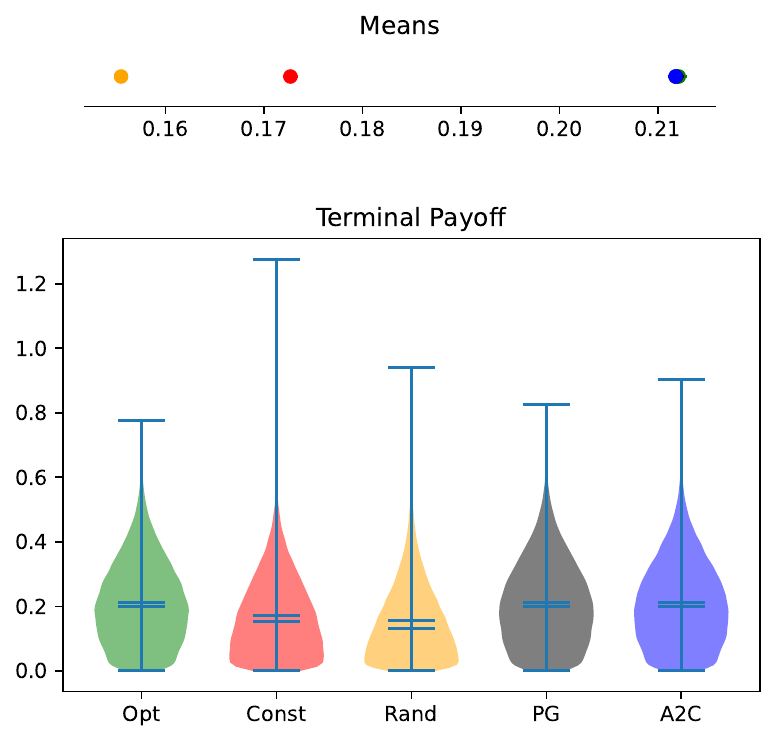}
	
} \caption{
Distributions of terminal PV $X^{\pith}_T$ and terminal payoff $|X^{\pith}_T|$ over a test set of 100 000 asset paths with $x_0=0$, for the optimal (Opt), a constant (Const), a random (Rand) and both trained NN strategies (PG and A2C) (left to right) in a BS market with $x_0=0, \sigma^1=0.2, \sigma^2=\sqrt{0.03}$ and $\rho=0$. Means with a student-t $95\%$-confidence interval are shown on top.}
    \label{fig:2DPVDistAsym}
\end{figure}

Similarly to before, \Cref{fig:2DstratsalongPVAsym} shows actions taken by the trained network strategies for both algorithms, PG and A2C, over a test set of 1000 asset paths. Again, we see that for both algorithms (PG and A2C), trained NN strategies tell to go short for negative portfolio values and long otherwise, as does the optimal strategy (cp. \Cref{thm:optimalStrat}). In both the left and right sub-figure of \Cref{fig:2DstratsalongPVAsym}, whenever portfolio values are negative, the NN strategies only assign a positive probability to only go long in either asset (first row of the sub-figure). Likewise, when portfolio values $X_t^\theta$ lie above zero, the NN strategies  assign positive probability only to go short in either asset (second row of the sub-figures). Moreover, both strategies assign higher probabilities to investing in the more volatile asset $S^1$ (first columns of the sub-figures).
While these asset preferences appear to be similar for both PG and A2C strategies, the PG strategy is a bit more confident in its actions than the A2C one. (To prevent over-confidence increasing entropy regularization is an option for the PG algorithm as well.)

In \Cref{fig:2DPVDistAsym}, we visualize the empirical distributions of both terminal portfolio values and terminal payoffs for the corresponding strategies. As in the symmetric setting, we see that on the test paths, distributions of terminal PV and terminal payoff achieved by trading with the trained NN strategies are indistinguishable from the optimal ones. 

As previously in the symmetric experiment, we show price surfaces estimated by the trained NN strategies $\pith$ in this asymmetric, uncorrelated market in \Cref{fig:2DvaluesAsym} in Appendix \ref{app:sec:figures}, where we observe similar patterns as in the symmetric case.

\subsubsection{2D Market with Correlated Assets}\label{subsubsec:2Dcorrelated}

In this section, we investigate the outcome of applying the DL algorithms of \Cref{sec:MLApproaches} to a BS market with two \emph{correlated} risky assets $S^1$ and $S^2$. In such a setting, the solution $\pi^*$ to the pricing problem of \Cref{eq:RLObjective} is \emph{unknown}. We can however still train a NN strategy $\pith$ according to our PG and A2C algorithms to approximate $\pi^*$.

We consider the previous asymmetric setting from \Cref{subsubsec:2Dcorrelated}, with volatilities $\sigma^1=0.2, \sigma^2=\sqrt{0.03}$, but in a market with \emph{non-zero} correlation $\rho\in\{-0.9,-0.5,-0.1,0.1,0.5,0.9\}$.
As in the experiments of the previous \Cref{subsubsec:2Duncorrelated}, we let both algorithms (PG from \Cref{sec:PGA} and A2C from \Cref{sec:A2C}) run to learn optimal trades over $T=10$ equidistant time periods in this correlated market. 
\begin{figure}[h!]
\makebox[\textwidth][c]{%
\subfloat[$\rho=0.1$]{%
 \includegraphics[scale=.35]{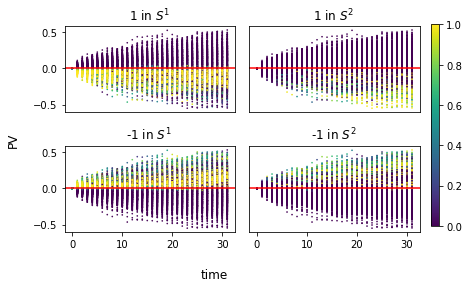}
		\includegraphics[scale=.35]{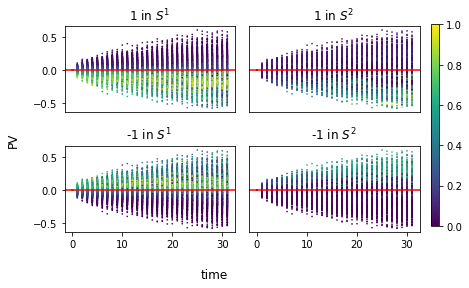}
    }
 }
 \makebox[\textwidth][c]{%
\subfloat[$\rho=0.5$]{%
 \includegraphics[scale=.35]{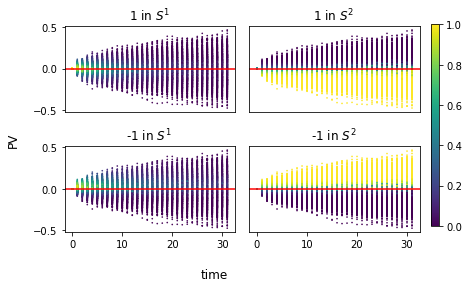}
		\includegraphics[scale=.35]{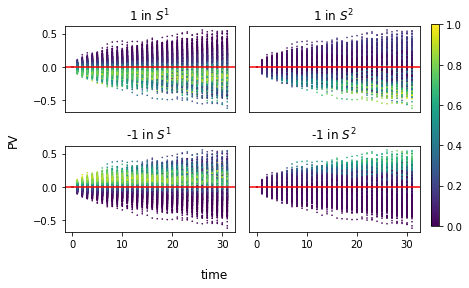}
    }
 }
\makebox[\textwidth][c]{%
\subfloat[$\rho=0.9$]{%
 \includegraphics[scale=.35]{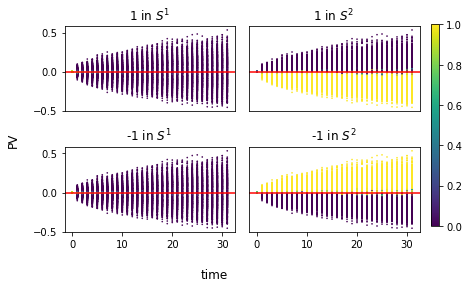}
		\includegraphics[scale=.35]{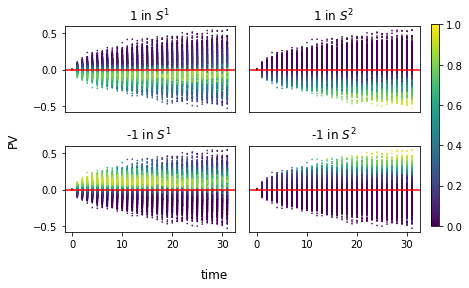}
    }}
    \caption{
    Evolution of portfolio values (PV) $X^{\pith}_t$ over time $t$ until maturity $T=32$ for actions taken according to $\pith$ trained with \Cref{alg:policyGrad} (left) and \Cref{alg:A2C} (right) respectively, over a test set of 1000 asset paths in an asymmetric BS market with $x_0=0, \sigma^1=0.2, \sigma^2=\sqrt{0.03}$ and positive correlations $\rho$. In each of the sub-plots, the probability that the respective network $\pith$ assigns to taking the action $q\in\Diamond$ indicated in the sub-plots title is shown in color.}
    \label{fig:2DstratsalongPVposcor}
\end{figure}

Analogously to before, we then sample a test set of 1000 asset paths and compute the evolutions of portfolio values (PV) $X^{\pith}$ over time until maturity $T=32$, when actions are taken based on the trained network strategies $\pith$. We show these PV evolutions for correlations $\rho\in\{0.1,0.5,0.9\}$ in \Cref{fig:2DstratsalongPVposcor}. (See \Cref{fig:2DstratsalongPVnegcor} in Appendix \ref{app:sec:figures} for the results in a market with negative correlation.) In each of the sub-figures of \Cref{fig:2DstratsalongPVposcor}, the color again signifies the probability that the trained network strategy $\pith$ assigns to taking each respective action $q\in\Diamond$ listed in the sub-titles. 
As in the uncorrelated setting, we see in this figure that across correlations, all trained strategies $\pith$ choose to go short for negative portfolio values and long otherwise. (Even though unproven, this is a strong indication that this is the case for the optimal solution of \Cref{eq:RLObjective} that is unknown in this setting.) Moreover, we observe that for both network strategies the trading action in asset $S^2$ (the less risky asset) increases with increasing correlation in the market. While as in the asymmetric uncorrelated setting, there is a clear preference for the more volatile asset $S^1$ over $S^2$ (cp. \Cref{fig:2DstratsalongPVAsym,fig:2DstratsalongPVposcor}), we see in \Cref{fig:2DstratsalongPVposcor} that now as $\rho$ increases, the probability of investing in asset $S^2$ increases. 
While this effect is only slightly visible for $\pith$ trained with the A2C algorithm of \Cref{sec:A2C} (right subplots of \Cref{fig:2DstratsalongPVposcor}), the tendency is much stronger for the network strategy $\pith$ trained with the PG algorithm of \Cref{sec:PGA} (left subplots of \Cref{fig:2DstratsalongPVposcor}). For $\rho \in\{0.5, 0.9\}$, the network strategy even learned to (almost in the case of $\rho=0.5$) purely invest in the less volatile asset $S^2$. 

\begin{figure}
\centering
\makebox[\textwidth][c]{%
\subfloat[$\rho=0.1$]{%
 \includegraphics[scale=.35]{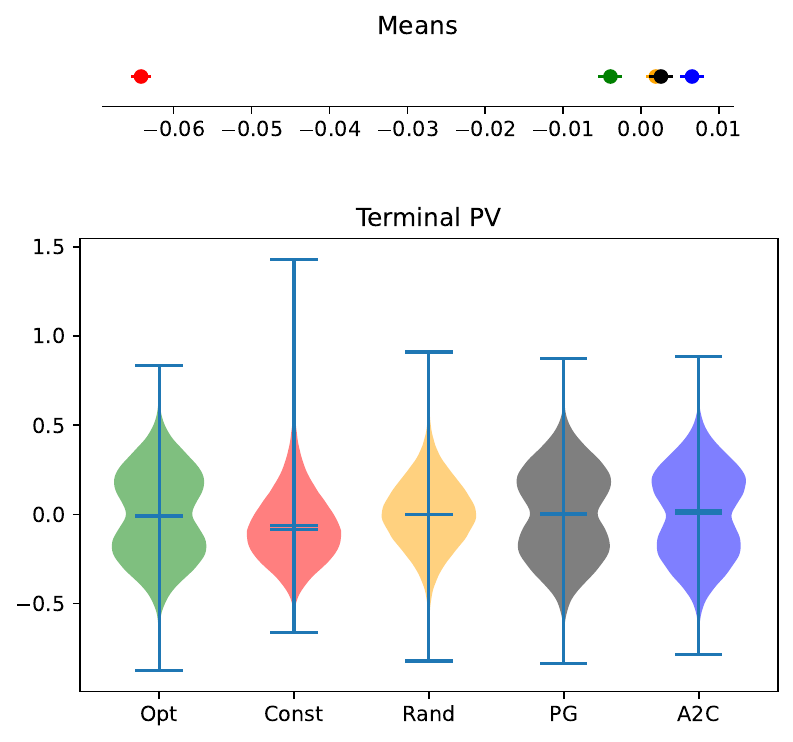}
		\includegraphics[scale=.35]{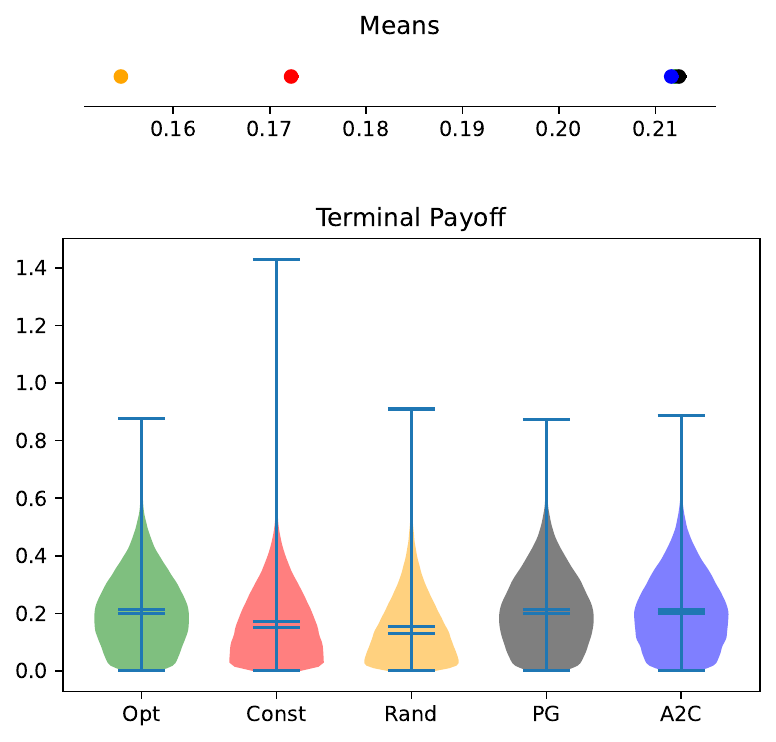}
    }
 }
 \makebox[\textwidth][c]{%
\subfloat[$\rho=0.5$]{%
 \includegraphics[scale=.35]{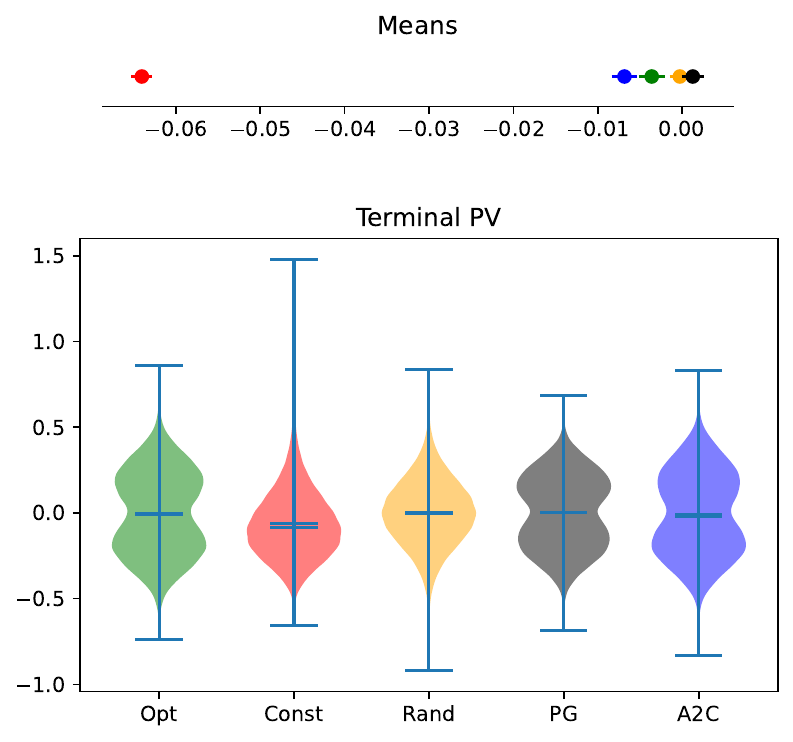}
		\includegraphics[scale=.35]{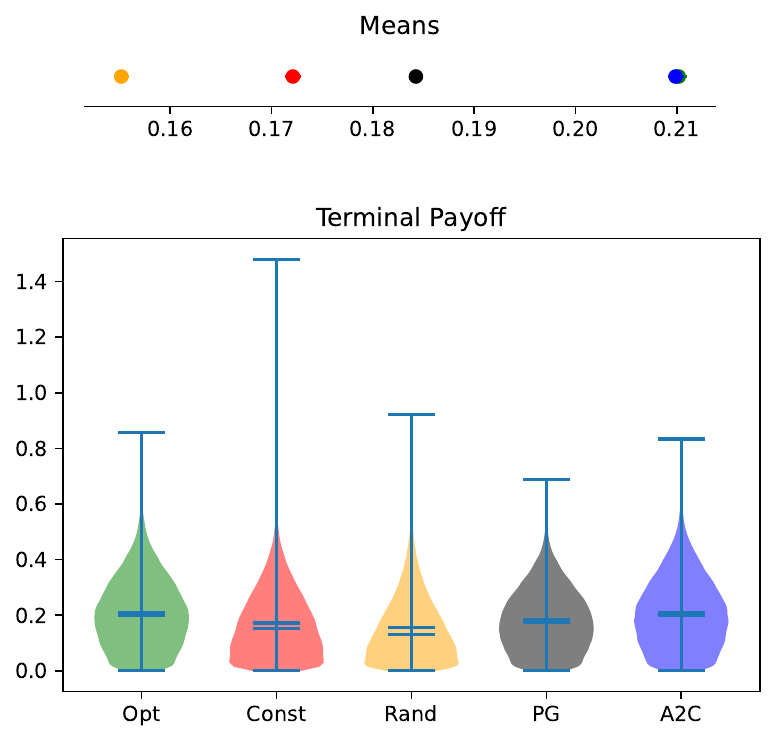}
    }
  }
\makebox[\textwidth][c]{%
\subfloat[$\rho=0.9$]{%
 \includegraphics[scale=.35]{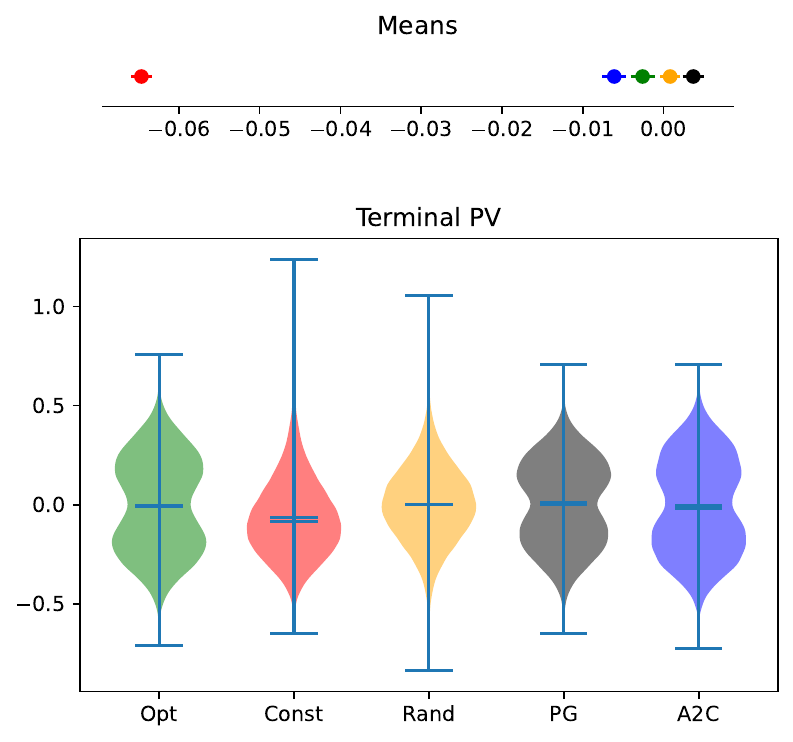}
		\includegraphics[scale=.35]{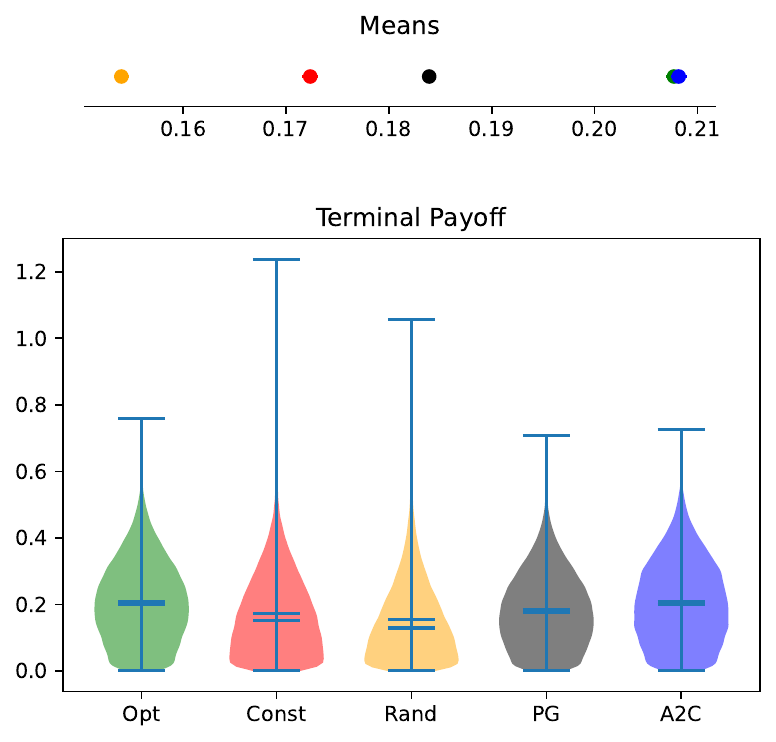}
    }} \caption{
Distributions of terminal PV $X^{\pith}_T$ and terminal payoff $|X^{\pith}_T|$ over a test set of 100 000 asset paths with $x_0=0$, for the strategy of \Cref{thm:optimalStrat} (Opt), a constant (Const), a random (Rand) and both trained NN strategies (PG and A2C) (left to right) in a BS market with $x_0=0, \sigma^1=0.02, \sigma^2=\sqrt{0.03}$ and positive correlations $\rho$. Means with a student-t $95\%$-confidence interval are shown on top.}
    \label{fig:2DPVDistposcor}
\end{figure}

We further also contrast distributions of terminal values $X_T^{\pith}$ and terminal payoffs $|X_T^{\pith}|$ over a test set of 100 000 asset paths achieved by different strategies in \Cref{fig:2DPVDistposcor}.\footnote{See Appendix \ref{app:sec:figures} \Cref{fig:2DPVDistnegcor} for the analogous figure for markets with negative correlation.} Analogously to the previous experiments of \Cref{subsec:1Dexperiments,subsubsec:2Duncorrelated}, we consider empirical distributions resulting from the following strategies $\pith$: first, the trained PG and A2C strategies obtained from \Cref{alg:policyGrad,alg:A2C} (PG and A2C in \Cref{fig:2DPVDistposcor}), second, a constant buy-and-hold strategy, i.e., $\pith(t,s,x)(q)=1$ for strategy $q\in\Diamond$ fixed for all $(t,s,x)\in\R_+\times\mathcal{X}$, and a strategy that randomly chooses from actions $q\in\Diamond$ in each step, i.e., $\pith(t,s,x)(q)=0.5$ for $q\in\Diamond$ (Const and Rand in \Cref{fig:2DPVDistposcor}), and third, the strategy of \Cref{thm:optimalStrat} (Opt in \Cref{fig:2DPVDistposcor}). We keep referring to the latter strategy as ``optimal strategy'', where optimal now means that it would be optimal in the same BS market \emph{without} correlation.
Across correlations, we note that the trained network strategies $\pith$ and the ``optimal'' strategy of \Cref{thm:optimalStrat} outperform the random and constant strategies, since the estimates of expected terminal payoffs (i.e., the means in the right sub-plots of \Cref{fig:2DPVDistposcor}) corresponding to the former are significantly larger than the ones resulting from trading with the latter strategies.

Moreover, we observe in \Cref{fig:2DPVDistposcor} that the A2C yields strategies that perform comparably to the strategy of \Cref{thm:optimalStrat} that was optimal only in the market with $\rho=0$. This is indicated by overlapping confidence intervals of the means of terminal payoff distributions (right sub-plots in \Cref{fig:2DPVDistposcor}).
The strategy to only go long or short in the less volatile asset $S^1$ that our PG simulation converged to for correlations $\rho\in\{0.5,0.9\}$ in the market, yields higher expected payoff than the random strategy (Rand) and the buy-and-hold (Const). However, it is significantly outperformed by the ``optimal'' and A2C strategies (Opt and A2C).

\section{Conclusion}
DL methods are on the rise in many applications within the field of finance. For the pricing of options on a portfolio value, prices are often given in terms of maximal expected payoffs when trading with a worst-case (from the seller's perspective) trading strategy. Thus, such tasks can be framed as stochastic control problems for which policy approximation methods are particularly well suited.  

In this paper, we have presented two ML-powered approaches to numerically approximate such maximizing trading strategies: a policy gradient method in \Cref{sec:PGA} and a advantage actor critic (A2C) method in \Cref{sec:A2C}. While conceptually, these algorithms can be generally applied to all sorts of options on traded accounts, in this paper we have focused on the multi-dimensional passport option.
Pricing this option has been a long-standing challenging problem even for simple Black-Scholes markets with two independent assets. 
Optimal actions in the control problem for pricing passport options are known to be of ``bang-bang type'', meaning that it is optimal to either go long or short in one of the market's assets. Thus, pricing a passport option is a classification task that comes with computational challenges due to noisy environments. We have discussed these challenges around data noise in classification and also how to deal with approximations of continuous-time solutions in \Cref{sec:MLApproaches}.

Within this paper, we have contributed to solving the problem of pricing the multi-dimensional passport option both analytically and numerically.
First, we have derived a discrete-time analytic solution for the optimal trading strategy in a multi-asset, uncorrelated BS market (see \Cref{thm:optimalStrat}). The optimal (worst-case) strategy in this problem demands to go long when the portfolio value is negative and short otherwise, and to do so in the asset for which a certain call price takes the highest value. 
Second, we have shown in \Cref{sec:experiments} that our ML-powered approaches are able to successfully recover these optimal solutions, also in the well-known single-asset setting. 
As part of future work, these algorithms can be applied to approximate worst-case pricing strategies in general, correlated BS- or even more general financial markets.

\acks{Heartfelt thanks go to Jakob Heiss, Jakob Weissteiner and Alexis Stockinger for the most fruitful discussions and contributions in coding.}


\newpage

\clearpage

\vskip 0.2in
\bibliography{passport_bib}

\clearpage
\appendix

\section{Proofs}\label{app:Results}
\begin{lemma}\label{le:VposHom}
For any $i\in\{1,\ldots,d\}$ and $k\in\{0,\ldots,T-1\}$ let $S^{i}\overset{(d)}{=}S^i_{t_{k+1}}$ and define
\begin{align*}
    V_k^{i}(y, \left( s^{i},s^{i-}\right)):=\E_{S^{i-}|s^{i-}}\left[\E_{S^{i}| s^{i}}\left[\lambda V_{k+1}\left(y+q^is^i\left(\left(\frac{S^i}{s^i}\right)-1\right), \left(S^{i},S^{i-}\right)\right)\right]\right].
\end{align*}
Then we have that for every $\lambda>0$, $V_k^{i}\left(\lambda y, \left(\lambda s^{i},s^{i-}\right)\right)=\lambda V_k^{i}\left( y, \left( s^{i},s^{i-}\right)\right)$. 
\end{lemma}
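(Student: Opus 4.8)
The plan is to deduce the asserted scaling relation from a single structural fact: the value functions $V_k$ are positively homogeneous of degree one in the joint pair $(x,s)$. (The positive constant $\lambda$ appearing inside the definition of $V_k^i$ factors out of the expectations by linearity, so it plays no role in the scaling argument and I would immediately reduce to the normalized object.) First I would prove, by backward induction on $k$, that $V_k(\lambda x,\lambda s)=\lambda V_k(x,s)$ for every $\lambda>0$. The base case is immediate, since $V_T(x,s)=|x|$ and $|\lambda x|=\lambda|x|$. For the inductive step the key observation is that in a Black--Scholes market the returns $R^j:=S^j_{t_{k+1}}/s^j$ have a law that does not depend on the initial value $s^j$ (cf.\ \eqref{eq:DiscreteTimeS}, where $S^j_{t_{k+1}}=s^j\exp(\cdots)$ with an $s^j$-free exponent). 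Hence both the portfolio increment $q^js^j(R^j-1)$ and the new asset value $s^jR^j$ are linear in $s^j$, so replacing $(x,s)$ by $(\lambda x,\lambda s)$ multiplies \emph{every} argument of $V_{k+1}$ inside the conditional expectation of the DPP recursion by the common factor $\lambda$. The induction hypothesis then pulls $\lambda$ out of $V_{k+1}$, and since $\lambda>0$ this multiplication commutes with the $\max$ over the finite action set $\Diamond$, yielding homogeneity of $V_k$.

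Second, I would transfer this to $V_k^i$ by a change of variables in the conditional expectations defining it. Writing $S^i=s^i e^{Z^i}$ with $Z^i$ independent of $s^i$, the substitution $s^i\mapsto\lambda s^i$ turns the inner argument $y+q^is^i(S^i/s^i-1)$ into $\lambda\bigl(y+q^is^i(e^{Z^i}-1)\bigr)$ and the first asset coordinate $S^i$ into $\lambda s^ie^{Z^i}$. To place the remaining coordinates on the same $\lambda$-scale, I would rewrite the outer expectation over $S^{i-}$ through its returns $R^{i-}$ (again $s^{i-}$-free in law), so that each coordinate fed into $V_{k+1}$ becomes a common $\lambda$-multiple of its unscaled counterpart. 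Applying the degree-one homogeneity of $V_{k+1}$ established in the first step then factors out the single $\lambda$, which is exactly the claimed identity $V_k^i(\lambda y,(\lambda s^i,s^{i-}))=\lambda V_k^i(y,(s^i,s^{i-}))$.

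The delicate point, and the step I expect to be the main obstacle, is the bookkeeping of the conditional laws under scaling together with the coordinate matching. One must verify that conditioning on the \emph{scaled} initial state produces precisely the $\lambda$-dilation of the law obtained from the unscaled state, which is exactly where the multiplicative (lognormal) structure of \eqref{eq:DiscreteTimeS} is indispensable, and that this dilation threads consistently through both the additive portfolio update \eqref{eq:DiscreteTimeX} and the $\max$ in the recursion. The subtlety is that only \emph{full} degree-one homogeneity of $V_{k+1}$ is available, so one has to arrange, via the return representation, that \emph{all} coordinates carry the uniform factor $\lambda$ before invoking it; keeping this scaling uniform across the coordinates that are not directly rescaled in the statement is what makes the homogeneity of $V_{k+1}$ applicable and what the proof must handle with care.
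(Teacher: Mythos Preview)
Your first step---backward induction for the \emph{full} homogeneity $V_k(\lambda x,\lambda s)=\lambda V_k(x,s)$---is correct and goes through exactly as you outline, using that the one-step returns $R^j=S^j_{t_{k+1}}/s^j$ have $s^j$-free laws. The gap is in the transfer step. When you evaluate $V_k^i\bigl(\lambda y,(\lambda s^i,s^{i-})\bigr)$, the conditioning $S^i\mid\lambda s^i$ yields $S^i=\lambda s^i R^i$, so the $i$-th asset coordinate fed into $V_{k+1}$ does carry $\lambda$; but the conditioning $S^{i-}\mid s^{i-}$ is on the \emph{unscaled} initial value, so those coordinates are $s^{i-}R^{i-}$ with no $\lambda$ anywhere. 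Rewriting via returns does not manufacture the missing factor: the law of $R^{i-}$ is $s^{i-}$-free, yes, but the actual state entering $V_{k+1}$ is still $s^{i-}R^{i-}$, not $\lambda s^{i-}R^{i-}$. Since only full homogeneity of $V_{k+1}$ is available and the $s^{i-}$-coordinates refuse to scale, you cannot pull out a uniform $\lambda$. The obstacle you flag as ``the main'' one is therefore not surmountable along this route: no change of variables in the outer expectation can produce the missing factor without altering the hypothesis, which fixes $s^{i-}$ unscaled.

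The paper's argument is different and more limited: it treats only the case $k=T-1$ explicitly, exploiting that the inner function $V_T(x,s)=|x|$ is \emph{independent of $s$}. In that case the asset arguments of $V_T$ are irrelevant, the factor $\lambda$ sits purely on the $x$-argument and comes straight out of $|\cdot|$, and a change of measure identifies the conditional laws under $S^i\mid\lambda s^i$ and $S^i\mid s^i$. For $k<T-1$ the paper writes ``Analogously'', but observe that carrying this through would require precisely the partial homogeneity of $V_{k+1}$ that your transfer step attempts; once $V_{k+1}$ genuinely depends on all asset coordinates through the $\max$ over $\Diamond$, this partial homogeneity is \emph{not} a consequence of the full homogeneity you established (for instance $V_{T-1}(0,(\lambda s^1,s^2))=\max(\lambda s^1,s^2)\cdot c\neq\lambda\max(s^1,s^2)\cdot c$ when volatilities agree). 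In short, your full-homogeneity lemma is sound but proves a different, jointly homogeneous statement; the asserted partial homogeneity does not follow from it, and your return-representation device does not close the gap.
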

\begin{proof}
First,
\begin{align*}
   V_{T-1}^i\left(\lambda y, \left(\lambda s^{i},s^{i-}\right)\right)&=\E_{S^{i-}|s^{i-}}\left[\E_{S^{i}| \lambda s^{i}}\left[V_{T}\left(y\lambda+q^is^i\lambda\left(\left(\frac{S^i}{s^i\lambda}\right)-1\right), \left(S^{i}\frac{\lambda}{\lambda},S^{i-}\right)\right)\right]\right]\\
   =&\E_{S^{i-}|s^{i-}}\left[\E_{S^{i}| \lambda s^{i}}\left[\left|y\lambda+q^is^i\lambda\left(\left(\frac{S^i}{s^i\lambda}\right)-1\right)\right|\right]\right]\\
   =&\lambda \E_{S^{i-}|s^{i-}}\left[\E_{S^{i}| \lambda s^{i}}\left[\left|y+q^is^i\left(\left(\frac{S^i}{s^i\lambda}\right)-1\right)\right|\right]\right]\\
   =&\lambda \E_{S^{i-}|s^{i-}}\left[\E_{S^{i}|  s^{i}}\left[\left|y+q^is^i\left(\left(\frac{S^i}{s^i}\right)-1\right)\right|\right]\right]\\
   =&\lambda V_{T-1}^i\left( y, \left( s^{i},s^{i-}\right)\right),
\end{align*}
where the penultimate equality follows from a change of measure. Analogously, the same can be shown for $k\in\{0,\ldots,T-2\}$.
\end{proof}

\begin{lemma}\label{le:Vrepresentation}
For any fixed $s\in\R_+^d$,
and $k\in\{1,\ldots,N\}$, define the function $\psi:\R\to\R_+$, 
$\psi(y):=V_k(y,s)$. 
There exist a probability measure $\mu$ (that depends on $s$) on $\R_+$ s.t.
\begin{equation}
    \psi(y)=\int_{\R_+}\max(|y|,z)\,d\mu(z).
\end{equation}
\end{lemma}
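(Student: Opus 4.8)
The plan is to prove the claim by reducing it to a purely analytic characterization and then verifying, essentially by backward induction on $k$, that $V_k(\cdot,s)$ falls under that characterization. Concretely, I would first isolate the following auxiliary statement: any function $\psi:\mathbb{R}\to\mathbb{R}_+$ that is (i) even, (ii) convex, and (iii) asymptotically equal to the absolute value, meaning $\psi(y)-|y|\to 0$ as $|y|\to\infty$, can be written as $\psi(y)=\int_{\mathbb{R}_+}\max(|y|,z)\,d\mu(z)$ for a probability measure $\mu$ on $\mathbb{R}_+$. Granting this, the lemma follows once I check that $\psi_{k,s}(\cdot):=V_k(\cdot,s)$ satisfies (i)--(iii) for every fixed $s$, with the resulting measure $\mu=\mu_s$ legitimately depending on $s$.

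To prove the auxiliary claim I would work on $[0,\infty)$ (using evenness) and set $G(y):=\psi'_+(y)$, the right derivative. By convexity $G$ is nondecreasing and right-continuous, and condition (iii) forces $G(\infty)=1$. I would then let $\mu$ be the Lebesgue--Stieltjes measure of $G$ on $(0,\infty)$ together with an atom of mass $G(0)=\psi'_+(0)$ at the origin, so that $\mu([0,y])=G(y)$ and hence $\mu(\mathbb{R}_+)=G(\infty)=1$, a genuine probability measure. Writing $\tilde\psi(y):=\int_{\mathbb{R}_+}\max(y,z)\,d\mu(z)$ for $y\ge 0$, one computes $\tilde\psi'_+(y)=\mu([0,y])=\psi'_+(y)$, so $\psi$ and $\tilde\psi$ differ by a constant. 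Matching them as $y\to\infty$ via $\psi(y)-y\to 0$ and $\tilde\psi(y)-y=\int_{\mathbb{R}_+}(z-y)^+\,d\mu\to 0$ (dominated convergence, valid since $\int z\,d\mu=\psi(0)<\infty$) pins this constant to $0$, giving $\psi=\tilde\psi$ on $[0,\infty)$ and then on $\mathbb{R}$ by evenness.

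It remains to establish (i)--(iii) for $V_k(\cdot,s)$. Properties (i) and (ii) I would obtain by backward induction from the terminal case $V_N(\cdot,s)=|\cdot|$: evenness propagates because $\Diamond=-\Diamond$ allows the substitution $q\mapsto -q$ under the maximum once $V_{k+1}$ is even, while convexity propagates because $x\mapsto V_{k+1}\big(x+\sum_j q^j(S^j_{t_{k+1}}-s^j),S_{t_{k+1}}\big)$ is a convex-after-affine composition and both expectation and finite maxima preserve convexity. For the asymptotic property (iii) I would argue directly from the global representation $V_k(x,s)=\max_{\mathbf q}\mathbb{E}[\,|X_T|\mid X_{t_k}=x,S_{t_k}=s]$, where $X_T=x+M$ with $M$ a centered discounted $\mathbb{Q}$-martingale increment whose second moment is bounded \emph{uniformly} in the admissible strategy $\mathbf q$ (positions are $\ell_1$-bounded by $1$ and the discounted assets are square-integrable over the finite horizon). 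Jensen gives $V_k(x,s)\ge|x|$, while for $x>0$ one has $\mathbb{E}|X_T|=x+2\mathbb{E}[X_T^-]$ with $X_T^-\le|M|\mathbf{1}_{\{|M|\ge x\}}$, so $\mathbb{E}[X_T^-]\le \mathbb{E}[M^2]/x\to 0$ uniformly in $\mathbf q$ by Cauchy--Schwarz and Chebyshev; evenness handles $x\to-\infty$. Hence $V_k(x,s)-|x|\to 0$.

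I expect the main obstacle to be the auxiliary analytic claim in combination with the uniformity in (iii): the delicate points are that the measure reconstructed from $\psi'_+$ must have total mass \emph{exactly} one, which is precisely equivalent to the asymptotic-slope-one behaviour and not merely convexity, and that taking the maximum over trading strategies must not spoil the uniform second-moment bound that drives $\mathbb{E}[X_T^-]$ to zero. Everything else (preservation of evenness and convexity, the Stieltjes construction, the matching of the constant) is routine once these two facts are in place.
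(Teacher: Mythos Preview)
Your proof is correct and structurally parallel to the paper's: both reduce the lemma to showing that $V_k(\cdot,s)$ is even, convex, dominates $|\cdot|$, and satisfies $V_k(x,s)-|x|\to 0$, and then invoke an integral representation for such functions. The differences are in the details. The paper simply cites \cite[Lemma~6.4]{delbaen2002passport} for the representation, whereas you supply the (standard) Stieltjes construction from $\psi'_+$ explicitly; this buys self-containment at no real cost. For the asymptotic condition~(iii), the paper argues by a one-step backward induction directly on the Bellman recursion, showing that $\lim_{x\to\infty}V_{k+1}(x,s)-x=0$ implies the same for $V_k$, while you bypass the recursion entirely and bound $V_k(x,s)-x=2\sup_{\mathbf q}\mathbb{E}[X_T^-]$ by a uniform-in-$\mathbf q$ second-moment tail estimate. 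Your route is slightly cleaner and makes the role of the $\ell_1$-constraint (through the uniform $L^2$ bound on the martingale increment) more transparent; the paper's induction is more mechanical but stays closer to the dynamic-programming structure used in the rest of the argument.
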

\begin{proof}
We first show by induction over k that uniformly in $s$
\begin{equation*}
    \lim_{x\to\infty}V_k(x,s)-x =0, \quad \forall k\in\{1,\ldots,N\}.
\end{equation*}

For $k=0$, we have $\psi(x)-x=V_0(x,s)-x=|x|-x$, which goes to zero for $x\to\infty$ uniformly in $s$. Assume now that for some $k\in\Np, \lim_{x\to\infty}V_k(x,s)-x =0$ uniformly in $s$. Then
\begin{align*}
    V_{k+1}(x,s)-x&=\max_{q\in\Diamond}\Exsk\left[V_k\left(\underbrace{x\left(\frac{S^i_{t_{k+1}}}{s^i}\right)+q^is^i\left(1-\left(\frac{S^i_{t_{k+1}}}{s^i}\right)\right)}_{=:\tilde{x}}, S_{t_{k+1}}\right)-x\right]\\
    &=\max_{q\in\Diamond}\Exsk\left[V_k\left(\tilde{x}, S_{t_{k+1}}\right)-\tilde{x}+\sum_{j=1}^dq^js^j\left(\frac{S^j_{t_{k+1}}}{s^j}-1\right)\right].
\end{align*}
This implies that
\begin{align*}
    \lim_{x\to\infty} V_{k+1}(x,s)-x&=\max_{q\in\Diamond}\Exsk\left[\underbrace{\left(\lim_{x\to\infty}V_k\left(\tilde{x}, S_{t_{k+1}}\right)-\tilde{x}\right)}_{=0}+\sum_{j=1}^dq^js^j\left(\frac{S^j_{t_{k+1}}}{s^j}-1\right)\right]\\
    &=\max_{q\in\Diamond}\Exsk\left[\sum_{j=1}^dq^js^j\left(\frac{S^j_{t_{k+1}}}{s^j}-1\right)\right]\\
    &=\max_{q\in\Diamond}\sum_{j=1}^dq^js^j\Exsk\left[\frac{S^j_{t_{k+1}}}{s^j}-1\right]=0.
\end{align*}

Analogously to \citep[Lemma 6.3.]{delbaen2002passport}, one can further show that
\begin{enumerate}
    \item $\psi$ is convex,
    \item $\psi(-x)=\psi(x)$,
    \item $\psi(x)\ge |x|$ and $\lim_{x\to\infty}\psi(x)/x=1$.
\end{enumerate}
The result then follows from the proof of \cite[Lemma 6.4.]{delbaen2002passport}.

\end{proof}

\begin{remark}\label{re:varphiRepresentations}
Note that with the notation of \Cref{def:pmInvestment}, we have (dropping the time index in the notation of the asset)
\begin{align*}
 \varphi^i_+(z)&:=\Esicxsk\left[\max\left\{\bigg|S^i\left(\frac{|x|-s^i}{s^i}\right)+s^i\bigg|,z\right\}\right],\\
        \varphi^i_{-}(z)&:=\Esicxsk\left[\max\left\{\bigg|S^i\left(\frac{|x|+s^{i}}{s^i}\right)-s^i\bigg|,z\right\}\right].
\end{align*}
for $S^i\sim\logN\left(\log(s^i)-\frac{(\sigma^i)^2\Delta t_{k+1}}{2}, \sigi\sqrt{\Delta t_{k+1}}\right)$, with $\Delta t_{k+1}:=t_{k+1}-t_{k}$. Further, let $\kappa:=\frac{|x|+s^{i}}{s^i}, \tilde{\kappa}:=\frac{|x|-s^{i}}{s^i}$. Then we get
 \begin{align*}
     \varphi^i_-(z)&=\Esicxsk\left[z+\left(S^i\kappa-s^i-z\right)_++\left(-S^i\kappa+s^i-z\right)_+\right],\\
     &=z+\Esicxsk\left[\left(S^i\kappa-(s^i+z)\right)_+\right]\\
     &\hspace{3mm}+\Esicxsk\left[\left(S^i\kappa-(s^i-z)\right)_+\right]-\Esicxsk\left[S^i\kappa\right]+s^i-z\\
     &=\Esicxsk\left[\left(S^i\kappa-(s^i+z)\right)_+\right]+\Esicxsk\left[\left(S^i\kappa-(s^i-z)\right)_+\right]-|x|.
\end{align*}
Analogously,
\begin{align*}
      \varphi^i_+(z)&=\Esicxsk\left[\left(S^i\tilde{\kappa}+(s^i+z)\right)_+\right]+\Esicxsk\left[\left(S^i\tilde{\kappa}+(s^i-z)\right)_+\right]-|x|.
\end{align*}
\end{remark}

\begin{lemma}\label{le:negSignMax}
With the notation of \Cref{def:pmInvestment}, we have
\begin{equation*}
    \max\left\{\int_{\R_+}\varphi^i_+(z)\,d\muim(z),\int_{\R_+}\varphi^i_-(z)\,d\muim(z)\right\}=\int_{\R_+}\varphi^i_-(z)\,d\muim(z),\quad i=1\ldots, d,
\end{equation*}
provided that $x\neq 0$.
\end{lemma}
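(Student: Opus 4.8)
The plan is to prove the stronger pointwise statement $\varphi^i_-(z)\ge\varphi^i_+(z)$ for every $z\ge 0$; since $\muim$ is a (nonnegative) probability measure, integrating this inequality against $\muim$ immediately gives the claim. To set up the comparison I would put $a:=|x|>0$ (this is where the hypothesis $x\neq 0$ enters) and $b:=s^i>0$, and let $u:=S^i_{t_{k+1}}/s^i$, which under $\Esicxsk$ is log-normal with $\Esicxsk[u]=1$. With the even, convex function $g(w):=\max\{|w|,z\}$, \Cref{def:pmInvestment} reads $\varphi^i_+(z)=\Esicxsk[g((a-b)u+b)]$ and $\varphi^i_-(z)=\Esicxsk[g((a+b)u-b)]$, and the two arguments $R_+:=(a-b)u+b$ and $R_-:=(a+b)u-b$ share the common mean $a$.

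First I would dispose of the regime $|x|\ge s^i$, i.e.\ $a\ge b$, by a pure convexity argument: here $\lambda:=\tfrac{a-b}{a+b}\in[0,1)$ and $R_+=\lambda R_-+(1-\lambda)a$ is a genuine convex combination of $R_-$ and the constant $a=\Esicxsk[R_-]$, so convexity of $g$ and then Jensen yield $\Esicxsk[g(R_+)]\le\lambda\,\Esicxsk[g(R_-)]+(1-\lambda)g(a)\le\Esicxsk[g(R_-)]$. The remaining and genuinely harder regime is $|x|<s^i$: there $\lambda<0$, the analogous combination is only affine, and the convexity bound breaks down. In fact the inequality is \emph{false} for a general mean-one law of $u$ in this regime (one can exhibit a two-point law with $\Esicxsk[g(R_+)]>\Esicxsk[g(R_-)]$), so the proof must exploit the specific log-normal distribution rather than the first moment alone.

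The tool for the hard regime is the self-duality already used in the proof of \Cref{thm:optimalStrat}: under the change of measure $\tfrac{dQ}{d\Q}=S^i_{t_{k+1}}/s^i$ the law of $u$ equals that of $1/u$, which translates into the identity $\Esicxsk[G(u)]=\Esicxsk[u\,G(1/u)]$ for every measurable $G$ (equivalently, the put--call-type symmetry $\Esicxsk[(\alpha u-K)_+]=\Esicxsk[(Ku-\alpha)_+]+\alpha-K$). Applying this to each of $\varphi^i_\pm$ and averaging the original with the reflected expression, I would write $2\big(\varphi^i_-(z)-\varphi^i_+(z)\big)=\Esicxsk[J(u)]$ with the pointwise integrand
\[
J(u)=\max\{|(a+b)u-b|,z\}+\max\{|bu-(a+b)|,uz\}-\max\{|(a-b)u+b|,z\}-\max\{|bu-(b-a)|,uz\},
\]
where the reflected terms come from the homogeneity relations $u\,g\big((a\pm b)/u\mp b\big)=\max\{|bu\mp(a+b)|,uz\}$. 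The crux then reduces to the elementary claim $J(u)\ge 0$ for all $u>0$ and $z\ge 0$ (with equality on a whole interval around $u=1$).

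I expect the verification of $J(u)\ge 0$ to be the main obstacle. Although $J$ is piecewise linear in $u$, the four $\max$-terms generate many regions, delimited by the breakpoints $u\in\{\tfrac{b}{a+b},\tfrac{b}{b-a},\tfrac{a+b}{b},\dots\}$ together with the thresholds at which the floors $z$ (resp.\ $uz$) become active, and nonnegativity must be checked on each. I would organize this by splitting on the signs of the four arguments and on whether the floor dominates, collapsing $J$ to an explicit affine function on each region whose sign is read off from $a,b>0$; the fact that the reflected terms $\max\{|bu-(a+b)|,uz\}$ and $\max\{|bu-(b-a)|,uz\}$ are already in the same functional form as the un-reflected ones is precisely what makes the needed cancellations visible and keeps the case analysis finite.
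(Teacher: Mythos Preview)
Your route is sound but genuinely different from the paper's. The paper does not split on $|x|\gtrless s^i$ nor use the log-normal self-duality here; instead it studies $f(z):=\varphi^i_-(z)-\varphi^i_+(z)$ directly as a function of $z$ and invokes \cite[Lemma~6.5]{delbaen2002passport}: it checks $\lim_{z\to\infty}f(z)=0$, computes $f'(z)$ in closed form and shows it has at most one zero, and proves $f(0)>0$ via a separate median inequality for log-normals (\Cref{le:medianIneq}). Your convexity/Jensen argument for $|x|\ge s^i$ is cleaner than anything in the paper and works for \emph{any} mean-one law of $u$, which is a genuine gain. For $|x|<s^i$ your duality symmetrisation is correct and reduces the log-normal dependence to the identity $\Esicxsk[G(u)]=\Esicxsk[uG(1/u)]$; the resulting pointwise claim $J(u)\ge 0$ is true (and even satisfies $J(u)=uJ(1/u)$, so you may restrict to $u\ge 1$), but the piecewise-linear case analysis you defer is where all the work sits---it is not shorter than the paper's four-condition check, merely more elementary. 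In short: the paper trades your case split and region bookkeeping for a structural lemma plus a median comparison; your approach is more self-contained but leaves the tedious part unexecuted.
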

\begin{proof}
Let $i\in\{1,\ldots,d\}$ and define $f:\R_+\to\R, f(z):=\varphi^i_-(z)-\varphi^i_+(z)$ for $i=1,\ldots,d$. We will show that
$f(z)\ge0$ for all $z\in\R_+$.
By \cite[Lemma 6.5.]{delbaen2002passport} this follows, if \begin{enumerate}
    \item $\lim_{z\to\infty}f(z)=0$,
    \item for some $z$ big enough, $f(z)>0$,
    \item $f(0)\ge0$,
    \item there is at most one $z_0>0$ such that $f'(z)=0$.
\end{enumerate}
By \Cref{re:varphiRepresentations}, \begin{align*}
f(z)&=\Esicxsk\left[\left(S^i\kappa-(s^i+z)\right)_+\right]+\Esicxsk\left[\left(S^i\kappa-(s^i-z)\right)_+\right]\\
&-\Esicxsk\left[\left(S^i\tilde{\kappa}+(s^i+z)\right)_+\right]-\Esicxsk\left[\left(S^i\tilde{\kappa}+(s^i-z)\right)_+\right]
.\end{align*}
\begin{enumerate}
    \item For large enough $z$ we thus have
    
    \begin{align*}
        \lim_{z\to\infty}f(z)&= \lim_{z\to\infty}\Esicxsk\left[\left(S^i\kappa-(s^i-z)\right)_+-\left(S^i\tilde{\kappa}+(s^i+z)\right)_+\right]\\
        &=  \lim_{z\to\infty}\Esicxsk\left[\left(S^i\kappa-(s^i-z)\right)-\left(S^i\tilde{\kappa}+(s^i+z)\right)\right]=0\\
    \end{align*}
 and thus $\lim_{z\to\infty}f(z)=0$.
 \item By continuity of $f$, there exists $z$ such that $f(z)>0$, due to 3.
    \item \begin{align*}
        f(0)&=\Esicxsk\left[\bigg|S^i\left(\frac{|x|+s^{i}}{s^i}\right)-s^i\bigg|\right]-\Esicxsk\left[\bigg|S^i\left(\frac{|x|-s^i}{s^i}\right)+s^i\bigg|\right]\\
        &=\Esicxsk\left[\bigg|S^i-(s^i+|x|)\bigg|\right]-\Esicxsk\left[\bigg|S^i-(s^i-|x|)\bigg|\right]
    \end{align*} and thus $f(0)>0$ by \Cref{le:medianIneq}.
    \item For any $z>0$
    \begin{align*}
        \frac{\partial}{\partial z}f(z) 
    &=\P\left[S^i\kappa>s^i+z\right]+\P\left[S^i\kappa>s^i-z\right]-\P\left[S^i\tilde{\kappa}>-s^i-z\right]-\P\left[S^i\tilde{\kappa}>-s^i+z\right]\\
    &=-\P\left[s^i-z<S^i\kappa\le s^i+z\right]+\P\left[s^i-z<S^i\tilde{\kappa}\le s^i+z\right]\neq 0
    \end{align*}
since 
$\kappa\neq\tilde{\kappa}$.
\end{enumerate}

Thus $f(z)\ge0$ for every $z\ge0$ and hence
\begin{align*}
    \int_{\R_+}\varphi^i_+(z)\,d\muim(z)<\int_{\R_+}\varphi^i_-(z)\,d\muim(z).
\end{align*}
\end{proof}

\begin{lemma}\label{le:medianIneq}
Let $S\sim\logN(\mu,\sigma)$ with median $m=e^\mu$, and $c_1,c_2\in\R$ s.t. $c_1>\max\{m,c_2\}$. Then
\begin{equation}
    |m-c_1|>|m-c_2|\implies\E[|S-c_1|]>\E[|S-c_2|].
\end{equation}
\end{lemma}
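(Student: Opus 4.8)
The plan is to recast the claim in terms of the expected-absolute-deviation function $g(c):=\E[|S-c|]$ and to exploit that $g$ is convex with $g'(c)=2F(c)-1$, where $F$ denotes the CDF of $S$; since $F(m)=1/2$, the function $g$ attains its minimum at the median $m$ and is strictly decreasing on $(-\infty,m)$ and strictly increasing on $(m,\infty)$. The hypotheses are $c_1>\max\{m,c_2\}$ (so in particular $c_1-m>0$) together with $c_1-m>|m-c_2|$. First I would dispose of the easy regime $m\le c_2<c_1$: there $[c_2,c_1]\subset[m,\infty)$, where $g'=2F-1>0$ strictly (the lognormal CDF is strictly increasing and exceeds $1/2$ above the median), so $g(c_1)>g(c_2)$ follows immediately by monotonicity.

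The substantive case is $c_2<m<c_1$. Writing $a:=c_1-m>0$ and $b:=m-c_2>0$, the hypothesis reads $a>b$, and
\[
g(c_1)-g(c_2)=\int_{c_2}^{c_1}\bigl(2F(c)-1\bigr)\,dc=2\int_0^b\Psi(v)\,dv+\int_b^a\bigl(2F(m+v)-1\bigr)\,dv,
\]
where $\Psi(v):=F(m+v)+F(m-v)-1$. The second integral is strictly positive because $F(m+v)>1/2$ for $v>0$, and the plan is to show this positive contribution outweighs the first term. Passing to log-coordinates is the natural device: since $m=e^\mu$, one has $F(m+v)=\Phi\bigl(\tfrac{1}{\sigma}\log(1+v/m)\bigr)$ and $F(m-v)=\Phi\bigl(\tfrac{1}{\sigma}\log(1-v/m)\bigr)$, so that the sign of $\Psi(v)$ equals the sign of $\log\bigl((1+v/m)(1-v/m)\bigr)=\log(1-v^2/m^2)$.

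Here lies the main obstacle, and it is a genuine one: because $\log(1-v^2/m^2)<0$, we get $\Psi(v)<0$, i.e.\ the lognormal is right-skewed and $g$ is \emph{steeper to the left} of the median than to the right. Consequently a naive pointwise comparison of the two integrands above runs the wrong way, and the inequality can hold only because $c_1$ is strictly farther from $m$ than $c_2$ is. I would therefore argue quantitatively rather than pointwise, using the explicit Black--Scholes form $\E[(S-c)^+]=e^{\mu+\sigma^2/2}\Phi(d_1)-c\,\Phi(d_2)$ with $d_1=\tfrac{\mu+\sigma^2-\log c}{\sigma}$ and $d_2=d_1-\sigma$, together with the identity $g(c)=2\E[(S-c)^+]-\E[S]+c$, and analyze the one-variable difference $g(m+a)-g(m-b)$ directly.

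An equivalent and perhaps cleaner route is to write $g(c_1)-g(c_2)=\E[\phi(S)]$ for $\phi(s):=|s-c_1|-|s-c_2|$, to observe that $\phi$ is nonincreasing and antisymmetric about the midpoint $M:=(c_1+c_2)/2$, and hence to reduce the claim to the positivity of $\int_0^\infty\min\bigl(u,\tfrac{c_1-c_2}{2}\bigr)\,\bigl[f(M-u)-f(M+u)\bigr]\,du$, where $f$ is the lognormal density; one would then control the sign using unimodality of $f$ (mode $e^{\mu-\sigma^2}<m<M$). I expect the delicate point to be precisely this skewness bookkeeping, since $a>b$ alone does not obviously dominate $2\int_0^b|\Psi(v)|\,dv$. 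I would therefore make explicit the structural link exploited in the intended application --- where $c_1,c_2$ are symmetric about $s^i$ with $m=s^i e^{-(\sigma^i)^2\Delta t/2}$, so that the gap $a-b=2(s^i-m)$ is itself tied to the volatility --- rather than rely on the bare inequality $a>b$.
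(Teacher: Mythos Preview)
Your suspicion is well founded: the lemma as stated is \emph{false}, and your decomposition already furnishes counterexamples. With $a=c_1-m$, $b=m-c_2$ and
\[
g(c_1)-g(c_2)=2\int_0^b\Psi(v)\,dv+\int_b^a\bigl(2F(m+v)-1\bigr)\,dv,\qquad \Psi(v)<0,
\]
fix any $b\in(0,m)$ and let $a=b+\varepsilon$: the second integral is $O(\varepsilon)$ while the first is a fixed negative number, so $g(c_1)<g(c_2)$ for $\varepsilon$ small. Concretely, for $\mu=0$, $\sigma=1$ (so $m=1$), one computes $g(1.55)\approx 1.227<1.243\approx g(0.5)$, even though $|m-1.55|=0.55>0.5=|m-0.5|$.

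The paper's own argument breaks at exactly the step you flagged. In its Case~2 it asserts
\[
\tfrac12(m-c_2)-\int_{c_2}^{m}F(s)\,ds \;\le\; \tfrac12(m-c_1),
\]
but the left side is nonnegative (because $F<\tfrac12$ on $(c_2,m)$) while the right side is negative. (The Case~1 formula also carries a sign slip: the correct identity is $\E[|S-c_1|]=\E[|S-m|]+2\bigl(-\tfrac12(c_1-m)+\int_m^{c_1}F\bigr)$.) So the paper's route does not establish the general statement, and cannot, since the statement is false.

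Your proposed fix---restrict to the configuration actually used, namely $c_1=s^i+|x|$, $c_2=s^i-|x|$, which are symmetric about $\E[S]=s^i$ rather than about $m$---is the correct move, and it goes through cleanly. Setting $a=|x|$ and
\[
h(a):=\int_{\E[S]-a}^{\E[S]+a}\bigl(1-F(s)\bigr)\,ds-a,
\]
the desired inequality is $h(a)<0$. One has $h(0)=0$, $\lim_{a\to\infty}h(a)=0$ (since $\int_0^\infty(1-F)=\E[S]$), and
\[
h'(a)=1-F(\E[S]+a)-F(\E[S]-a)
\]
changes sign exactly once on $(0,\infty)$: writing the lognormal CDF in terms of $\Phi$, one finds $h'(a)<0$ iff $a<\E[S]\sqrt{1-e^{-\sigma^2}}$, and $h'(a)>0$ beyond (for $a\ge\E[S]$ the term $F(\E[S]-a)$ vanishes and $h'(a)=1-F(\E[S]+a)>0$). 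Hence $h$ dips below $0$ and tends back to $0$ without crossing, so $h(a)<0$ for all $a>0$. This is the version of the inequality the application needs, and it is what you should prove.
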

\begin{proof}
    Note that $c_1\ge m$. First, we re-write
    \begin{align*}
        \E[|S-c_1|]=\E[|S-m+\overbrace{m-c_1}^{=-|m-c_1|}|]=& \E[\left(|S-m|+|m-c_1|\right)\1_{\{S\le m\}}]\\
        &+\E[-\left(|S-m|-|m-c_1|\right)\1_{\{ m<S\le c_1\}}]\\
         &+\E[\left(|S-m|-|m-c_1|\right)\1_{\{c_1\le S\}}]\\
         =& \E[|S-m|]-2\E[|S-m|\1_{\{ m<S\le c_1\}}]\\
         &+|m-c_1|\underbrace{\left(\P[S\le c_1]-\P[S\ge c_1]\right)}_{=2\P[m\le S\le c_1]}.
    \end{align*}
    Denoting by $f$ and $F$ the density and cdf of $S$ respectively, we thus get
    \begin{align*}
         \E[|S-c_1|]&=\E[|S-m|]+2\left(\int_m^{c_1}\left[(m-s)-(m-c_1)\right]f(s)\,ds\right)\\
         &=\E[|S-m|]+2\left((c_1-s)F(s)\bigg|^{c_1}_m+  \int_m^{c_1}F(s)\,ds\right) \\
         &=\E[|S-m|]+2\left(0.5(c_1-m)+ \int_m^{c_1}F(s)\,ds\right).
         \end{align*}
    Furthermore we distinguish two cases:
    \begin{enumerate}
        \item $m<c_2<c_1$: Analogously to above we get
        \begin{align*}
             \E[|S-c_2|]&=\E[|S-m|]+2\left(0.5(c_2-m)+ \int_m^{c_2}F(s)\,ds\right)\\
             &<\E[|S-m|]+2\left(0.5(c_1-m)+ \int_m^{c_1}F(s)\,ds\right)\\
             &=\E[|S-c_1|],
        \end{align*}
        since $c_1>c_2$ and $F(s)>0$ for all $s$.
        \item $c_2<m<c_1$: We first re-write
        \begin{align*}
             \E[|S-c_2|]&=\E[\left(|S-m|+|m-c_2|\right)\1_{\{ m\le S\}}]\\
        &+\E[\left(-|S-m|+|m-c_2|\right)\1_{\{ c_2<S\le m\}}]\\
         &+\E[-\left(-|S-m|+|m-c_2|\right)\1_{\{ S\le c_2\}}]\\
         &=\E[|S-m|]-2\E[|S-m|\1_{\{ c_2<S\le m\}}]\\
         &+|m-c_2|\underbrace{\left(\P[c_2\le S]-\P[S\le c_2]\right)}_{=2\P[c_2\le S\le m]}.
        \end{align*}
        Furthermore,
         \begin{align*}
         \E[|S-c_2|]&=\E[|S-m|]+2\left(\int_{c_2}^{m}\left[(s-m)+(m-c_2)\right]f(s)\,ds\right)\\
         &=\E[|S-m|]+2\left((s-c_2)F(s)\bigg|_{c_2}^m-  \int_{c_2}^{m}F(s)\,ds\right) \\
         &=\E[|S-m|]+2\left(0.5(m-c_2)- \int^m_{c_2}F(s)\,ds\right)\\
         &<=\E[|S-m|]+2\left(0.5(m-c_1)\right)\\
         &<\E[|S-c_1|],
         \end{align*}
         since by assumption $|m-c_1|>|m-c_2|$.
    \end{enumerate}
\end{proof}

\begin{lemma}\label{le:varphi_CP_condition}
    Let $i,j\in\{1,\ldots,d\}$, 
    and assume the notations as in the proof of \Cref{le:whichAssetIsMax}, and $\varphi_-$ as in \Cref{def:pmInvestment}. It holds that\begin{align*}
    \varphi^j_-(z)-\varphi^i_-(z)\ge 0, \forall z\ge 0,
\end{align*}
if and only if 
\begin{align*}
CP^j(s^j/\kappa^j)\kappa^j>CP^i(s^i/\kappa^i)
\kappa^i,   
\end{align*}
where $CP^i(K)$ denotes the call price with maturity $\Delta t_n$ on asset $S^i$ with strike $K$.
\end{lemma}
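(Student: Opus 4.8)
The plan is to set $g(z) := \varphi^j_-(z) - \varphi^i_-(z)$ and observe that the asserted equivalence is precisely the statement that $g \ge 0$ on $[0,\infty)$ holds if and only if $g(0) > 0$. First I would make the link to call prices explicit. Starting from the representation in \Cref{re:varphiRepresentations}, I factor the scale out of each in-the-money expectation via $\Esicxsk\left[(S^i\kappa^i - K)_+\right] = \kappa^i\,\Esicxsk\left[(S^i - K/\kappa^i)_+\right] = \kappa^i\, CP^i(K/\kappa^i)$, which gives
\begin{equation*}
   \varphi^i_-(z) = \kappa^i\left[CP^i\!\left(\tfrac{s^i+z}{\kappa^i}\right) + CP^i\!\left(\tfrac{s^i-z}{\kappa^i}\right)\right] - |x|.
\end{equation*}
Evaluating at $z=0$ collapses the two call terms, so $\varphi^i_-(0) = 2\kappa^i CP^i(s^i/\kappa^i) - |x|$ and hence $g(0) = 2\left[\kappa^j CP^j(s^j/\kappa^j) - \kappa^i CP^i(s^i/\kappa^i)\right]$. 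Thus $g(0)>0$ is exactly the stated call-price inequality. (As a sanity check against \Cref{thm:optimalStrat} and \Cref{cor:PV0}, the Black--Scholes formula gives $\kappa^i CP^i(s^i/\kappa^i) = (s^i+|x|)\Phi(d_1^i) - s^i\Phi(d_2^i)$, matching the $\argmax$ objective there.)

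It then remains to show that the sign of $g$ on all of $[0,\infty)$ is governed by $g(0)$, and here I would mirror the argument of \Cref{le:negSignMax} and appeal to \cite[Lemma 6.5]{delbaen2002passport}. I would establish three facts. First, $\lim_{z\to\infty} g(z) = 0$: for $z$ large the deep in-the-money pieces linearize, and using the martingale property $\Esicxsk\left[S^i\kappa^i\right] = s^i\kappa^i = s^i + |x|$ the leading ``$+z$'' and ``$+|x|$'' contributions cancel between the two assets. Second, writing $Y^i := S^i\kappa^i$ (lognormal), one computes
\begin{equation*}
   g'(z) = \left[F_{Y^j}(s^j+z) - F_{Y^j}(s^j-z)\right] - \left[F_{Y^i}(s^i+z) - F_{Y^i}(s^i-z)\right],
\end{equation*}
so that $g'(0)=0$, and I would argue that $g'$ vanishes at most once on $(0,\infty)$ through a single-crossing argument for the two centered lognormal window-probabilities. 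Third, I would identify the sign of $g$ for large $z$ by comparing the right tails of $Y^i$ and $Y^j$. Combining these with \cite[Lemma 6.5]{delbaen2002passport} yields a monotonicity dichotomy that pins the global sign of $g$ to that of $g(0)$, giving the equivalence. This pointwise-in-$z$ conclusion is exactly what \Cref{le:whichAssetIsMax} needs, since $\varphi^j_-(z) \ge \varphi^i_-(z)$ for all $z$ transfers to $\int \varphi^j_-\,d\muim \ge \int \varphi^i_-\,d\muim$ for \emph{any} probability measure $\muim$, sidestepping the dependence of the measure on the chosen asset.

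The hard part will be the single-crossing and sign-control step. Unlike \Cref{le:negSignMax}, where the two window scales $\kappa$ and $\tilde\kappa$ had opposite signs and forced $f'\ne 0$ everywhere (no critical point at all), here both windows are built from positively-scaled lognormals $Y^i, Y^j$, so $g'$ genuinely can vanish and $g$ can in principle develop an interior critical point. Ruling out that $g$ dips below zero when $g(0)>0$ therefore cannot be read off from the endpoints alone; it requires verifying the positivity-at-large-$z$ hypothesis of \cite[Lemma 6.5]{delbaen2002passport}, i.e.\ a careful comparison of the two lognormal right tails at thresholds $s^i+z$ and $s^j+z$ under differing scales and volatilities. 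This tail/crossing analysis is where the real technical weight of the proof lies, and I would devote most of the effort to making that single-crossing property precise.
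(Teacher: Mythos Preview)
Your approach is essentially the same as the paper's: set $f(z)=\varphi^j_-(z)-\varphi^i_-(z)$, identify $f(0)$ with twice the call-price difference via \Cref{re:varphiRepresentations}, show $\lim_{z\to\infty}f(z)=0$, and invoke \cite[Lemma~6.5]{delbaen2002passport}. Two minor points of divergence: the paper gets the ``positivity at some $z$'' hypothesis for free from $f(0)>0$ and continuity, so your proposed right-tail comparison is unnecessary; and for the single-crossing condition the paper simply asserts that $f'(z)\neq 0$ on $(0,\infty)$ unless $s^i=s^j$ and $\sigma^i=\sigma^j$, which is precisely the step you (rightly) flag as the delicate one.
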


\begin{proof}
    Let $S^i$ respectively $ S^j$ as in \Cref{re:varphiRepresentations} and
\begin{equation*}
    f(z):=  \varphi^j_-(z)-\varphi^i_-(z).
\end{equation*}

Note that for $z$ large enough, 
\begin{align*}
    f(z)&= |x|+ \Esicxsk[j]\left[\left(S^j\kappa^j-(s^j-z)\right)_+\right]-|x|-\Esicxsk[i]\left[\left(S^i{\kappa^i}-(s^i-z)\right)_+\right]\\
        &=  \Esicxsk[j]\left[S^j\kappa^j-(s^j-z)\right]-\Esicxsk[i]\left[S^i{\kappa^i}-(s^i-z)\right]\\
         &= |x|+z-|x|-z=0.\\
\end{align*}
Moreover, by the Black Scholes pricing formula for every $i=1\ldots, d$ and $z\ge0$ we have
\begin{align}
    \Esicxsk\left[\left(S^i\kappa^i-(s^i+z)\right)_+\right]&=\kappa^i\left(s^i \Phi(d_1^i)-\frac{s^i+z}{\kappa^i}\Phi(d_2^i)\right), \\
    &=(s^i+|x|) \Phi(d_1^i)-(s^i+z)\Phi(d_2^i), \\
    d_1^i&=\frac{\log((s^i+|x|)/(s^i+z))+\frac{1}{2}(\sigma^i)^2\Delta t_{k+1}}{\sigi\sqrt{\Delta t_{k+1}}},\notag\\
    d_2^i &=d_1^i-\sigi\sqrt{\Delta t_{k+1}}.\notag\\
\end{align}
Furthermore, for $z=0$, 
\begin{align*}
     \varphi_-^i(z)&=|x|+2\kappa^i\Esicxsk\left[\left(S^i-\frac{s^i}{\kappa^i}\right)_+\right].
\end{align*}

Thus, the following hold.
\begin{enumerate}
\item$\lim_{z\to\infty}f(z)=0.$
\item
Since $f$ is continuous, 3. implies that there exists some $z_0$ such that $f(z_0)\ge 0$.
\item \begin{align*}
    f(0)&=|x|+2\kappa^j\left(\Esicxsk[j]\left[\left(S^j-\frac{s^j}{\kappa^j}\right)_+\right]\right)-|x|-2\kappa^i\left(\Esicxsk\left[\left(S^i-\frac{s^i}{\kappa^i}\right)_+\right]\right)\\
    &=2\left((s^j+|x|) \Phi(d_1^j)-s^j\Phi(d_2^j)-(s^i+|x|) \Phi(d_1^i)+s^i\Phi(d_2^i)\right)\big|_{z=0}
\end{align*}

We have $f(0)> 0$ if and only if
\begin{align*}
&\left((s^j+|x|) \Phi(d_1^j)-s^j\Phi(d_2^j)-(s^i+|x|) \Phi(d_1^i)+s^i\Phi(d_2^i)\right)\big|_{z=0}>0.
\end{align*}
\item For any $z>0$
    \begin{align*}
        \frac{\partial}{\partial z}f(z)
     &=\P\left[S^j\kappa^j>s^j+z\right]+\P\left[S^j\kappa^j>s^j-z\right]-\left(\P\left[S^i{\kappa^i}>s^i+z\right]+\P\left[S^i{\kappa^i}>s^i-z\right]\right)\\
     &=-\P\left[s^j-z<S^j\kappa^j\le s^j+z\right]+\P\left[s^i-z< S^i{\kappa^i}\le s^i+z\right]\\
    \end{align*}
Therefore, $\frac{\partial}{\partial z}f(z)$ can only be zero if and only if $$\P\left[s^j-z<S^j\kappa^j\le s^j+z\right]=\P\left[s^i-z< S^i{\kappa^i}\le s^i+z\right].$$
Thus, unless $s^i=s^j$, and $\sigi=\sigma^j$, for any $z>0$ $\frac{\partial}{\partial z}f(z)\neq 0$. 
\end{enumerate}

Thus, by \cite[Lemma 6.5.]{delbaen2002passport}, it follows from 1.-4. that $f(z)\ge0$ for all $z\ge0$ if and only if
\begin{align*}
&\left((s^j+|x|) \Phi(d_1^j)-s^j\Phi(d_2^j)-(s^i+|x|) \Phi(d_1^i)+s^i\Phi(d_2^i)\right)\big|_{z=0}>0.
\end{align*}
\end{proof}

\begin{lemma}\label{le:varphi_CP_condition_Sufficient}
    Let $i,j\in\{1,\ldots,d\}$,
    and assume the notation as in the proof of \Cref{le:whichAssetIsMax}.
    It holds that\begin{align*}
    \eqref{eq:CPcondition}\implies \int_{\Rp}\varphi^j_-(z)-\varphi^i_-(z)\,dz\ge 0.
\end{align*}

\end{lemma}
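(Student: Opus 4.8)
The plan is to deduce this integral inequality directly from the pointwise comparison already established in \Cref{le:varphi_CP_condition}. That lemma shows that, for fixed assets $i,j$, condition \eqref{eq:CPcondition}, i.e.\ $CP^j(s^j/\kappa^j)\kappa^j>CP^i(s^i/\kappa^i)\kappa^i$, is equivalent to the pointwise bound $\varphi^j_-(z)-\varphi^i_-(z)\ge 0$ for \emph{every} $z\ge 0$. Assuming \eqref{eq:CPcondition}, the integrand $\varphi^j_-(z)-\varphi^i_-(z)$ is therefore nonnegative on all of $\R_+$, and monotonicity of the Lebesgue integral immediately yields $\int_{\R_+}(\varphi^j_-(z)-\varphi^i_-(z))\,dz\ge 0$. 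At the level of logic the statement is thus a one-line consequence of the forward implication in \Cref{le:varphi_CP_condition}.

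The only point that genuinely needs care is that the integral is well-defined, since $\varphi^j_--\varphi^i_-$ merely tends to $0$ as $z\to\infty$ rather than being compactly supported. Here I would use the closed forms of \Cref{re:varphiRepresentations}: writing $Y^i:=S^i\kappa^i$ with $S^i$ lognormal as there, one has $\varphi^i_-(z)=\E[(Y^i-(s^i+z))_+]+\E[(Y^i-(s^i-z))_+]-|x|$, and for $z>s^i$ the second expectation equals exactly $\E[Y^i]-(s^i-z)=|x|+z$, because $Y^i>0>s^i-z$ almost surely and $\E[Y^i]=s^i\kappa^i=|x|+s^i$. Hence for $z>\max(s^i,s^j)$ the linear-in-$z$ parts cancel and $\varphi^j_-(z)-\varphi^i_-(z)=\E[(Y^j-(s^j+z))_+]-\E[(Y^i-(s^i+z))_+]$, a difference of lognormal call expectations whose tails decay faster than any polynomial; together with continuity on the remaining bounded interval this gives convergence of $\int_{\R_+}(\varphi^j_--\varphi^i_-)\,dz$, so the pointwise bound transfers to it.

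I would also record the self-contained alternative, in case one prefers not to invoke the full equivalence of \Cref{le:varphi_CP_condition}: integrate $f:=\varphi^j_--\varphi^i_-$ termwise, using $\int_0^\infty\E[(Y-(a+z))_+]\,dz=\frac{1}{2}\E[((Y-a)_+)^2]$ for the $(s+z)$-contributions, and for the $(s-z)$-contributions the truncated identity $\int_0^Z\E[(Y-(s-z))_+]\,dz=|x|Z+\frac{1}{2}Z^2+\int_0^{s}\E[(u-Y)_+]\,du$ valid for large $Z$. The divergent piece $|x|Z+\frac{1}{2}Z^2$ depends only on $|x|$ and cancels between the $i$- and $j$-terms, leaving a finite closed-form expression for $\int_{\R_+}f\,dz$ that one must then show is nonnegative under \eqref{eq:CPcondition}. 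The main obstacle on this second route is exactly the bookkeeping of the non-integrable linear and quadratic growth that has to be matched between the two assets; this is precisely why the first route, leaning on the already-established pointwise sign of $f$, is cleaner and is the approach I would actually carry out.
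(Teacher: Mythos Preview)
There is a genuine gap: you have misidentified which $\varphi^i_-$ the lemma is about. The hypothesis ``assume the notation as in the proof of \Cref{le:whichAssetIsMax}'' does \emph{not} point back to \Cref{def:pmInvestment}. In that proof, at the step for $q_{T-1}^*$, the symbol $\varphi^i_-$ is explicitly \emph{redefined} (equation \eqref{eq:Defvarphii}) as an expectation against the density of the measure $\mu$ representing $V_{T-1}$, namely a weighted version involving the factors $\Phi'(d_1^l)/((z+S^l_{T-1})\sigma^l\sqrt{\Delta T})$ and the indicators $M_l$, plus a $z$-independent constant term. \Cref{le:varphi_CP_condition_Sufficient} is invoked precisely for this redefined object, and the integral $\int_{\R_+}\cdots\,dz$ is against Lebesgue measure because $\mu$ now has a genuine density part. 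By contrast, \Cref{le:varphi_CP_condition} is stated for ``$\varphi_-$ as in \Cref{def:pmInvestment}'', so its pointwise conclusion does not transfer to the object appearing here; your one-line reduction therefore appeals to a lemma about a different function.

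A concrete symptom of the mismatch is your tail analysis. For the $\varphi_-$ of \Cref{def:pmInvestment} the difference $f=\varphi^j_--\varphi^i_-$ indeed tends to $0$ at infinity (this is exactly item~1 in the proof of \Cref{le:varphi_CP_condition}), and you spend effort showing the resulting integral converges. For the redefined $\varphi_-$, however, the paper's proof shows that $\lim_{z\to\infty}f(z)$ is \emph{strictly positive} precisely when \eqref{eq:CPcondition} holds; together with continuity of $f$ and finiteness of $f(0)$, this forces $\int_{\R_+}f(z)\,dz>0$ (in fact $+\infty$). So not only does your argument cite the wrong pointwise lemma, it also ``proves'' convergence of an integral that, under the intended definition, diverges. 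To repair the proof you would have to work directly with the redefined $\varphi^i_-$ of \eqref{eq:Defvarphii} and analyse its large-$z$ behaviour, which is what the paper does.
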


\begin{proof}
Recall for all $i$ the definition
  \begin{align*}
             \varphi^i_-(z):=&\Exsk\left[\max\{|\kappai{S^i_{T-1}}-s^i|,z\}\,\sum_{l=1}^d \frac{2\Phi'(d_1^l)}{(z+\Sl_{T-1})\sigl\sqrt{\Delta T}}M_l(z,\Sl)\right]\\
             &+2\Exsk\left[\left|\kappai{S^i_{T-1}}-s^i\right|\sum_{l=1}^d\left(2\Phi(\sigl\sqrt{\Delta T}/2)-1\right)\1_{\{\Sl_{T-1}\CPlOO\text{ is max}\}}\right].
         \end{align*}
    Let $S^i$ respectively $ S^j$ as in \Cref{re:varphiRepresentations} and
\begin{equation*}
    f(z):=  \varphi^j_-(z)-\varphi^i_-(z).
\end{equation*}

         For further derivations, we note that
         \begin{itemize}
             \item[a)]  $\varphi^i_-(\cdot)$ are continuous, and thus also $f$ is continuous. 
             \item[b)]for all $i$, \begin{align}\label{eq:varphii0}
             \varphi^i_-(0)=&\Exsk\left[\left|\kappai{S^i_{T-1}}-s^i\right|\sum_{l=1}^d\left(4\Phi\left(\frac{\sigl\sqrt{\Delta T}}{2}\right)-2+\frac{2\Phi'\left(\frac{\sigl\sqrt{\Delta T}}{2}\right)}{\Sl_{T-1}\sigl\sqrt{\Delta T}}\right)\1_{\{\Sl_{T-1}\CPlOO\text{ is max}\}}\right].
         \end{align}
                \item[c)] For all fixed $\Delta T>0$ $\varphi^9_-(0)$ is finite:
                \begin{align*}
                      \varphi^i_-(0)&\le\Exsk\left[\left|\kappai{S^i_{T-1}}-s^i\right|\right]\max_{l=1,\ldots d}\left(4\Phi\left(\frac{\sigl\sqrt{\Delta T}}{2}\right)-2\right)\\
                      &+\sum_{l=1}^d \underbrace{\Exsk\left[\left|\kappai{S^i_{T-1}}-s^i\right|\frac{2\Phi'\left(\frac{\sigl\sqrt{\Delta T}}{2}\right)}{\Sl_{T-1}\sigl\sqrt{\Delta T}}\right]}_{=:A^{i,l}}\\
                      &<\infty.
                \end{align*}
                \begin{small}Note that term $A^{i,l}$ can be further re-written. For $i\neq l$, due to independence of the assets, we get
        \begin{align*}
            A^{i,l}&=\Exsk\left[\left|\kappai{S^i_{T-1}}-s^i\right|\right]\frac{2\Phi'\left(\frac{\sigl\sqrt{\Delta T}}{2}\right)}{\sigl\sqrt{\Delta T}}\frac{1}{\ssl}\exp((\sigl)^2\Delta T)\\
            &=\left(2CP^i_{\kappai}(1)\ssi-|x|\right)\frac{2\Phi'\left(\frac{\sigl\sqrt{\Delta T}}{2}\right)}{\sigl\sqrt{\Delta T}}\frac{1}{\ssl}\exp((\sigl)^2\Delta T).
        \end{align*}
        Furthermore,
                \begin{align*}
            A^{i,i}&=\Exsk\left[\left|\kappai{\Si_{T-1}}-\ssi\right|\frac{1}{\Si_{T-1}}\right]\frac{2\Phi'\left(\frac{\sigi\sqrt{\Delta T}}{2}\right)}{\sigi\sqrt{\Delta T}}\\
            &=\Exsk\left[\left|\frac{\ssi}{\Si_{T-1}}-\kappai\right|\right]\frac{2\Phi'\left(\frac{\sigi\sqrt{\Delta T}}{2}\right)}{\sigi\sqrt{\Delta T}}\\
            &=\Exsk\left[\left|\Si_{T-1}-\kappai\exp((\sigi)^2\Delta T)\right|\right]\frac{2\Phi'\left(\frac{\sigi\sqrt{\Delta T}}{2}\right)}{\sigi\sqrt{\Delta T}}\frac{1}{\ssi}\exp((\sigi)^2\Delta T)\\
            &={\left(2CP^i_{1}\left(\kappai\exp^{(\sigi)^2\Delta T}\right)\ssi-\ssi+(|x|-\ssi)\exp^{(\sigi)^2\Delta T}\right)}
            \frac{2\Phi'\left(\frac{\sigi\sqrt{\Delta T}}{2}\right)}{\sigi\sqrt{\Delta T}}\frac{1}{\ssi}\exp((\sigi)^2\Delta T).
        \end{align*}
        \end{small}
         \end{itemize}
Note furthermore that for $z$ large enough, 
 \begin{align*}
             \varphi^i_-(z)=&\Exsk\left[\left(\left(\kappai{S^i_{T-1}}-(s^i-z)\right)_++0-\Si_{T-1}\kappai+\ssi\right)\,\sum_{l=1}^d \frac{2\Phi'(d_1^l)}{(z+\Sl_{T-1})\sigl\sqrt{\Delta T}}M_l(z,\Sl_{T-1})\right]\\
             &+2\Exsk\left[\left|\kappai{S^i_{T-1}}-s^i\right|\sum_{l=1}^d\left(2\Phi(\sigl\sqrt{\Delta T}/2)-1\right)\1_{\{\Sl_{T-1}\CPlOO\text{ is max}\}}\right]\\
            &=\Exsk\underbrace{\left[z\,\sum_{l=1}^d \frac{2\overbrace{\Phi'(d_1^l)}^{\overset{z\to\infty} {\to}0}}{(z+\Sl_{T-1})\sigl\sqrt{\Delta T}}\overbrace{M_l(z,\Sl_{T-1})}^{\le 1}\right]}_{\overset{z\to\infty}{\to}0}\\
             &+2\Exsk\left[\left|\kappai{S^i_{T-1}}-s^i\right|\sum_{l=1}^d\left(2\Phi(\sigl\sqrt{\Delta T}/2)-1\right)\1_{\{\Sl_{T-1}\CPlOO\text{ is max}\}}\right],
         \end{align*}
and thus
\begin{align*}
    \lim_{z\to \infty}\varphi^i_-(z)&< \Exsk\left[\left|\kappai{S^i_{T-1}}-s^i\right|\right]
    \cdot\underbrace{\max_{l=1,\ldots d}2\left(2\Phi\left(\frac{\sigl\sqrt{\Delta T}}{2}\right)-1\right)}_{=:F_+},\\
        \lim_{z\to \infty}\varphi^i_-(z)&> \Exsk\left[\left|\kappai{S^i_{T-1}}-s^i\right|\right]
    \cdot\underbrace{\min_{l=1,\ldots d}2\left(2\Phi\left(\frac{\sigl\sqrt{\Delta T}}{2}\right)-1\right)}_{=:F_-}.
\end{align*}
With this we get the necessary condition
\begin{align}\label{eq:Nej}
     \lim_{z\to \infty}\varphi^j_-(z)>  \lim_{z\to \infty}\varphi^i_-(z) &\Rightarrow \Exsk\left[\left|\kappa^j{S^j_{T-1}}-s^j\right|\right]>\frac{F_-}{F_+}\Exsk\left[\left|\kappai{S^i_{T-1}}-s^i\right|\right], \tag{Ne\textsuperscript{j}}
\end{align}
and the sufficient condition
\begin{align}\label{eq:Suj}
     \lim_{z\to \infty}\varphi^j_-(z)>  \lim_{z\to \infty}\varphi^i_-(z)&\Leftarrow \Exsk\left[\left|\kappa^j{S^j_{T-1}}-s^j\right|\right]>\frac{F_+}{F_-}\Exsk\left[\left|\kappai{S^i_{T-1}}-s^i\right|\right] , \tag{Su\textsuperscript{j}}
\end{align}
for $ \lim_{z\to \infty}\varphi^j_-(z)-\varphi^i_-(z)>0$.
Since the negation of conditions \eqref{eq:Suj} and \eqref{eq:Nej} correspond to the necessary condition (Ne\textsuperscript{i}) respectively the sufficient condition (Su\textsuperscript{i}) for $\lim_{z\to \infty}\varphi^j_-(z)<  \lim_{z\to \infty}\varphi^i_-(z) $, we get \eqref{eq:Suj} $\iff$\eqref{eq:Nej}. Moreover,
\begin{align*}
\eqref{eq:Suj} \implies \Exsk\left[\left|\kappa^j{S^j_{T-1}}-s^j\right|\right]>\Exsk\left[\left|\kappai{S^i_{T-1}}-s^i\right|\right]\implies \eqref{eq:Nej}.
\end{align*}
Thus (with reformulations from \Cref{re:varphiRepresentations} for $z=0$) we get that $ \lim_{z\to \infty}\varphi^j_-(z)-\varphi^i_-(z)>0$ if and only if
\begin{align*}
     \Exsk\left[\left|\kappa^j{S^j_{T-1}}-s^j\right|\right]-\Exsk\left[\left|\kappai{S^i_{T-1}}-s^i\right|\right]&>0\\
     \iff  \Exsk\left[\left(\kappa^j{S^j_{T-1}}-s^j\right)_+\right]-\Exsk\left[\left(\kappai{S^i_{T-1}}-s^i\right)_+\right]&>0.
\end{align*}
Therefore, \eqref{eq:CPcondition} is equivalent to $\lim_{z\to\infty}f(z)>0$. By items b) and c), $f(0)=\varphi_-^j(0)-\varphi_-^i(0)=C<\infty$ is equal to some finite constant $C\in\R$, and by continuity of $f$ we get that $\int_{\Rp}f(z)\,dz>0$.

\end{proof}

\begin{lemma}\label{le:CPconditionsEquivalences}
    Let $i,j\in\{1,\ldots,d\}$, and $\kappa$ as in \Cref{re:varphiRepresentations}.
    The following are equivalent
    \begin{enumerate}
    \item \begin{align*}
CP^j(s^j/\kappa^j)\kappa^j>CP^i(s^i/\kappa^i)
\kappa^i, 
\end{align*}
    \item \begin{align*}
s^jCP^j_{\kappa^j}(1)>\ssi CP^i_{\kappai}(1)
, 
\end{align*}
\end{enumerate}
 where $ CP^i_{s}(k)$ denotes the call price of an asset with starting value $s$, volatility $\sigi$, and strike price $k$ for maturity $\Delta T$.
\end{lemma}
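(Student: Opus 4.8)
The plan is to reduce the claimed equivalence to a single scaling identity, namely the positive homogeneity of degree one of the Black--Scholes call price in the pair (spot, strike). In the discounted setting used throughout the paper, the call price of an asset with starting value $s$, volatility $\sigma^i$ and strike $k$ at maturity $\Delta T$ is
\begin{equation*}
    CP^i_s(k)=s\,\Phi(d_1)-k\,\Phi(d_2),\qquad d_1=\frac{\log(s/k)+\frac{1}{2}(\sigma^i)^2\Delta T}{\sigma^i\sqrt{\Delta T}},\quad d_2=d_1-\sigma^i\sqrt{\Delta T}.
\end{equation*}
The key observation is that $d_1$ and $d_2$ depend on $s$ and $k$ only through their ratio $s/k$, so that $CP^i_{\lambda s}(\lambda k)=\lambda\,CP^i_s(k)$ for every $\lambda>0$.

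Second, I would reconcile the two notations appearing in items 1 and 2: the unparametrised price $CP^i(K)$ is the call on $S^i$ evaluated at its actual spot $s^i$, i.e. $CP^i(K)=CP^i_{s^i}(K)$. Applying homogeneity with $\lambda=s^i$ gives $CP^i(s^i/\kappa^i)=CP^i_{s^i}(s^i/\kappa^i)=s^i\,CP^i_1(1/\kappa^i)$, and applying it once more with $\lambda=\kappa^i$ yields $CP^i_{\kappa^i}(1)=\kappa^i\,CP^i_1(1/\kappa^i)$. Combining these two identities produces the single clean relation
\begin{equation*}
    CP^i(s^i/\kappa^i)\,\kappa^i=s^i\,CP^i_{\kappa^i}(1),
\end{equation*}
and, verbatim, the same identity for the index $j$.

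Finally, since these are exact identities rather than inequalities, substituting them into condition 1 turns it term-by-term into condition 2. Hence the two strict inequalities are equivalent, which closes the proof. There is no genuine analytic obstacle here; the only thing to watch is the bookkeeping between the two call-price notations $CP^i(\cdot)$ and $CP^i_s(\cdot)$, together with applying the homogeneity scaling with the correct factors (first $s^i$, then $\kappa^i$) in the right order, so that the $\Phi$-arguments are manifestly preserved under each rescaling. Everything else is a direct substitution.
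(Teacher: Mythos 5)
Your proof is correct and takes essentially the same route as the paper: both reduce the equivalence to the exact identity $\kappa^i\,CP^i(s^i/\kappa^i)=s^i\,CP^i_{\kappa^i}(1)$, which is just degree-one homogeneity of the call price in the (spot, strike) pair. The only cosmetic difference is that you read off the homogeneity from the closed-form Black--Scholes formula, whereas the paper obtains the same scaling from the expectation representation $\E[(S^i-K)_+]$ via a change of measure.
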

\begin{proof}
    To see the equivalence, note that \begin{align*}
    \kappa^i CP^i(s^i/\kappa^i)=\kappai\E\left[\left(\Si-\frac{\ssi}{\kappai}\right)_+\right]=\ssi\E\left[\left(\Si\frac{\kappai}{\ssi}-1\right)_+\right]=\ssi\E\left[\left(\tilde{\Si}-1\right)_+\right]=\CPikapO\ssi,
\end{align*}
where the penultimate equality follows from a change of measure.
\end{proof}

\begin{lemma}\label{le:whichAssetIsMax}
Let $(x,s)$ and $k$ be fixed. Then investing in the $j$\textsuperscript{th} asset is preferred over investing in the $i$\textsuperscript{th} asset, i.e.,
\begin{align*}
     V^{j}_k(x,s)>V^{i}_k(x,s),
\end{align*}
if
\begin{align}\label{eq:CPcondition}
    CP^j(s^j/\kappa^j)\kappa^j>CP^i(s^i/\kappa^i)\kappa^i,
\end{align}
where for each $i$, $\kappa^i=\frac{|x|+s^i}{s^i}$ and
\begin{align*}
CP^j(s^i/\kappa^i)\kappa^i&=(s^i+|x|) \Phi(d_1^i)-s^i\Phi(d_2^i), \\
    d_1^i&=\frac{\log(1+|x|/s^i)+\frac{1}{2}(\sigma^i)^2\Delta t_{k+1}}{\sigi\sqrt{\Delta t_{k+1}}},\\
    d_2^i &=d_1^i-\sigi\sqrt{\Delta t_{k+1}}.\\
\end{align*}
\end{lemma}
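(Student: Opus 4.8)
The plan is to reduce the comparison of the two per-asset continuation values $V^i_k$ and $V^j_k$ to a comparison of the one-step investment values $\varphi^i_-,\varphi^j_-$ from \Cref{def:pmInvestment}, and then to lift that comparison to the level of the value functions. Recall from the proof of \Cref{thm:optimalStrat} that
\[
    V^i_k(x,s)=\Esimxsk\left[\int_{\Rp}\varphi^i_-(z)\,d\muim(z)\right],
\]
where, by \Cref{le:Vrepresentation}, $\muim$ is the probability measure representing the continuation value $V_{k+1}(\cdot,(s^i,S^{i-}_{t_{k+1}}))$ and hence depends on $s^i$ and on the remaining (moved) assets $S^{i-}$, but not on $S^i$ itself. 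My first step would be the \emph{pointwise} comparison of $\varphi^j_-$ and $\varphi^i_-$: by \Cref{le:varphi_CP_condition}, the call-price inequality \eqref{eq:CPcondition} is equivalent to $\varphi^j_-(z)\ge\varphi^i_-(z)$ for every $z\ge 0$. The mechanism behind this is a single-crossing argument applied to $f:=\varphi^j_--\varphi^i_-$: one checks $\lim_{z\to\infty}f(z)=0$, that $f'$ vanishes at most once on $\Rp$, and that $f(0)$ equals, up to a positive factor, the call-price gap in \eqref{eq:CPcondition} (nonnegativity at the terminal step following from \Cref{le:medianIneq}), so that \cite[Lemma 6.5.]{delbaen2002passport} propagates the sign of $f(0)$ to all of $\Rp$.

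The hard part is that this pointwise inequality does \emph{not} immediately yield $V^j_k\ge V^i_k$: the two value functions integrate $\varphi^j_-$ and $\varphi^i_-$ against \emph{different} representing measures $\mu^{j-}\neq\mu^{i-}$, because the frozen asset and the averaged assets differ between the two investment choices, so one cannot simply integrate $\varphi^j_-\ge\varphi^i_-$ against a common measure. To overcome this, I would exploit the explicit lognormal/Black-Scholes structure to write out the Lebesgue density of the representing measure $\muim$ and absorb it, together with the outer expectation $\Esimxsk$ over $S^{i-}$, into a redefined $\varphi^i_-$; this is exactly the measure-absorbed form of $\varphi^i_-$ used in \Cref{le:varphi_CP_condition_Sufficient}. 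With the measure absorbed, $V^i_k(x,s)=\int_{\Rp}\varphi^i_-(z)\,dz$, and \Cref{le:varphi_CP_condition_Sufficient} shows that \eqref{eq:CPcondition} implies $\int_{\Rp}\bigl(\varphi^j_-(z)-\varphi^i_-(z)\bigr)\,dz\ge 0$, i.e.\ $V^j_k(x,s)\ge V^i_k(x,s)$; strictness follows from $f(0)>0$ together with continuity of $f$, which forces the integrated gap to be strictly positive.

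Finally, I would translate the abstract call-price condition into the closed form stated in the lemma. By \Cref{le:CPconditionsEquivalences}, \eqref{eq:CPcondition} is equivalent to $s^jCP^j_{\kappa^j}(1)>s^iCP^i_{\kappa^i}(1)$, and substituting the Black-Scholes call-price formula for the lognormal $S^i$ with $\kappa^i=(|x|+s^i)/s^i$ yields $CP^i(s^i/\kappa^i)\kappa^i=(s^i+|x|)\Phi(d_1^i)-s^i\Phi(d_2^i)$ with the stated $d_1^i,d_2^i$. I expect the absorption and integration of the representing measure (so that \Cref{le:varphi_CP_condition_Sufficient} applies despite $\mu^{i-}\neq\mu^{j-}$) to be the genuine obstacle; the pointwise comparison and the Black-Scholes bookkeeping are comparatively routine.
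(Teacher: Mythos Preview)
Your proposal is correct and follows essentially the same route as the paper. The paper organizes the argument as an explicit backward induction: at the terminal step $k=N-1$ the representing measure of $V_T$ is $\mu=\delta_0$ (asset-independent), so \Cref{le:varphi_CP_condition} applies directly, while at earlier steps the density of the representing measure of $V_{k+1}$ is computed explicitly, absorbed into a redefined $\varphi^i_-$, and \Cref{le:varphi_CP_condition_Sufficient} is invoked---exactly the absorption mechanism you identify as the genuine obstacle.
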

\begin{proof}
    We proceed backwards in time. 
    \begin{itemize}
    \item[$\mathbf{ V_T}$] First, for $x\in\Rp, s\in\R^d_+$, $V_T(x,s)=|x|=\int_{\Rp}\max(|x|,z) \delta_0(dz)$, and thus the measure $\mu$ from \Cref{le:Vrepresentation} representing $V_T$ is equal to a Dirac delta at $0$. In particular, it is independent of $s$.
        \item[$\mathbf{ q_T^*}$]  We thus get that 
    \begin{align*}
        V_{T-1}(x,s)&= \max_{i=1\ldots,d}\,V_{T-1}^i(x,s)\\
        &= \max_{i=1\ldots,d}\,\Esimxsk\left[\int_{\R_+}\varphi^i_-(z)\,\mu(dz)\right]\\
        &=\max_{i=1\ldots,d}\,\int_{\R_+}\varphi^i_-(z)\,\mu(dz),
    \end{align*}
    where the last equality follows since $\varphi^i$ are independent of $S^{i-}$. By \Cref{le:varphi_CP_condition},
    \begin{align*}
      \varphi^j_-(z)-\varphi^i_-(z)\ge 0, \forall z\ge 0,
\end{align*}
if and only if 
\begin{align*}
CP^j(s^j/\kappa^j)\kappa^j>CP^i(s^i/\kappa^i)
\kappa^i.   
\end{align*}
Therefore, condition \eqref{eq:CPcondition} is sufficient for $ V^{j}_k(x,s)>V^{i}_k(x,s)$.\footnote{
In fact also conversely,
\begin{align*}
    V^{j}_k(x,s)>V^{i}_k(x,s) \iff \int_{\R_+}\varphi^j_-(z)\,\mu(dz)>\int_{\R_+}\varphi^i_-(z)\,\mu(dz),
\end{align*}
Since $\mu=\delta_0$, it further implies that $\varphi^j_-(0)-\varphi^i_-(0)\ge 0$, which by the proof of \Cref{le:varphi_CP_condition} is equivalent to equation \eqref{eq:CPcondition}.} Thus, $q_{t_N}^*(x,s)$ is given by \eqref{eq:optimalStrat}.

\item[$\mathbf{ V_{T-1}}$] We first define the indicator \begin{align*}
M_i(x,s):=\left\{\begin{matrix}
    1,& \kappa^i CP^i(s^i/\kappa^i) \text{ is max, }\\
    0,& \text{ else.}
\end{matrix}\right.
\end{align*}
Note that by \Cref{le:CPconditionsEquivalences}, equivalently
 \begin{align*}
M_i(x,s):=\left\{\begin{matrix}
    1,& \CPikapO\ssi \text{ is max, }\\
    0,& \text{ else.}
\end{matrix}\right.
\end{align*}
 By the previous step, inserting $q_T^*$ we get that
\begin{align*}
        V_{T-1}(x,s)&= \sum_{i=1}^d\E_{x,s,T-1}\left[\left|x\left(\frac{S^i_{T}}{s^i}\right)-\sign(x)s^i\left(1-\left(\frac{S^i_{T}}{s^i}\right)\right)\right|\right]M_i(x,s)\\
        &= \sum_{i=1}^d\E_{x,s,T-1}\left[\left|\kappa^i{S^i_{T}}-s^i\right|\right]M_i(x,s)\\
         &= \sum_{i=1}^d\si\E_{x,\kappa^i,T-1}\left[\left|\tilde{S^i_{T}}-1\right|\right]M_i(x,s)\\
          &= \sum_{i=1}^d2\si \left(CP^i_{\kappa^i}(1)-|x|\right)M_i(x,s),
    \end{align*}
    where the penultimate equality again follows from a change of measure.
    Thus we obtain the first derivative w.r.t. $x$
    \begin{align*}
        \partial_xV_{T-1}(x,s)&=\sum_{i=1}^d(2\Phi(d_1^i)-1)\sign(x)M_i(x,s),\\
        \end{align*}
        where $\Phi$ is the standard Gaussian cdf, and $d_1^i$ as in the formulation of \Cref{thm:optimalStrat}.
    Moreover, the second (distributional) derivative is given as
     \begin{align*}
        \partial^2_xV_{T-1}(x,s)&=\sum_{i=1}^d\left(\frac{2\Phi'(d_1^i)}{(|x|+\ssi)\sigi\sqrt{\Delta T}}\sign(x)+2\delta_0(x)(2\Phi(d_1^i)-1)\right)M_i(x,s)=:\mu(x).\\
        \end{align*}
        Therefore, the measure $\mu$ from \Cref{le:Vrepresentation} representing $V_{T-1}$ depends on the values of $s$.
        With this we get the representation
        \begin{align*}
            V_{T-1}(x,s) &=\int_{\Rp}\max\{|x|,z\}\,\mu(dz)\\
            &=\sum_{i=1}^d \bigg(\int_{\Rp}\max\{|x|,z\}\frac{2\Phi'(d_1^i)}{(|x|+\ssi)\sigi\sqrt{\Delta T}}\Mi\,dz\\
            &+2|x|\left(2\Phi(\sigi\sqrt{\Delta T}/2)-1\right)\Mi\bigg)
        \end{align*}

         \item[$\mathbf{ q_{T-1}^*}$] We now want to find out which asset we should invest in at time point $k=T-2$, i.e., we want to find $q_{T-1}^{*}$ that solves
         \begin{align*}
             \max_{i=1,\ldots,d}\Exsk\left[V_{T-1}\left(x\left(\frac{S^i_{T-1}}{s^i}\right)+q_{T-1}^is^i\left(1-\left(\frac{S^i_{T-1}}{s^i}\right)\right),\left(s^{i},S_{T-1}^{i-}\right)\right)\right].
         \end{align*}
         With the representation of $V_{T-1}$ from the previous step, this yields the objective
          \begin{align*}
             &\max_{i=1,\ldots,d}\Exsk\left[ \int_{\Rp}\max\{|x\left(\frac{S^i_{T-1}}{s^i}\right)+q_{T-1}^is^i\left(1-\left(\frac{S^i_{T-1}}{s^i}\right)\right)|,z\}\,\mu(dz)\right]\\
             =&\max_{i=1,\ldots,d}\Exsk\left[ \int_{\Rp}\max\{|\kappai{S^i_{T-1}}-s^i|,z\}\,\mu(dz)\right]\\
             =&\max_{i=1,\ldots,d}\Exsk\left[ \int_{\Rp}\max\{|\kappai{S^i_{T-1}}-s^i|,z\}\,\sum_{l=1}^d \frac{2\Phi'(d_1^l)}{(z+\Sl_{T-1})\sigl\sqrt{\Delta T}}M_l(z,\Sl)\,dz\right.\\
             &\left.+2\left|\kappai{S^i_{T-1}}-s^i\right|\sum_{l=1}^d\left(2\Phi(\sigl\sqrt{\Delta T}/2)-1\right)\1_{\{\Sl_{T-1}\CPlOO\text{ is max}\}}\right].
         \end{align*}
         We apply Fubini and define
         \begin{align}\label{eq:Defvarphii}
             \varphi^i_-(z):=&\Exsk\left[\max\{|\kappai{S^i_{T-1}}-s^i|,z\}\,\sum_{l=1}^d \frac{2\Phi'(d_1^l)}{(z+\Sl_{T-1})\sigl\sqrt{\Delta T}}M_l(z,\Sl)\right]\\
             &+2\Exsk\left[\left|\kappai{S^i_{T-1}}-s^i\right|\sum_{l=1}^d\left(2\Phi(\sigl\sqrt{\Delta T}/2)-1\right)\1_{\{\Sl_{T-1}\CPlOO\text{ is max}\}}\right].
         \end{align}
         With this our objective now is to show that asset $j$ solves
                  \begin{align*}
             \max_{i=1,\ldots,d}\int_{\Rp}\varphi^i_-(z)\,dz,
         \end{align*}
         if it fulfils condition \Cref{eq:CPcondition} for all $i\neq j$. 
        By \Cref{le:varphi_CP_condition_Sufficient} this holds true. 
         \item[$\mathbf{ V_t, q_t^*}$] Analogous steps yield the result for $t=T-2,\ldots,0$.

    \end{itemize}

\end{proof}
\newpage
\section{Figures}\label{app:sec:figures}
In this section, we give further figures referenced in the experiments of \Cref{sec:experiments}.
\subsection{2D Market with Independent Assets}
As in \Cref{subsubsec:2Duncorrelated}, we show price surfaces estimated by the trained NN strategies in \Cref{fig:2DvaluesAsym} in an asymmetric BS market with $d=2$ uncorrelated, risky assets with squared volatilities $(\sigma^1)^2=0.04, (\sigma^2)^2=0.03$. We show MC estimates of {$e^{-rT}\mathbb{E}{[(X^{\pith}_T)^{+}\mid X_0^{\pith}=0, S_0=s]}$} for both PG and A2C strategies $\pith$ for a grid of initial asset values $s=(s^1,s^2)\in[0,3]^2$. The price corresponding to the trained critic $\Vph_0$ from \Cref{alg:A2C} is shown in \Cref{subfig:2DvaluesCor0A2CAsym}. (For each initial value $s$, we scale $\Vph(s,0)$ as in \Cref{le:absEquiv} to obtain an estimate $\Vph(s,0)/2$ of $V(s,0)$.) 

\begin{figure}[t!]

\makebox[\textwidth][c]{%
    \subfloat[MC estimate of {$e^{-rT}\mathbb{E}{[(X^{\pith}_T)^{+}\mid X_0^{\pith}=0, S_0=s]}$} with strategies $\pith$ obtained from PG (\Cref{alg:policyGrad}), plotted over a grid of asset values $s=(s^1, s^2)$.\label{subfig:2DvaluesCor0PGAsym}]{%
		\includegraphics[scale=0.60]{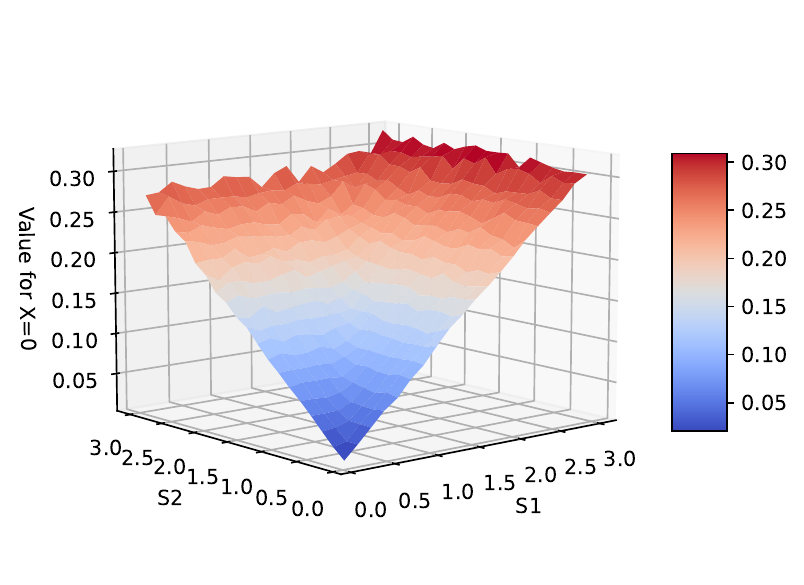}
		}
  }
  \makebox[\textwidth][c]{%
   \subfloat[MC estimate of {$e^{-rT}\mathbb{E}{[(X^{\pith}_T)^{+}\mid X_0^{\pith}=0, S_0=s]}$} with strategies $\pith$ obtained from A2C (\Cref{alg:A2C}) (left), and price surface $\Vph_0(s,0)/2$ corresponding to the trained critic $\Vph_0(s,0)$ (right), plotted over a grid of asset values $s=(s^1, s^2)$.\label{subfig:2DvaluesCor0A2CAsym}]{%
  \includegraphics[scale=0.60]{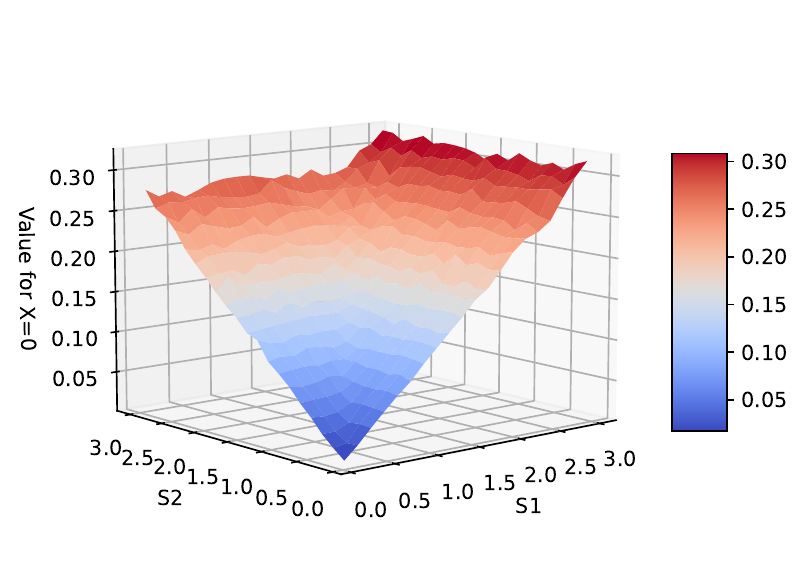}
		\includegraphics[scale=0.60]{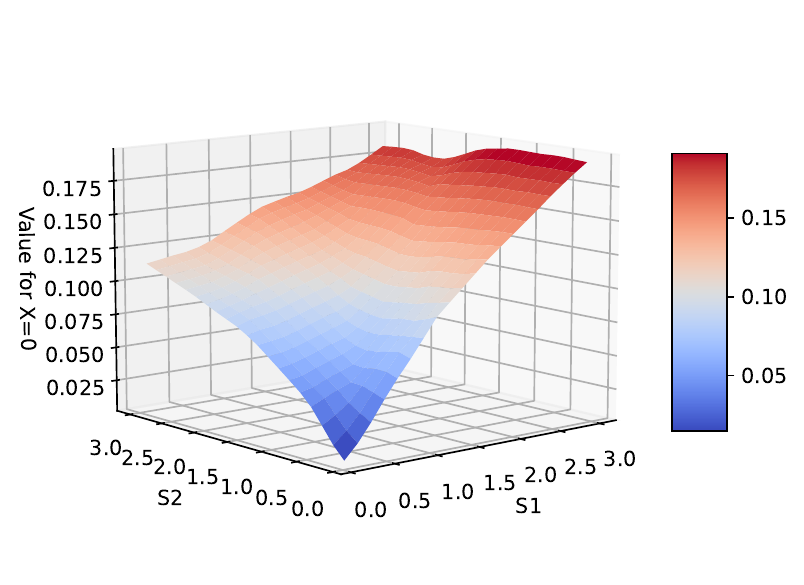}
	
    }
}   
\caption{Estimated price surfaces for a 2D passport option in an asymmetric BS market as in \Cref{sec:preliminaries} with $x_0=0, \sigma^1=0.2, \sigma^2=\sqrt{0.03}$ and $\rho=0$.}
    \label{fig:2DvaluesAsym}
\end{figure}
\clearpage
\subsection{2D Market with Negatively Correlated Assets}
 Analogously to the experiment in \Cref{subsubsec:2Dcorrelated}, we sample a test set of 1000 asset paths and compute the evolutions of portfolio values (PV) $X^{\pith}$ over time until maturity $T=32$, when actions are taken based on the trained network strategies $\pith$. We show these PV evolutions for markets with negative correlations $\rho\in\{-0.1,-0.5,-0.9\}$ in \Cref{fig:2DstratsalongPVnegcor}. In each of the sub-figures of \Cref{fig:2DstratsalongPVnegcor}, the color again signifies the probability that the respective trained network strategy $\pith$ assigns to taking each respective action $q\in\Diamond$ listed in the sub-titles. 
As in the markets with positive correlation of \Cref{subsubsec:2Dcorrelated}, all trained strategies $\pith$ choose to go short for negative portfolio values and long otherwise. Likewise, we observe the same trend that for both network strategies the trading action in asset $S^2$ increases with increasing (negative)correlation in the market. There still is a clear preference for the more volatile asset $S^1$ over $S^2$ (as in the uncorrelated setting shown in \Cref{fig:2DstratsalongPVAsym}), however we see in \Cref{fig:2DstratsalongPVnegcor} that as the correlation in the market increases (i.e., as $\rho$ decreases towards $-1$), the probability of investing in asset $S^2$ increases. 
As in the asymmetric setting with positive correlations, this effect is only slightly visible for $\pith$ trained with the A2C algorithm of \Cref{sec:A2C} (right subplots of \Cref{fig:2DstratsalongPVnegcor}). For the network strategy $\pith$ trained with the PG algorithm of \Cref{sec:PGA} (left subplots of \Cref{fig:2DstratsalongPVnegcor}), the tendency however is less strong as in the positively correlated market. For $\rho \in\{-0.5, -0.9\}$, the network strategy still tells to invest in both assets $S^1$ and $S^2$. 

Furthermore, we also show distributions of terminal values $X_T^{\pith}$ and terminal payoffs $|X_T^{\pith}|$ over 100 000 test paths achieved by different strategies in \Cref{fig:2DPVDistnegcor} for markets with negative correlations. Analogously to the previous experiments of \Cref{subsec:1Dexperiments,subsubsec:2Duncorrelated,subsubsec:2Dcorrelated}, we consider distributions resulting from the following strategies $\pith$: first, the trained PG and A2C strategies obtained from \Cref{alg:policyGrad,alg:A2C} (PG and A2C in \Cref{fig:2DPVDistnegcor}), second, a constant buy-and-hold strategy, i.e., $\pith_t(s,x)(q)=1$ for strategy $q\in\Diamond$ fixed for all $t\in\R_+$, $(s,x)\in\mathcal{X}$, and a strategy that randomly chooses from actions $q\in\Diamond$ in each step, i.e., $\pith_t(s,x)(q)=0.25$ for $q\in\Diamond$ (Const and Rand in \Cref{fig:2DPVDistnegcor}), and third, the strategy of \Cref{thm:optimalStrat} (Opt in \Cref{fig:2DPVDistnegcor}). Note once more that we keep referring to the latter strategy as ``optimal strategy'', where optimal now means that it would be optimal in the same BS market without correlation.
Across correlations, we observe that analogously to the outcome in positively correlated markets of \Cref{fig:2DPVDistposcor}, the trained network strategies $\pith$ and the ``optimal'' strategy of \Cref{thm:optimalStrat} outperform the random and constant strategies, since the estimates of expected terminal payoffs (i.e., the means in the right sub-plots of \Cref{fig:2DPVDistnegcor}) corresponding to the former are significantly larger than the ones resulting from trading with the latter strategies.
In particular, we observe in \Cref{fig:2DPVDistnegcor} that both the PG and the A2C algorithms yield strategies that perform comparably to the strategy of \Cref{thm:optimalStrat} that was optimal only in the market with $\rho=0$.

\begin{figure}[h!]
\makebox[\textwidth][c]{%
\subfloat[$\rho=-0.1$]{%
 \includegraphics[scale=.55]{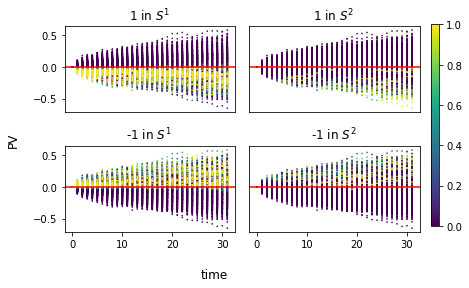}
		\includegraphics[scale=.55]{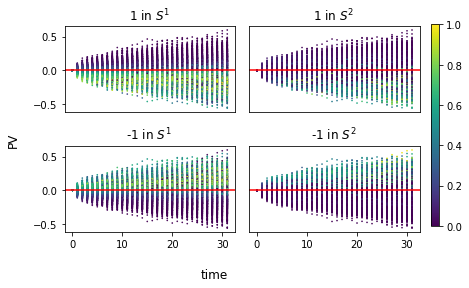}
    }
 }
 \makebox[\textwidth][c]{%
\subfloat[$\rho=-0.5$]{%
 \includegraphics[scale=.55]{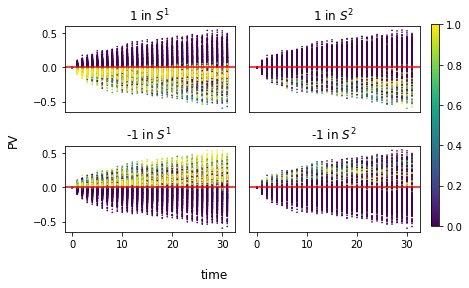}
		\includegraphics[scale=.55]{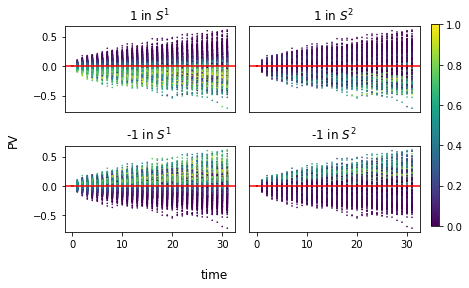}
    }
 }
\makebox[\textwidth][c]{%
\subfloat[$\rho=-0.9$]{%
 \includegraphics[scale=.55]{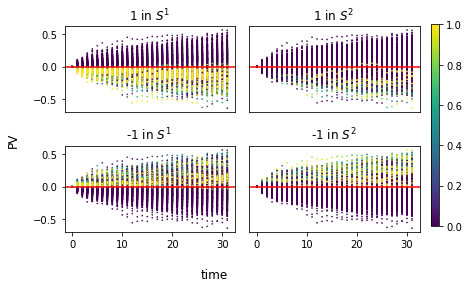}
		\includegraphics[scale=.55]{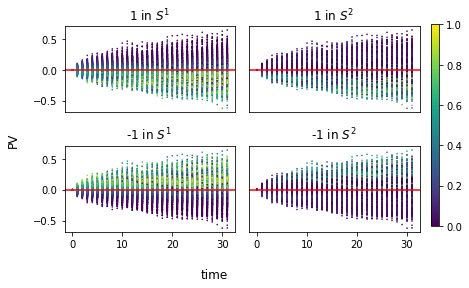}
    }}
    \caption{
   Evolution of portfolio values (PV) $X^{\pith}_t$ over time $t$ until maturity $T=32$ for actions taken according to $\pith$ trained with \Cref{alg:policyGrad} (left) and \Cref{alg:A2C} (right) respectively, over a test set of 1000 asset paths in an asymmetric BS market with $x_0=0, \sigma^1=0.2, \sigma^2=\sqrt{0.03}$ and negative correlations $\rho$. In each of the sub-plots, the probability that the respective network $\pith$ assigns to taking the action $q\in\Diamond$ indicated in the sub-plots title is shown in color.}
    \label{fig:2DstratsalongPVnegcor}
\end{figure}

\begin{figure}[h!]
\makebox[\textwidth][c]{%
\subfloat[$\rho=-0.1$]{%
 \includegraphics[scale=.45]{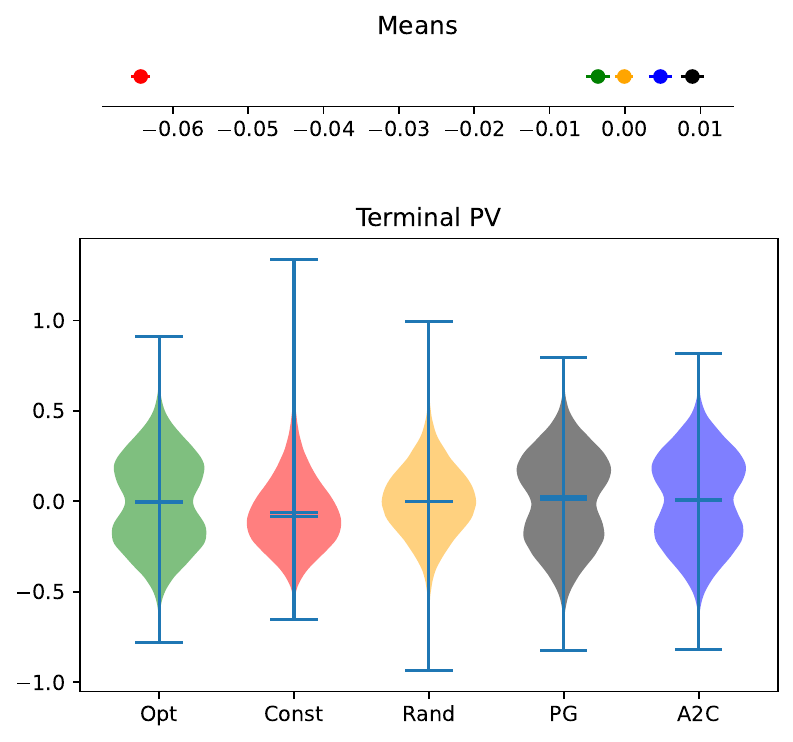}
		\includegraphics[scale=.45]{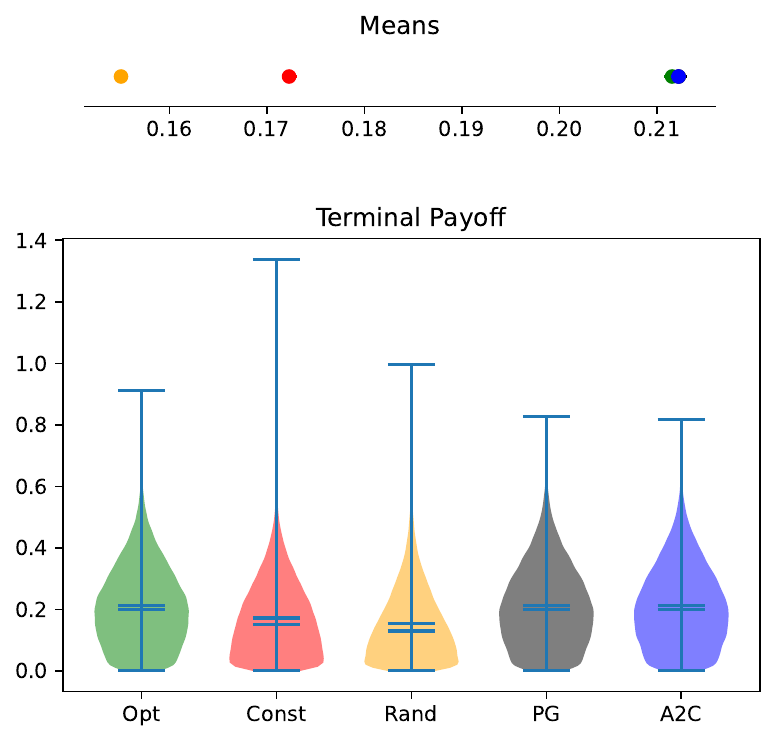}
    }
 }
 \makebox[\textwidth][c]{%
\subfloat[$\rho=-0.5$]{%
 \includegraphics[scale=.45]{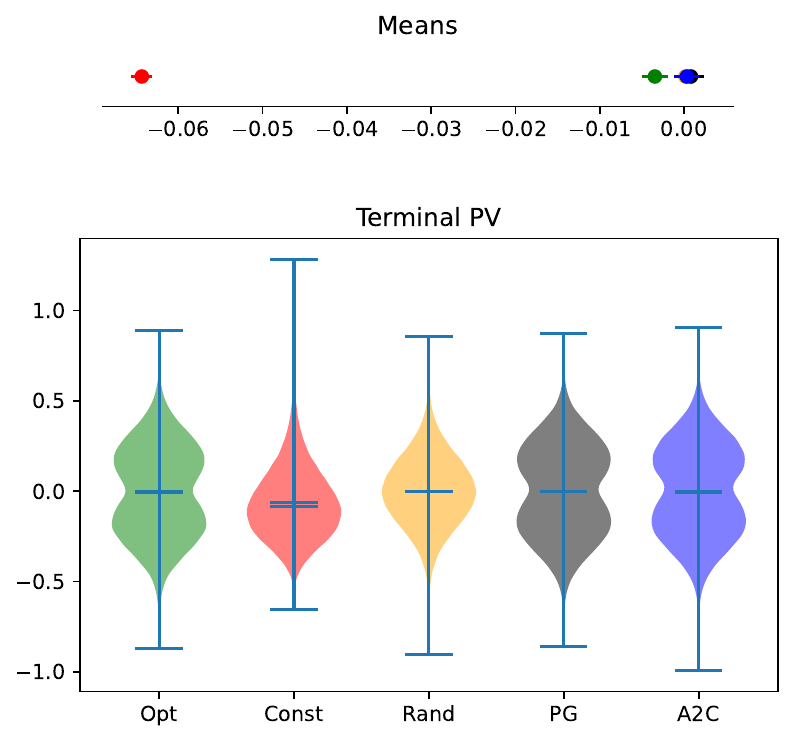}
		\includegraphics[scale=.45]{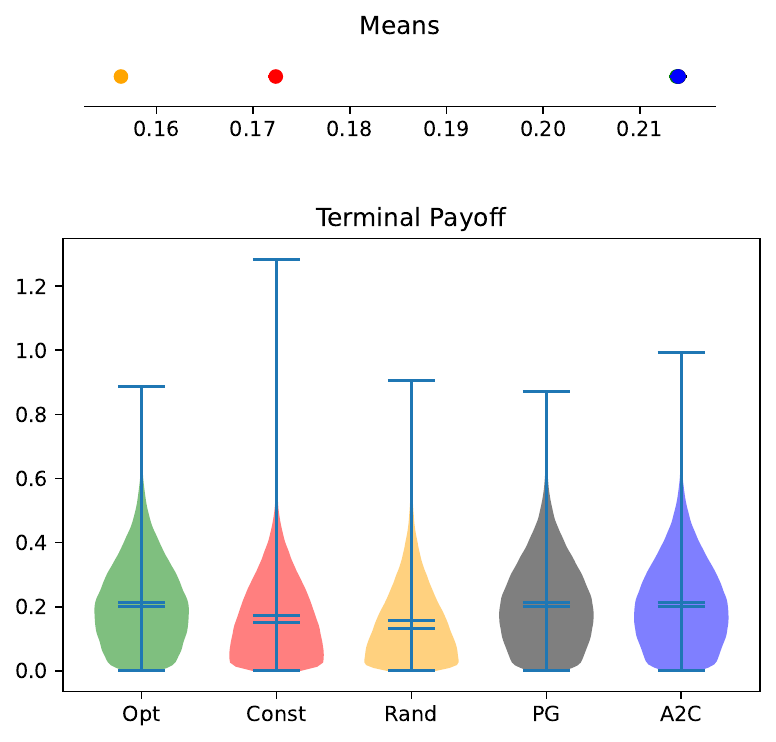}
    }
 }
\makebox[\textwidth][c]{%
\subfloat[$\rho=-0.9$]{%
 \includegraphics[scale=.45]{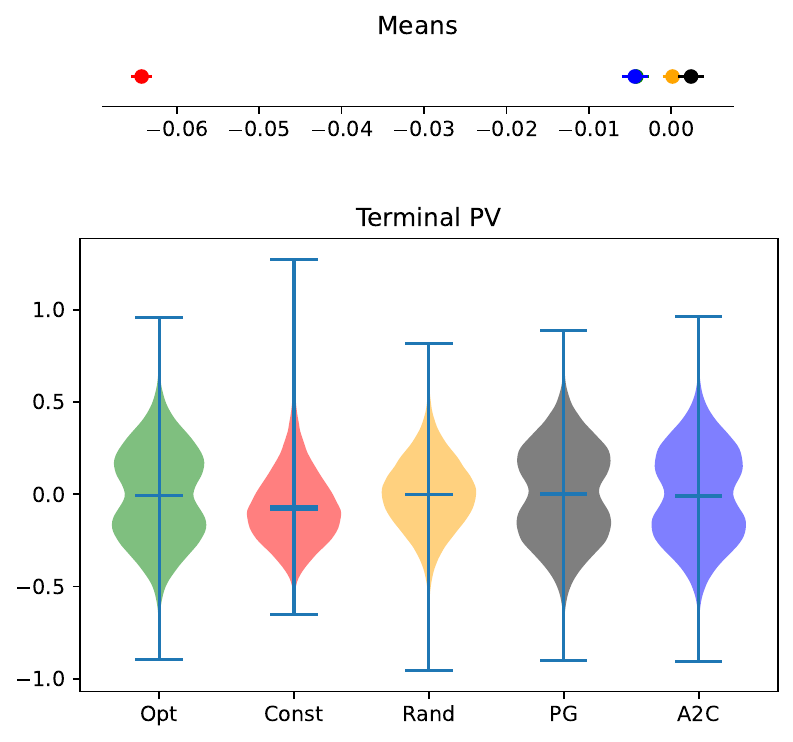}
		\includegraphics[scale=.45]{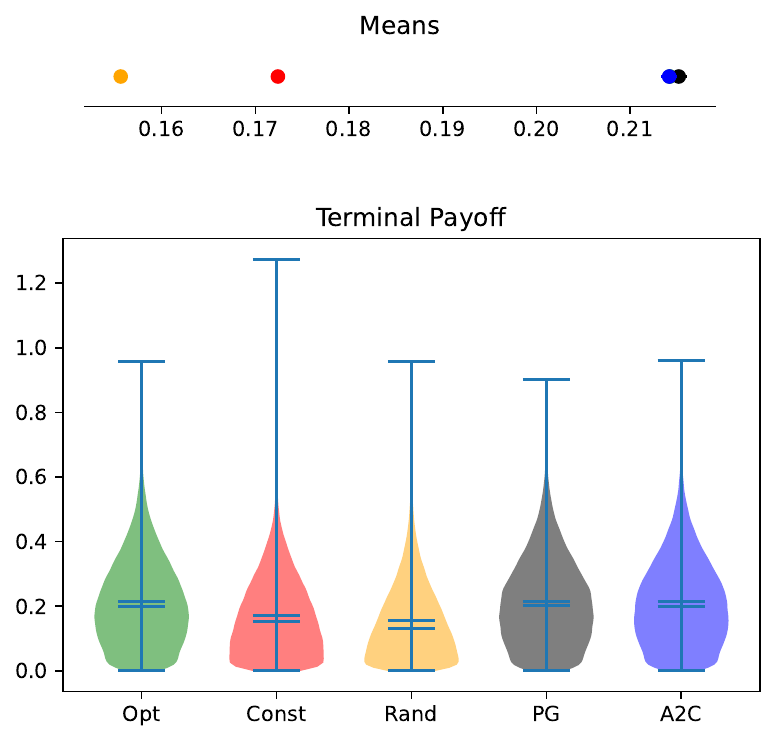}
    }} \caption{
Distributions of terminal PV $X^{\pith}_T$ and terminal payoff $|X^{\pith}_T|$ over a test set of 100 000 asset paths with $x_0=0$, for the strategy of \Cref{thm:optimalStrat} (Opt), a constant (Const), a random (Rand) and both trained NN strategies (PG and A2C) (left to right) in a BS market with $x_0=0, \sigma^1=0.02, \sigma^2=\sqrt{0.03}$ and negative correlations $\rho$. Means with a student-t $95\%$-confidence interval are shown on top.}
    \label{fig:2DPVDistnegcor}
\end{figure}

\end{document}